\documentclass[review,onefignum,onetabnum]{article}

\usepackage[linesnumbered,ruled,vlined]{algorithm2e}
\usepackage{amsmath,amssymb,amsfonts,amsthm}
\usepackage{mathtools}
\usepackage{xcolor}
\usepackage{bm}
\usepackage{braket}
\usepackage{bbm}
\usepackage{dsfont}
\usepackage{mathrsfs}
\usepackage{booktabs}
\usepackage{hyperref}
\usepackage{geometry}
\hypersetup{colorlinks=true,linkcolor=blue,citecolor=blue,urlcolor=blue}

\newtheorem{definition}{Definition}
\newtheorem{proposition}{Proposition}
\newtheorem{theorem}{Theorem}
\newtheorem{lemma}{Lemma}
\newtheorem{corollary}{Corollary}
\newtheorem{assumption}{Assumption}

\definecolor{darkblue}{rgb}{0.0, 0.0, 0.55}
\newcommand{\R}{\mathbb{R}}
\newcommand{\E}{\mathbb{E}}
\newcommand{\Pee}{\mathbb{P}}
\newcommand{\F}{\mathcal{F}}

\newcommand{\cS}{\mathcal{S}}
\newcommand{\cN}{\mathcal{N}}

\newcommand{\Fcal}{\mathcal{F}}
\newcommand{\Xbf}{\mathbf{X}}
\newcommand{\Cbf}{\mathbf{C}}
\newcommand{\Zbf}{\mathbf{Z}}
\newcommand{\ubf}{\mathbf{u}}

\newcommand{\ebf}{\mathbf{e}}

\renewcommand{\tilde}{\widetilde}

\usepackage{tikz}
\usepackage{adjustbox} 
\usetikzlibrary{shapes.geometric, arrows.meta, positioning, calc, backgrounds, fit}

\title{\textbf{\color{darkblue}Mitigating Polycentric Conflict-Trap Risk in Mali via Intergenerational Volterra Mean-Field-Type Games}}

\author{Hamidou Tembine\thanks{Department of EECS, University of  Quebec  in  Trois-Rivieres, Canada} \thanks{Learning \& Game Theory Laboratory, TIMADIE } \thanks{ Contact: tembine@ieee.org} }

\begin{document}

\maketitle

\begin{abstract}

Persistent instability in Mali and neighboring countries is not a temporary security crisis but a self-reproducing conflict system sustained by intergenerational grievances, climatic shocks, predatory resource extraction, and decentralized war economies. We develop a novel intergenerational Volterra mean-field-type game framework in which historical memory, inherited distrust, revenge incentives, and evolving institutional responses jointly shape future conflict dynamics. The analysis reveals how war entrepreneurs recycle illicit resource revenues through global financial networks, while blunt regulatory interventions can inadvertently accelerate recruitment into armed groups. We design optimal  policies that alter the system's long-run geometry, severing the financial and psychological feedback loops of violence. The resulting framework provides a predictive and actionable architecture for durable peace, civilian protection, and multi-generational structural stabilization in the Sahel and beyond.

%%
%%Chronic security crises and systemic fragility rarely emerge from a single event. They are often the result of grievances, mistrust, exclusion, and cycles of retaliation that accumulate and persist across generations. This paper develops a new framework to understand how such dynamics can unfold in Mali and how they can be prevented before they become self-sustaining. Rather than focusing on isolated communities or average population behavior, the approach examines how individuals, groups, institutions, and their social, economic, political, and environmental surroundings evolve together over time. A key innovation is the incorporation of historical memory: past violence, inherited grievances, and intergenerational transmission of distrust are treated as active forces shaping future decisions. The framework identifies the conditions under which conflict can persist across generations and designs adaptive institutional incentives that weaken cycles of revenge. The results provide a  foundation for long-term peacebuilding strategies in Mali and offer insights relevant to other regions facing chronic risks of social fragmentation and polycentric conflict. We design optimal institutional policies that alter the spectral geometry of the system, severing the financial and psychological loops of the war economy to guarantee long-term structural stabilization.

\end{abstract}

{\bf Keywords: }
Intergenerational mean-field-type games, polycentric conflict-trap risk, joint probability measures, fractional Brownian motion, Rosenblatt process, Gauss-Volterra process, regime switching, expectiles, trans-generational boundary conditions, optimal institutional policy.

{\bf AMS: }
91A15, 91A16, 93E20, 49N80, 45D05, 60H30, 60G22.

\section{Introduction}
\label{sec:intro}
The most lethal form of organized violence in contemporary Mali is not a war in the classical sense. It is a persistent, structurally embedded system of conflict that operates beyond the boundaries of traditional warfare categories, a self-reproducing apparatus of violence, predation, and capital accumulation that has, for more than a decade, turned the villages, pastoral corridors, and artisanal mining zones of Mali and the wider Sahel into a laboratory of permanent instability. A {\it civil war} is classically defined as a sustained, organized armed confrontation between a central government and one or more non-state actors, fought within the territorial boundaries of a sovereign state. This definition, while analytically useful for historical cases such as the American Civil War or the Spanish Civil War, presumes that conflict is primarily a \emph{state-versus-opposition} phenomenon with identifiable temporal and spatial boundaries. Field observations from Mali and the broader Sahel region indicate that this assumption is no longer structurally valid. The central question this paper addresses is: \emph{Why does violence persist, mutate, and reproduce itself across generations long after its original political and territorial causes have dissipated?}

\medskip
\noindent\textbf{The Malian Theatre.}
Since the crisis onset in 2012, Mali has evolved into a multi-layered and spatially diffused conflict system. Field reports from central Mali (including the Niger Inland Delta, the Mopti region, and the Ségou-Niono-Koutiala axis), western corridors (Kayes-Diéma-Nioro), and transborder zones with Burkina Faso, Mauritania and Niger indicate that violence is no longer concentrated on a single front but distributed across heterogeneous localities. Qualitative field observations consistently report three structural features:
\begin{enumerate}
\item The fragmentation of armed authority into overlapping local, jihadist, bandit, and criminal governance structures.
\item The normalization of predatory economic extraction embedded within rural livelihoods.
\item The persistence of insecurity independent of formal ceasefires or peace agreements.
\end{enumerate}
These observations suggest that the relevant object of analysis is not a single war but a \emph{self-reproducing conflict system}.

\medskip
\noindent\textbf{From Civil War to Polycentric Intergenerational Conflict Systems.}
We introduce a refined analytical construct that captures the observed reality.
\begin{definition}[Polycentric Intergenerational Conflict System]
Let $\mathcal{P} = \{P_1,\dots,P_m\}$ denote a finite set of geographically and institutionally decentralized power centers. A \emph{polycentric intergenerational conflict system} is a dynamical socio-political system in which:
\begin{enumerate}
\item violence is generated by multiple non-coordinated centers $P_i$,
\item economic incentives are endogenously coupled to instability,
\item institutional capacity is fragmented across space,
\item environmental and climatic stressors act as exogenous amplifiers,
\item and grievances, assets, and organizational knowledge are transmitted across generations.
\end{enumerate}
The system evolves such that its state variable is not merely territorial control but the joint distribution of violence, incentives, and memory across space and time.
\end{definition}
Under this definition, the Malian conflict is not a discrete war but a \emph{dynamic equilibrium over a network of interacting violent and economic subsystems}. The initial phase of the 2012 crisis was characterized by a northern separatist insurgency. However, subsequent field evidence indicates a structural displacement of violence toward central and western Mali and adjacent Sahelian regions. Let $\Omega_t \subset \mathbb{R}^2$ denote the spatial support of active violence at time $t$. Empirical observation shows that $\Omega_t$ has expanded and fragmented over time, rather than contracting after military interventions or peace accords. This shift reflects a transition from 
\[
\text{Territorial insurgency} \quad \longrightarrow \quad \text{Distributed conflict ecology}
\]
in which local incentives, cross-border networks, and economic predation dominate strategic dynamics.

% =====================================================================
% STRUCTURAL COMPARISON: CIVIL WAR VS. POLYCENTRIC INTERGENERATIONAL CONFLICT SYSTEM
% =====================================================================

\begin{figure}[htbp]
\centering
\begin{adjustbox}{width=\textwidth, center}
\begin{tikzpicture}[
    font=\sffamily\small,
    >=Stealth,
    node distance=1.4cm and 1.3cm,
    % Box Styles (Professional, Modern Color Palette)
    statebox/.style={rectangle, draw=gray!70, fill=white, rounded corners=4pt, minimum width=4cm, minimum height=1.1cm, align=center, thick},
    classicbox/.style={rectangle, draw=blue!60!black, fill=blue!5, rounded corners=4pt, minimum width=4.6cm, minimum height=1.2cm, align=center, thick},
    polybox/.style={rectangle, draw=purple!60!black, fill=purple!5, rounded corners=4pt, minimum width=4.0cm, minimum height=1.1cm, align=center, thick},
    accentbox/.style={rectangle, draw=orange!80!black, fill=orange!5, rounded corners=4pt, minimum width=4.2cm, minimum height=1.1cm, align=center, thick},
    % Distinct Panel Background Design Styles
    panelclassic/.style={rectangle, draw=blue!30, fill=blue!2!white, dashed, rounded corners=6pt, inner sep=26pt, line width=1.2pt},
    panelpoly/.style={rectangle, draw=purple!30, fill=purple!2!white, dashed, rounded corners=6pt, inner sep=26pt, line width=1.2pt},
    % Edge Styles
    classicflow/.style={->, thick, color=blue!50!black, line width=1.2pt},
    polyflow/.style={->, thick, color=purple!50!black, line width=1.2pt},
    shockflow/.style={->, dashed, color=orange!80!black, line width=1.1pt},
    vsflow/.style={<->, line width=2pt, color=red!60!black},
    % Label Styles
    lbl/.style={align=center, font=\sffamily\scriptsize\bfseries, color=black!80}
]

% =====================================================================
% LEFT COLUMN: CLASSICAL "CIVIL WAR" PARADIGM
% =====================================================================
\node[classicbox] (sovereign) {\textbf{Sovereign State Core}\\[2pt] Central Government / Military};
\node[classicbox, below=3.6cm of sovereign] (insurgent) {\textbf{Unified Insurgent Faction}\\[2pt] Coordinated Rebel Front / Counter-State};

% Internal Centralized Dynamics
\draw[classicflow] (sovereign.south west) -- node[left, lbl, xshift=-4pt] {Territorial\\Sovereignty} (insurgent.north west);
\draw[classicflow] (insurgent.north east) -- node[right, lbl, xshift=4pt] {Binary Kinetic\\Assault} (sovereign.south east);

% Left Bounding Coordinates
\coordinate (left_pad_top) at ($(sovereign.north west)+(-0.6cm,0.6cm)$);
\coordinate (left_pad_bot) at ($(insurgent.south east)+(0.6cm,-1.6cm)$);

% Left Container
\begin{scope}[on background layer]
    \node[panelclassic, fit={(left_pad_top) (left_pad_bot)}, 
          label={[anchor=north, font=\sffamily\large\bfseries\color{blue!60!black}, yshift=-6pt]north:PANEL I: Classical ``Civil War''  \ \  \ \  \ \  }] (box_classic) {};
\end{scope}

% =====================================================================
% MIDDLE SEPARATION: CENTRAL CRITIQUE BIFURCATION
% =====================================================================
\path (box_classic.east) -- ++(1.6cm,0) node[coordinate] (vs_mid) {};
%\node[draw=red!60!black, fill=red!10, circle, inner sep=6pt, line width=1.8pt, font=\sffamily\large\bfseries\color_red!80!black] (vs_node) at (vs_mid) {VS};
\node[draw=red!60!black, fill=red!10, circle, inner sep=6pt, line width=1.8pt, font=\sffamily\large\bfseries\color{red!80!black}] (vs_node) at (vs_mid) {VS};
% =====================================================================
% RIGHT COLUMN: POLYCENTRIC INTERGENERATIONAL CONFLICT SYSTEM
% =====================================================================
% Decentralized Power Centers P_i (Using structured spacing keys instead of manual compound shifts)
\node[polybox, right=4.8cm of sovereign] (p1) {\textbf{Power Center $P_1$}\\[2pt] Tactical Militia Hub};
\node[polybox, below right=1.2cm and 0.6cm of p1] (p2) {\textbf{Power Center $P_2$}\\[2pt] Jihadist Safe Haven};
\node[polybox, below left=1.4cm and 0.6cm of p2] (p3) {\textbf{Power Center $P_3$}\\[2pt] Smuggling Node};

% Decentralized, Non-Coordinated Interactions
\draw[polyflow, dashed] (p1) -- node[right, lbl, pos=0.5, xshift=2pt] {Shifting\\Alliances} (p2);
\draw[polyflow, dashed] (p2) -- node[below, lbl, pos=0.5, yshift=-4pt] {Resource\\Clashes} (p3);
\draw[polyflow, dashed] (p3) -- node[left, lbl, pos=0.5, xshift=-2pt] {Fluid\\Defections} (p1);

% Economic and Environmental Coupling Strata
\node[accentbox, below=1.8cm of p3, xshift=1.2cm] (incentives) {\textbf{Endogenous Incentives}\\[2pt] War Economy Loops};
\node[accentbox, right=1.2cm of incentives] (stressors) {\textbf{Exogenous Amplifiers}\\[2pt] Climatic / Environmental Shocks};

% Hereditary Layer
\node[draw=purple!60!black, fill=purple!10, rectangle, rounded corners=4pt, minimum width=9.8cm, minimum height=1.0cm, below=1.6cm of incentives, xshift=1.1cm, align=center, thick] (transmission) {
    \textbf{Hereditary Transmission Channel: $\mathcal{T}_{\boldsymbol{\zeta}}$}\\[2pt]
    Cross-Generational Propagation of Grievances ($\mathcal{K}_{\mathrm{rv}}^{i,g}$), Assets ($X^{\text{global}}$), and Tactical Knowledge
};

% Connecting vectors inside the Polycentric Strata
\draw[polyflow] (p3.south) -- node[left, lbl, xshift=-4pt] {Predatory Extraction\\$\Pi_{\mathrm{local}}$} (incentives.north);
\draw[shockflow] (stressors.north) -- node[right, lbl, xshift=4pt] {Resource Scarcity\\Amplification} (p2.south east);

% Fixed the syntax parsing breakdown below
\draw[polyflow, <->] (incentives.east) -- node[above, lbl, yshift=3pt] {Finances / Trafficking} (stressors.west);

% Flowing down to Joint Distribution/Memory Space
\draw[polyflow, line width=1.5pt, color=purple!60!black] (incentives.south) -- (incentives.south |- transmission.north);
\draw[polyflow, line width=1.5pt, color=purple!60!black] (p3.south west) -- (transmission.north west);

% Right Bounding Coordinates
\coordinate (right_pad_top) at ($(p1.north |- box_classic.north)$);
\coordinate (right_pad_bot) at ($(transmission.south |- box_classic.south)$);

% Right Container
\begin{scope}[on background layer]
    \node[panelpoly, fit={(p1) (p2) (p3) (transmission) (stressors) (right_pad_top) (right_pad_bot)}, 
          label={[anchor=north, font=\sffamily\large\bfseries\color{purple!60!black}, yshift=-6pt]north:PANEL II: Polycentric Intergenerational Conflict System}] (box_poly) {};
\end{scope}

% Unified Horizontal Connector Paths
\draw[vsflow] (box_classic.east) -- (vs_node.west);
\draw[vsflow] (box_poly.west) -- (vs_node.south);

\end{tikzpicture}
\end{adjustbox}
\caption{The Paradigm Shift from Binary Conflict to Polycentric Systems. Panel I demonstrates the classical, symmetric ``civil war'' abstraction where conflict is constrained to a zero-sum, state-centric binary structure. Panel II maps the actual architecture defined in Definition 1, showing the non-coordinated interactions of decentralized power centers ($P_i$), tightly coupled with endogenous economic incentives, climatic shocks, and multi-generational hereditary transmission lines.}
\label{fig:paradigm_bifurcation}
\end{figure}

The theoretical limitations of analyzing modern Sahelian fragility through traditional lenses are formalized in the structural comparison illustrated in Figure~\ref{fig:paradigm_bifurcation}. Classical political science and game-theoretic literature remain wedded to the binary architecture mapped in Panel I. This paradigm represents conflict as a cohesive, macro-centric \emph{civil war}, where a centralized state core and a single, organized insurgent front engage in a symmetric, zero-sum struggle over territorial sovereignty. While  tractable under standard differential assumptions, this binary configuration assumes complete internal cohesion, institutional stability, and stationary preferences.
As contextualized in Panel II of Figure~\ref{fig:paradigm_bifurcation}, the real-world operational space of Mali is accurately captured only by the primitives of a \emph{Polycentric Intergenerational Conflict System}. Here, violence is explicitly de-centered, emerging from multiple non-coordinated power centers $\mathcal{P} = \{P_1, P_2, P_3\}$ that continually engage in fluid coalitional defection, localized tactical optimization, and resource competition. Crucially, the system moves beyond purely political payoffs; as traced by the network edges, kinetic activity is endogenously coupled to a highly informal war economy, where resource extraction profit functions are continuously amplified by exogenous environmental and climatic shocks (such as droughts and structural soil degradation).
The defining boundary between these two paradigms is the temporal transmission layer situated at the base of Figure~\ref{fig:paradigm_bifurcation}. In a standard civil war model, the termination of a single lifecycle or a signed peace accord resets the state space. In the polycentric system, the terminal law of the population-environment configuration undergoes a non-local push-forward across generational epochs via the operator $\mathcal{T}_{\boldsymbol{\zeta}}$. 
Trauma, offshore capital accumulation, and organizational tactical knowledge act as inherited state variables that harden the singular Volterra revenge operator $\mathcal{K}_{\mathrm{rv}}^{i,g}$ for the subsequent cohort. Strategic optimization is therefore never an isolated, stationary game. The system evolves such that the fundamental strategic variable is not territorial control lines on a map, but the joint distribution of violence, incentives, and memory across space and time on the infinite-dimensional Wasserstein manifold.

\medskip
\noindent\textbf{The Conflict Economy and the War Entrepreneur.}
We formalize the economic structure underlying persistence.
Let $\mathcal{P}_2(\mathbb{R}^d)$ denote the Wasserstein space of probability measures with finite second moment. For each agent $i \in \mathcal{I}$, define a distribution-dependent expected utility functional
\[
F_i(\mu) := e_\tau\!\big[U_i^\mu(t)\big], \qquad \mu \in \mathcal{P}_2(\mathbb{R}^d),
\]
where $U_i^\mu(t)$ is induced by the state distribution $\mu$ and $e_\tau[\cdot]$ denotes the $\tau$-expectile operator.

\medskip

\textbf{Variational structure on $\mathcal{P}_2$.}
Let $\mu \in \mathcal{P}_2(\mathbb{R}^d)$ and let $\eta$ be a signed finite measure satisfying
$
\eta(\mathbb{R}^d)=0,
\
\int_{\mathbb{R}^d}\|x\|^2\,|\eta|(dx)<\infty,
$
so that $\mu_\varepsilon := \mu + \varepsilon \eta \in \mathcal{P}_2(\mathbb{R}^d)$ for sufficiently small $\varepsilon$. The functional $F_i$ is said to be Gâteaux differentiable at $\mu$ if there exists a measurable representative $\frac{\delta F_i}{\delta \mu}(\mu)(x)$ such that
\[
\lim_{\varepsilon \to 0}
\frac{F_i(\mu+\varepsilon \eta)-F_i(\mu)}{\varepsilon}
=
\int_{\mathbb{R}^d}
\frac{\delta F_i}{\delta \mu}(\mu)(x)\,\eta(dx),
\]
for all admissible perturbations $\eta$. The derivative is defined up to additive constants, i.e.
$
\frac{\delta F_i}{\delta \mu} \sim \frac{\delta F_i}{\delta \mu} + c,\quad c\in\mathbb{R},
$
reflecting invariance under mass-preserving perturbations.

\medskip

\begin{definition}[Conflict economy]

\textbf{Conflict-inducing agents.}
A conflict economy is a stochastic system
$
\mathcal{E} := \big(X_t,\mu_t,\{U_i(t)\}_{i\in\mathcal{I}}\big)_{t\in[0,T]}
$
such that there exists a non-empty subset $\mathcal{I}_c \subseteq \mathcal{I}$ with the property that, for each $i \in \mathcal{I}_c$, the functional $F_i$ is Gâteaux differentiable at $\mu_t$ and satisfies the strict directional positivity condition
$
\int_{\mathbb{R}^d}
\frac{\delta F_i}{\delta \mu}(\mu_t)(x)\,\eta(dx) \;>\; 0,
$
for every non-zero admissible perturbation $\eta$ with $\eta(\mathbb{R}^d)=0$, for which $\mu_t+\varepsilon\eta \in \mathcal{P}_2(\mathbb{R}^d)$ for sufficiently small $\varepsilon>0$, and for almost every $t \in [0,T]$.
\end{definition}
\medskip

This condition defines a class of agents whose expected upper-tail utility is strictly increasing along every non-trivial mass-preserving deformation of the state distribution in a prescribed admissible cone. Equivalently, their marginal valuation of the distributional geometry of states is everywhere aligned with directions that preserve total mass but reallocate probability across the state space. Such agents therefore do not merely respond to realizations of the state but systematically benefit from endogenous reshaping of the law $\mu_t$ itself. Conflict is thus not introduced as an exogenous shock process, but emerges as an intrinsic property of the preference geometry over probability measures: a subset of agents exhibits strictly positive directional incentives toward distributional reconfiguration, making instability a structurally rewarded direction in Wasserstein space and thereby constituting the economic mechanism underlying the {\it war entrepreneur.}

\begin{lemma}[Gâteaux derivative of the expectile functional]
Let $\mu \in \mathcal{P}_2(\mathbb{R})$ and let $m := e_\tau(\mu)$ denote the $\tau$-expectile, defined as the unique solution to
\[
\int_{\mathbb{R}} w_\tau(x-m)(x-m)\,\mu(dx)=0,
\quad
w_\tau(z)=\tau \mathbf{1}_{\{z\ge 0\}}+(1-\tau)\mathbf{1}_{\{z<0\}}.
\]
Let $\eta$ be a signed finite measure such that
$
\eta(\mathbb{R})=0,
\ 
\int_{\mathbb{R}} x^2\,|\eta|(dx)<\infty,
$
and define $\mu_\varepsilon := \mu + \varepsilon \eta$ for sufficiently small $\varepsilon$.

Then the map $\mu \mapsto e_\tau(\mu)$ is Gâteaux differentiable at $\mu$ and satisfies
\[
\left.\frac{d}{d\varepsilon} e_\tau(\mu_\varepsilon)\right|_{\varepsilon=0}
=
\frac{
\int_{\mathbb{R}} w_\tau(x-m)(x-m)\,\eta(dx)
}{
\tau \mu([m,\infty)) + (1-\tau)\mu((-\infty,m))
}.
\]

Equivalently, the functional derivative kernel is given by
\[
\frac{\delta e_\tau}{\delta \mu}(x)
=
\frac{w_\tau(x-m)(x-m)}{\tau \mu([m,\infty)) + (1-\tau)\mu((-\infty,m))},
\quad m=e_\tau(\mu).
\]
\end{lemma}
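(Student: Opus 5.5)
The plan is an implicit-function argument applied to the scalar identification function
\[
G(\varepsilon,m') := \int_{\mathbb{R}} w_\tau(x-m')(x-m')\,\mu_\varepsilon(dx)
= \int w_\tau(x-m')(x-m')\,\mu(dx) + \varepsilon \int w_\tau(x-m')(x-m')\,\eta(dx).
\]
The second-moment assumptions on $\mu$ and $|\eta|$ make every integral finite, since $|w_\tau(z)z|\le \max(\tau,1-\tau)|z|$. The integrand $\psi_{m'}(x):=w_\tau(x-m')(x-m')$ equals $-\tau(x-m')^+ + (1-\tau)(m'-x)^+$ up to sign conventions. In particular, $m'\mapsto \psi_{m'}(x)$ is piecewise linear, Lipschitz with constant $\max(\tau,1-\tau)$, and strictly decreasing. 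Two consequences follow for fixed $\varepsilon$ small enough that $\mu_\varepsilon$ is a probability measure: $G(\varepsilon,\cdot)$ is continuous and strictly decreasing, so $e_\tau(\mu_\varepsilon)=:m_\varepsilon$ is well defined and unique.

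\textbf{Step 1: continuity.} I will show $m_\varepsilon\to m$. For fixed $\delta>0$, strict monotonicity gives $G(0,m-\delta)>0>G(0,m+\delta)$. Since $|G(\varepsilon,m')-G(0,m')|\le |\varepsilon|\,C(1+|m'|)\|\eta\|_{1}$ with $\|\eta\|_{1}$ including the first moment, the same signs persist for small $|\varepsilon|$. Hence $m_\varepsilon\in(m-\delta,m+\delta)$.

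\textbf{Step 2: derivative of $G(0,\cdot)$ at $m$.} Differentiating under the integral (dominated convergence with the Lipschitz bound) gives
\[
\partial_{m'}G(0,m') = -\big[\tau\mu((m',\infty))+(1-\tau)\mu((-\infty,m'))\big]
\]
wherever the kink set $\{x=m'\}$ carries no mass. At an atom of $\mu$ at $m$, the contribution of the atom to $G$ is $\mu(\{m\})\,w_\tau(m-m')(m-m')$, which is $C^0$ but not $C^1$ in $m'$. This is the main obstacle. Left and right derivatives differ by $(2\tau-1)\mu(\{m\})$, and the stated denominator $\tau\mu([m,\infty))+(1-\tau)\mu((-\infty,m))$ corresponds to the right derivative, i.e.\ the convention $w_\tau(0)=\tau$.

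I will therefore handle this step as follows. I prove the formula under the natural assumption $\mu(\{m\})=0$ (or $\tau=1/2$), where the two one-sided expressions agree. In the atomic case I obtain only one-sided Gâteaux derivatives, and I will remark that the displayed formula is then the right-sided derivative on the appropriate sign of the numerator. The denominator $D$ is always $\ge\min(\tau,1-\tau)>0$, so no degeneracy arises.

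\textbf{Step 3: difference quotient.} Write $0=G(\varepsilon,m_\varepsilon)-G(0,m)$ and split it as
\[
0=\big[G(0,m_\varepsilon)-G(0,m)\big] + \varepsilon\int \psi_{m_\varepsilon}\,d\eta .
\]
By the mean value and differentiability of Step 2, the first bracket equals $-(D+o(1))(m_\varepsilon-m)$. By Lipschitz continuity in $m'$ together with Step 1, $\int\psi_{m_\varepsilon}d\eta\to\int\psi_m d\eta$. Dividing by $\varepsilon$ gives
\[
\frac{m_\varepsilon-m}{\varepsilon}\to \frac{\int w_\tau(x-m)(x-m)\,\eta(dx)}{D}.
\]
As a byproduct, this also shows $m_\varepsilon-m=O(\varepsilon)$.

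\textbf{Step 4: kernel representation.} The limit is linear in $\eta$ and given by integration of the displayed function against $\eta$. That function is therefore a representative of $\delta e_\tau/\delta\mu$, unique up to an additive constant because $\eta(\mathbb{R})=0$.
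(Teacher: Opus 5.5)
Your argument uses the same idea as the paper's one-line proof: differentiate the identification equation $G(\varepsilon,m')=0$ in $\varepsilon$ and divide by $-\partial_{m'}G$. The difference is that you carry it out by hand. Strict monotonicity of $m'\mapsto G(\varepsilon,m')$ gives existence and continuity of $m_\varepsilon$. After that you need differentiability of $G(0,\cdot)$ only at the single point $m$. The paper instead invokes a Banach-space implicit function theorem, which needs $C^1$ regularity in $m$, and that regularity breaks down at atoms of $\mu$.

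What your route buys is an honest proof under exactly the right hypothesis, $\mu(\{m\})=0$ or $\tau=1/2$. Your caveat about atoms is not excess caution: the lemma as stated is false there, and the paper's sketch overlooks this. Take $\mu=\delta_0$ and $\eta=\delta_1-\delta_0$, which is admissible for $\varepsilon\ge 0$. Then $m_\varepsilon=\tau\varepsilon/(\tau\varepsilon+(1-\tau)(1-\varepsilon))$, so the right derivative is $\tau/(1-\tau)$. The displayed formula gives $\tau/\tau=1$ instead.

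Two small corrections.
\begin{itemize}
\item The integrand is $\psi_{m'}(x)=\tau(x-m')^+-(1-\tau)(m'-x)^+$. The sign matters, because your strict-decrease claim holds for this expression and not for the one you wrote.
\item The stated denominator $\tau\mu([m,\infty))+(1-\tau)\mu((-\infty,m))$ is minus the \emph{left} derivative of $m'\mapsto G(0,m')$ at $m$, not the right one. It is what the naive convention $w_\tau(0)=\tau$ produces. The right derivative is $-[\tau\mu((m,\infty))+(1-\tau)\mu((-\infty,m])]$.
\end{itemize}

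Consequently, in the atomic case with $\tau\neq 1/2$, the displayed formula equals $\lim_{\varepsilon\downarrow 0}(m_\varepsilon-m)/\varepsilon$ precisely when $\int w_\tau(x-m)(x-m)\,\eta(dx)\le 0$, so that $m_\varepsilon\le m$. When the numerator is positive, the denominator must be replaced by $\tau\mu((m,\infty))+(1-\tau)\mu((-\infty,m])$, as the example above shows. Your phrase ``right-sided derivative on the appropriate sign of the numerator'' should specify this sign.
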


\begin{proof}
The result follows by differentiating the first-order optimality condition defining the expectile under the perturbation $\mu_\varepsilon$, applying the implicit function theorem in Banach spaces, and exchanging differentiation and integration under square-integrability of $\eta$. The denominator arises from the derivative with respect to the endogenous threshold $m$.
\end{proof}

\begin{corollary}[Explicit conflict condition via expectile sensitivity]
Let $\mathcal{E} := (X_t,\mu_t,\{U_i(t)\}_{i\in\mathcal{I}})_{t\in[0,T]}$ be a stochastic system in which
$
F_i(\mu) := e_\tau\!\big[U_i^\mu(t)\big]
$
is Gâteaux differentiable for each $i$.
Then for every admissible perturbation $\eta$ with $\eta(\mathbb{R})=0$, the directional derivative admits the explicit representation
\[
\int_{\mathbb{R}}
\frac{\delta F_i}{\delta \mu_t}(x)\,\eta(dx)
=
\frac{
\int_{\mathbb{R}}
w_\tau\!\big(U_i^\mu(x)-m_i(\mu_t)\big)
\big(U_i^\mu(x)-m_i(\mu_t)\big)\,
\eta(dx)
}{
\tau \mu_t([m_i,\infty)) + (1-\tau)\mu_t((-\infty,m_i))
},
\]
where $m_i(\mu_t)=e_\tau[U_i^\mu(t)]$.
\end{corollary}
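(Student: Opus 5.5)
The corollary follows from the Gâteaux-derivative lemma for the expectile, applied to the image law of the utility instead of to $\mu_t$ itself. I read $U_i^\mu(t)$ as $U_i^\mu(X_t)$ with $X_t\sim\mu_t$. Its law is then the push-forward
\[
\nu_i:=\mu_t\circ (U_i^\mu)^{-1}\in\mathcal{P}_2(\mathbb{R}),
\]
which lies in $\mathcal{P}_2$ provided $U_i^\mu$ grows at most linearly, so that $\int |U_i^\mu|^2\,d\mu_t<\infty$. This gives $F_i(\mu_t)=e_\tau(\nu_i)$. The denominator $\tau\mu_t([m_i,\infty))+\dots$ in the statement is to be read as the $\nu_i$-mass, i.e. $\mu_t(\{U_i^\mu\ge m_i\})$ and $\mu_t(\{U_i^\mu<m_i\})$.

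\textbf{Step 1: transport the perturbation.} Take $\eta$ admissible with $\eta(\mathbb{R})=0$. Then $\mu_t+\varepsilon\eta$ is pushed forward to $\nu_i+\varepsilon\,\eta\circ(U_i^\mu)^{-1}$, because push-forward is linear in the measure. The image measure $\eta\circ(U_i^\mu)^{-1}$ has total mass zero. It has finite second moment whenever $\int |U_i^\mu|^2\,d|\eta|<\infty$, which holds under the same linear growth assumption. This step treats $U_i^\mu$ as fixed, with no dependence of the map on $\mu$. If $U_i^\mu$ genuinely depends on $\mu$, an extra chain-rule term $\int \partial_\mu U_i\,d\eta$ would appear. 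The stated formula omits it, so I would state the frozen-utility convention explicitly. I expect this interpretive point to be the main obstacle.

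\textbf{Step 2: apply the lemma.} Applying the lemma to $\nu_i$ in the direction $\eta\circ(U_i^\mu)^{-1}$ gives
\[
\frac{d}{d\varepsilon}F_i(\mu_t+\varepsilon\eta)\Big|_{\varepsilon=0}=\frac{\int_{\mathbb{R}} w_\tau(y-m_i)(y-m_i)\,(\eta\circ (U_i^\mu)^{-1})(dy)}{\tau\nu_i([m_i,\infty))+(1-\tau)\nu_i((-\infty,m_i))}.
\]

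\textbf{Step 3: change variables.} The change-of-variables formula $\int g\,d(\eta\circ U^{-1})=\int g\circ U\,d\eta$ holds for signed measures. It applies here because $g(y)=w_\tau(y-m_i)(y-m_i)$ is bounded by $|y|+|m_i|$ and is therefore $|\eta\circ U^{-1}|$-integrable. Using it turns the numerator into the integral in the statement and rewrites the denominator in terms of $\mu_t$.

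\textbf{Conclusion.} Comparing with the definition of $\frac{\delta F_i}{\delta\mu}$ from the variational-structure paragraph identifies the kernel as $w_\tau(U_i^\mu(x)-m_i)(U_i^\mu(x)-m_i)$ divided by the denominator, up to the usual additive constant. This yields the displayed representation. The denominator is strictly positive: since $\tau\in(0,1)$, it is at least $\min(\tau,1-\tau)>0$, so the quotient is well defined.
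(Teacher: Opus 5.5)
Your proposal is correct and follows the paper's route. The paper gives no separate proof: the corollary is just the expectile lemma applied to the law $\nu_i$ of $U_i^\mu$, and your push-forward and change-of-variables steps make that substitution explicit, including two points the paper glosses over and which are genuinely needed for the displayed formula to hold: the denominator must be read as $\tau\mu_t(\{U_i^\mu\ge m_i\})+(1-\tau)\mu_t(\{U_i^\mu<m_i\})$, and $U_i^\mu$ must be frozen. One caveat you inherit from the lemma is that it tacitly requires $\nu_i(\{m_i\})=0$ or $\tau=\tfrac12$: otherwise $m\mapsto\int w_\tau(y-m)(y-m)\,\nu_i(dy)$ has a kink at $m_i$, the $\varepsilon$-derivative is only one-sided, and its denominator is $\tau\nu_i((m_i,\infty))+(1-\tau)\nu_i((-\infty,m_i])$ or $\tau\nu_i([m_i,\infty))+(1-\tau)\nu_i((-\infty,m_i))$ according to whether the expectile moves up or down, so this nondegeneracy should be stated as a hypothesis.
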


A conflict economy is a stochastic system
$
\mathcal{E} := (X_t,\mu_t,\{U_i(t)\}_{i\in\mathcal{I}})_{t\in[0,T]}
$
such that there exists a non-empty subset $\mathcal{I}_c \subseteq \mathcal{I}$ satisfying:
for each $i \in \mathcal{I}_c$ and for almost every $t \in [0,T]$,
\[
\frac{
\int_{\mathbb{R}}
w_\tau\!\big(U_i^\mu(x)-m_i(\mu_t)\big)
\big(U_i^\mu(x)-m_i(\mu_t)\big)\,
\eta(dx)
}{
\tau \mu_t([m_i,\infty)) + (1-\tau)\mu_t((-\infty,m_i))
}
> 0,
\]
for every non-zero admissible perturbation $\eta$ with $\eta(\mathbb{R})=0$ and $\mu_t+\varepsilon\eta \in \mathcal{P}_2(\mathbb{R})$ for sufficiently small $\varepsilon>0$.
Equivalently, agents in $\mathcal{I}_c$ strictly increase their expected upper-tail (expectile-based) utility under every admissible infinitesimal mass-preserving redistribution of the state distribution, making distributional deformation itself a strictly productive direction in the induced welfare geometry.

In Mali, field-level observations confirm that livestock markets, artisanal gold extraction corridors, protection rackets, and smuggling routes 
operate as coupled subsystems of this conflict economy. At the heart of this war economy stands the \emph{war entrepreneur}.

\begin{definition}[War Entrepreneur]
A \emph{war entrepreneur} is an economic actor who systematically exploits institutional collapse, territorial fragmentation, and insecurity to generate and accumulate wealth through the orchestration of violent and illicit economic activities across local and transnational networks.
\end{definition}

These individuals and the networks they command control the principal nodes of a transnational illicit economy: artisanal gold mines in the Kédougou-Kénieba corridor, smuggling routes across the porous borders with Mauritania, Algeria and Burkina Faso, protection rackets imposed on pastoralist communities, drug and arms trafficking networks stretching from the Gulf of Guinea to the Mediterranean, and the increasingly lucrative kidnapping‑for‑ransom industry. Their operations are financed by, and in turn sustain, a global financial infrastructure of offshore accounts, shell companies, and money‑laundering channels that convert the proceeds of chaos into clean, inheritable assets.

\medskip
\noindent\textbf{Micro‑Level Empirical Structure: Village Predation.}
Yet the war economy is not confined to transnational flows and high‑value commodities. It rests, at its base, on the systematic, intimate, and devastating predation of rural life. In the villages of central, western, southern and northern Mali (the plains of the Niger Inland Delta, the cliffs of Dogon country, the pastoral corridors of the Gourma, the areas of Niono, Nioro, Diema), armed groups descend upon farming communities with methodical regularity. Let $V_k$ denote a village‑level economic unit with production set $\mathcal{Y}_k = \{\text{grain}, \text{livestock}, \text{goods}, \text{infrastructure}\}$. Empirical reports indicate that armed incursions follow a structured extraction sequence:
\begin{enumerate}
\item Seizure of stored agricultural output (millet, sorghum, rice), the family's entire annual sustenance.
\item Appropriation of livestock assets (cattle, goats, sheep, camels, donkeys).
\item Looting of retail inventories and consumables (cooking oil, sugar, tea, batteries, soap, pharmaceuticals).
\item Destruction of productive infrastructure: granaries are burned, wells are poisoned or destroyed, fruit trees are cut down, agricultural tools and motorcycles are confiscated.
\item Forced displacement of labour and households, transforming a self‑sustaining production unit into a structurally non‑viable settlement.
\end{enumerate}

This process induces a transition $V_k^{\text{productive}} \longrightarrow V_k^{\text{depleted}} \longrightarrow V_k^{\text{displaced}}$. It is not random destruction but a structured mechanism of economic reconfiguration. The stolen grain re‑enters regional trade circuits; stolen livestock is rapidly absorbed into cross‑border pastoral and commercial flows; and the displaced populations become cheap labour for artisanal mines, recruits for armed groups, or human cargo for trafficking networks. For the war entrepreneur, a stable, prosperous village is unprofitable; a burned village yields immediate loot, drives down wages, and eliminates local resistance. Thus, local devastation is not a side effect of war:  it is a mechanism of economic reproduction within the conflict system.

\medskip
\noindent\textbf{Intergenerational Persistence Mechanism.}
Let $g \in \{0,\dots,G\}$ index generations. Define state variables $X^g = (\text{wealth}, \text{trauma}, \text{skills}, \text{network position})$. The system exhibits intergenerational transmission governed by capital inheritance (licit and illicit), transmission of grievance and memory structures, continuity of armed networks, and reproduction of survival strategies in fragile environments. Instability is not temporally localized but dynamically inherited. The central empirical paradox motivating this work is the following:

\begin{quote}
Although the Malian crisis originated as a geographically localized insurgency in the north, the majority of contemporary violence is now generated by a distributed system of armed, criminal, and communal actors operating across central and western Mali and neighboring 
Sahelian states, many of which did not participate in the original rebellion.
\end{quote}

This indicates that the system has exceeded its initial causal structure and entered a regime of self‑sustaining dynamics.

\medskip
\noindent\textbf{Limitations of Existing Models}
Standard conflict models rely on representative‑agent assumptions, conditional independence, mean‑field limits, or empirical decoupling. These simplifications eliminate precisely the features that dominate observed Sahelian dynamics: higher‑order dependence structures, endogenous coalition formation, long‑memory effects, asymmetric risk perception, institutional feedback loops, and intergenerational transmission mechanisms. The Malian theatre exemplifies these complexities. The affected population around the extended area, numbering approximately $N = 80\times 10^{6}$ when considering the cross‑border Sahelian zone, consists of heterogeneous actors who continuously adjust their radicalisation levels, asset portfolios, migration decisions, and coalitional affiliations. These choices are not independent. They are jointly shaped by gold prices, border porosity, rainfall patterns, the presence of peacekeeping forces, the effectiveness of anti‑money‑laundering regimes, and the inherited memory of past massacres. The relevant object is not an empirical distribution of independent trajectories, but the exact joint probability law of the entire population‑environment system.
To construct an analytically  framework capable of modeling risk mitigation and structural stabilization in the Malian theatre, we must deliberately break away from conventional game-theoretic paradigms. While classical Mean-Field Games (MFGs) and their various extensions have popularized the study of large-scale strategic interactions, they rely on asymptotic abstractions that {\it fundamentally misread} the operational, anthropological, and structural realities of the Sahelian crisis. 
Below, we provide critique demonstrating why standard mean-field and macro-game approximations are structurally inappropriate for the Malian context, and establish why an \emph{Intergenerational Volterra  MFTG} \cite{basar2026a,basar2026b,basar2026c,basar2026d,basar2026e,basar2026f} framework on the Wasserstein manifold is uniquely required.

\begin{figure}[htbp]
\centering
\resizebox{\textwidth}{!}{%
\begin{tikzpicture}[
    font=\sffamily\small,
    >=Stealth,
    node distance=1.6cm and 1.4cm,
    % Core Box Styles
    failbox/.style={rectangle, draw=red!70, fill=red!5, rounded corners, minimum width=4.8cm, minimum height=1.2cm, align=center, thick},
    passbox/.style={rectangle, draw=green!60!black, fill=green!5, rounded corners, minimum width=4.8cm, minimum height=1.2cm, align=center, thick},
    mali_obs/.style={ellipse, draw=orange!80!black, fill=orange!80!black, text=white, font=\sffamily\scriptsize\bfseries, inner sep=4pt, align=center},
    % Container Panel Styles
    panel_mfg/.style={rectangle, draw=red!40, fill=red!2, dashed, rounded corners, inner sep=20pt, line width=1.5pt},
    panel_mftg/.style={rectangle, draw=green!50!black, fill=green!2, dashed, rounded corners, inner sep=20pt, line width=1.5pt},
    % Connector Styles
    critiqueflow/.style={->, thick, color=red!80!black, line width=1.2pt},
    solutionflow/.style={->, thick, color=green!60!black, line width=1.5pt},
    % Text/Label Styles
    lbl/.style={align=center, font=\sffamily\scriptsize\bfseries, color=black!90},
    critlbl/.style={align=center, font=\sffamily\scriptsize\bfseries, color=red!80!black}
]

% =====================================================================
% PANEL 1: WHY MEAN-FIELD GAMES (MFG) FAIL IN MALI
% =====================================================================
\node[failbox] (continuum) {\textbf{Infinity Assumption}\\[2pt] Individual impact is infinitesimal ($1/N \to 0$)};

\node[failbox, below=of continuum] (chaos) {\textbf{Propagation of Chaos}\\[2pt] State trajectories decouple asymptotically};

\node[failbox, below=of chaos] (average) {\textbf{Averages  Interactions}\\[2pt] Agents interact purely via population average};

\node[failbox, below=of average] (markov) {\textbf{Markovian Memoryless Space}\\[2pt] No own-distribution historical paths};

% Background Layer for MFG Panel
\begin{scope}[on background layer]
    \node[panel_mfg, yshift=1cm,fit={(continuum) (chaos) (average) (markov)}, 
          label={[anchor=north, font=\sffamily\Large\bfseries\color{red!80!black}, yshift=-4pt]north: Inadequacy of Classical Mean-Field Games (MFG)}] (left_panel) {};
\end{scope}

% =====================================================================
% CENTRAL AXIS: EMPIRICAL MALIAN FIELD OBSERVATIONS (CRITICAL ANCHORS)
% =====================================================================
% Symmetrically aligned middle column using coordinates relative to MFG nodes
\node[mali_obs, right=2.2cm of continuum, xshift=0.3cm] (o1) {Field Reality:\\Non-Linear Individual Triggers};
\node[mali_obs, right=2.2cm of chaos, xshift=0.3cm] (o2) {Field Reality:\\Persistent Core Correlations};
\node[mali_obs, right=2.2cm of average, xshift=0.3cm] (o3) {Field Reality:\\Asymmetric Risk Profiles};
\node[mali_obs, right=2.2cm of markov, xshift=0.3cm] (o4) {Field Reality:\\Intergenerational War Trap};

% Red critique flows highlighting structural invalidity
\draw[critiqueflow] (continuum) -- node[above, critlbl] {Invalidated by\\Strategic Actors} (o1);
\draw[critiqueflow] (chaos) -- node[above, critlbl] {Broken by\\Locked Dependencies} (o2);
\draw[critiqueflow] (average) -- node[above, critlbl] {Discards\\Intra-Group Frictions} (o3);
\draw[critiqueflow] (markov) -- node[above, critlbl] {Ignores Long\\Memory Blood Feuds} (o4);

% =====================================================================
% PANEL 2: WHY MEAN-FIELD-TYPE GAMES (MFTG) ARE APPROPRIATE
% =====================================================================
\node[passbox, right=2.2cm of o1, xshift=0.3cm] (mftg_size) {\textbf{Size-Agnostic Setup}\\[2pt] Captures equilibrium-shifting macro\\cascades (``The Straw that Breaks the Camel's Back'')};

\node[passbox, below=of mftg_size] (mftg_joint) {\textbf{Joint Distribution Fields}\\[2pt] Retains high-dimensional couplings\\across wealth, radicalization, and gold prices};

\node[passbox, below=of mftg_joint] (mftg_risk) {\textbf{Wasserstein Risk Expectiles}\\[2pt] Employs $\alpha_i, \beta_i$ mappings to analyze\\divergent vulnerabilities to environmental shocks};

\node[passbox, below=of mftg_risk] (mftg_volterra) {\textbf{Volterra Fractional Operators}\\[2pt] Incorporates power-law memory $(t-s)^{-\gamma}$ \\and hereditary transmission lines $\mathcal{T}_{\boldsymbol{\zeta}}$};

% Background Layer for MFTG Panel
\begin{scope}[on background layer]
    \node[panel_mftg, yshift=1cm,fit={(mftg_size) (mftg_joint) (mftg_risk) (mftg_volterra)}, 
          label={[anchor=north, font=\sffamily\Large\bfseries\color{green!50!black}, yshift=-4pt]north:  Mean-Field-Type Games (MFTG)}] (right_panel) {};
\end{scope}

% =====================================================================
% RESOLUTION FLOWS (CONNECTING OBSERVATIONS TO SOLUTION BLOCKS)
% =====================================================================
\draw[solutionflow] (o1) -- node[above, lbl] {Modeled via\\Arbitrary Size Scaling} (mftg_size);
\draw[solutionflow] (o2) -- node[above, lbl] {Preserved via Full\\Wasserstein Measure $\mu_t^g$} (mftg_joint);
\draw[solutionflow] (o3) -- node[above, lbl] {Tracked via G\^ateaux\\Weight Functions $\omega_\alpha$} (mftg_risk);
\draw[solutionflow] (o4) -- node[above, lbl] {Bound via Epochs $[T_g, T_{g+1})$\\and Non-Markovian Loops} (mftg_volterra);

\end{tikzpicture}%
}
\caption{Game-Theoretic Frameworks for the Malian Conflict Space. The left panel shows where classical Mean-Field Games (MFG) break down due to structural limitations like the continuum and independence assumptions. The right panel demonstrates how Mean-Field-Type Games (MFTG) accurately capture Sahelian field realities, modeling non-linear individual cascades (tiping thresholds), full distribution manifolds, asymmetric expectile risk fields, and historical Volterra-driven path dependencies.}
\label{fig:mfg_vs_mftg_bifurcation}
\end{figure}

\begin{enumerate}
    \item 
    Standard Mean-Field Games operate under the continuum assumption where the population is modeled as a continuum, meaning the actions of any single individual have an infinitesimal,  negligible impact on the aggregate state. In Mali, this atomistic assumption is structurally invalid. The security context is highly sensitive to non-linear cascades triggered by specific individuals. A single local actor, whether a tactical field commander in the Inner Niger Delta, an influential marabout in the central region, or a well-placed war entrepreneur manipulating smuggling nodes along the Mauritanian and Algerian border, can fundamentally alter the macro-security trajectory through specific, targeted actions (e.g., orchestrating a localized massacre, collapsing a fragile community ceasefire, or shifting a village's allegiance). Because individual actions regularly trigger macro-level system shifts, no single player can be modeled as strategically negligible.

    \item 
    Refining standard MFGs into multi-population frameworks (where players are grouped into a finite number of large teams) fails to resolve this limitation. In these formulations, individual agents within a specific team (such as a separatist alliance, a traditional  hunter guild, or a bandit or jihadist network) are treated as statistically symmetric and atomized within their population block. In reality, these groups are deeply asymmetric. Factions within the same broader coalition exhibit widely divergent risk sensitivities, distinct access to illicit capital loops, and entirely different individual payoff functionals. Reducing them to an average, indistinguishable team profile discards the intra-coalition frictions, localized leadership dynamics, and opportunistic defections that define the fluid context of Sahelian allegiances.

    \item 
    Coalitional MFG models assume that while players organize into clubs or coalitions, the internal structure of each coalition is governed by symmetric players who face identical incentives and possess identical risk profiles. In the Malian conflict space, intra-coalition symmetry is a structural fiction. For instance, within a community-based self-defence militia, a wealthy livestock owner and a destitute youth do not share symmetric payoffs or symmetric risk attitudes. Their individual vulnerabilities to shocks (such as droughts or state-level livestock seizures) are highly unequal. A coalitional MFG cannot capture these internal vulnerabilities, which are precisely the levers that war entrepreneurs exploit to manipulate group behavior.

    \item 
    A foundational  pillar of classical MFGs is the \emph{propagation of chaos}, which states that as the player count grows, the joint distribution of individual states decouples into independent, conditionally identically distributed trajectories given the mean field. In Mali, this decoupling is totally absent. The system is characterized by tight, persistent correlations across all dimensions. Individual radicalization levels, local asset holdings, transnational financial flows, cross-border border porosity, and changing gold prices are locked in a dense web of mutual dependency. Modeling individual trajectories as conditionally independent completely eliminates the tail-risk sensitivities and correlated systemic shocks that drive the conflict.

    \item 
    In standard MFGs, agents interact exclusively through a set of aggregate scalar statistics, typically the mean of the population state. However, in the Malian war economy, strategic choices are sensitive to the \emph{exact joint probability law} $\mu_t^g$ of the complete state-control configuration. For example, a war entrepreneur's extraction profit function is not driven by the average wealth of the population, but by the precise set of  the population that is highly radicalized versus the proportion that is totally destitute. By replacing the simple population average with the full joint measure on the Wasserstein space, we preserve these vital higher-order statistical dependencies.

    \item 
    A defining feature of our model is that an individual's own state appears in the system coefficients and payoff functionals by means of $\alpha_i$- and $\beta_i$-expectiles. As we will see below, the G\^ateaux derivative of an expectile risk maps onto a highly non-linear weight function $\omega_\alpha(\mathbf{z}', \mu)$ that depends on the global shape of the probability distribution. Because these risk measures are fundamentally non-linear with respect to the agent's own probability measure, the model falls squarely outside the scope of classical MFGs and must be formulated within the  class of \emph{Mean-Field-Type Games} (MFTGs).

    \item 
    Unlike classical MFGs, which are strictly asymptotic and valid only in the infinite limit, MFTG are  sound across an arbitrary population size. They apply perfectly to games played by two, a few, medium, or large cohorts. In the context of Malian stabilization, this size agnosticism is essential. It allows the  framework to seamlessly scale from analyzing micro-level strategic interactions between a few localized village factions up to analyzing macro-level regional dynamics involving millions of actors across the broader Sahelian zone, all without altering the underlying game-theoretic primitives.

    \item 
    Classical mean-field limits impose asymptotic indistinguishability, forcing the strategic profiles of the population to blend into a homogeneous mass. MFTGs, by contrast, preserve individual identities and structural asymmetries. In the Malian crisis, the retention of asymmetry is non-negotiable. It allows us to explicitly model highly influential, asymmetrical actors such as a specific transnational trafficking kingpin or a centralized institutional authority, operating alongside a large, heterogeneous population of civilians and combatants, capturing the exact top-down and bottom-up feedback loops that characterize the field.

    \item 
    Standard MFTG formulations are typically stationary or continuous over a single time horizon, which fails to capture the generational lifecycle of Sahelian instability. The conflict in Mali is intrinsically historic and intergenerational. The planning horizon must be partitioned into discrete generational epochs $[T_g, T_{g+1})$, where successive generations are coupled via non-local hereditary transmission operators \eqref{eq:pushforward}. Each new generation enters the game facing a distinct, inherited sequence of broken peace accords, accumulated financial capital, and shifting political regimes ($s_t^g$), meaning the equilibrium strategy of the current generation must actively anticipate its long-term impact on the initial conditions of the next.

    \item 
    Classical mean-field models  consider the pair of representative agent and the population mean-field object to characterize the system, assuming that the future evolution of the state depends exclusively on its current instantaneous value. In practice, the Malian conflict space is driven by deep-rooted, non-Markovian memory loops of own-paths and population paths. The memory of a land dispossession during the droughts of the 1970s or a village massacre from the 2010s does not decay exponentially; it acts as a persistent, power-law-weighted force that shapes modern insurgent recruitment. By embedding a singular \emph{Volterra fractional integral operator} $(t-s)^{-\gamma}$ into the MFTG framework \eqref{eq:revenge}, we can formally incorporate each agent's inherited historical memory. This binds personal historical trajectories directly into the continuous dynamics, capturing the long-memory blood feuds that create the generational war trap.
    \item The phenomenon of `` the straw that breaks the camel's back" corresponds, in mean-field terms, to a system in which a single state
    $ x_{i,t}$, a single action $(u_{i,t},v_{i,t})$, an individual state distribution $\mathcal{L}(x_{i,t})$ and an individual action distribution $\mathcal{L}(u_{i,t},v_{i,t})$  are going to be important when population
state distribution and action distribution are concentrated near critical thresholds
of tolerance or trauma. When a single individual agent regardless of its class or type, takes a
provocative or aggressive action, this initiates a shift of the distribution $\mu_t$ measurably, potentially
crossing a tipping point where the best responses $u_{i,t}^*(t, x_{i,t}^p, \mu_t^{g, state})$  for many agents switch
from peaceful to retaliatory. This results in a cascade, updating $\mu_t$  toward increased
aggression. The mean-field-type feedback loop amplifies the single action's impact, making it the effective trigger for mass unrest or escalation, demonstrating how, even in a large system, individual
actions can shift equilibria when the system is already near critical instability. The phenomenon
can be examined using singularity formation of the mean-field knowing everyone can be actor of
the singularity formation. The formal analogue of the proverb now becomes: {\it when the collective
frustration or instability is full, even a tiny drop (one action from a one single individual agent)
makes it overflow.} This is a phenomenon widely observed in the field in Mali since 2012.  Classical  population mean-field games cannot capture it. However, MFTG can capture very well such phenomena (see Figure \ref{fig:mfg_vs_mftg_bifurcation}).
\end{enumerate}

\medskip
\noindent\textbf{Literature review on Civil Wars and alternative system of profit and power.}
The works in \cite{hsiang2011, burke2009, hsiang2013, vonuexkull2016} establishe that large-scale armed instability is dynamically accelerated by exogenous environmental and climatic stressors  and catastrophic macroeconomic shocks \cite{miguel2004}, while simultaneously functioning as a severe, long-term public health crisis that fundamentally alters regional mortality trends and datasets across generational lines \cite{ghobarah2003, murray2002, coghlan2006, spagat2009, Pettersson}.
Within the contemporary literature on security and fragile states, a profound ontological critique highlights that the classical, binary framework of ``civil war" fails to capture the true structural mechanics of highly fragmented modern conflicts. Rather than describing an organized, state-centric, zero-sum standoff over administrative control, these seminal works uniformly demonstrate that modern crises behave as decentralized, adaptive ecosystems where violence functions through alternate, localized logic. \cite{kalyvas06} deconstructs the unitary macro-political narrative of civil war, showing that it is an artificial aggregation of hyper-local, privatized feuds that are better categorized as the \emph{joint production of violence} driven by endogenous cleavages.  \cite{kaldor13} explicitly rejects the state-centric, Clausewitzian paradigm of conventional domestic warfare, proposing instead the concept of \emph{new wars} or \emph{transnational network warfare} where decentralized non-state factions strategically weaponize identity politics and sustain their operational presence through global illicit trafficking networks rather than formal domestic taxation. This focus on alternate, self-sustaining systemic logic is further advanced by  \cite{keen08, keen05}, who reframes armed crises from a tragic breakdown of peace into an \emph{alternative system of profit and power}, demonstrating that for predatory elites and war entrepreneurs, the continuous preservation of institutional collapse and conflict serves as a highly optimized, functional war economy where the perpetuation of instability yields immediate financial and structural rents. These frameworks provide strong justification for discarding traditional state-versus-rebel assumptions in favor of a polycentric, multi-layered approach that views the crisis not as a bounded domestic war, but as a complex, self-reinforcing socio-economic system.

\medskip
\noindent\textbf{Content of this paper.}
In this paper we abandon asymptotic mean‑field approximations and adopt the finite‑player \emph{Mean‑Field‑Type Game (MFTG)} paradigm, in which all interactions are mediated by the true joint distribution of the complete state‑control configuration. For each agent $i\in \mathcal{N} = \{1,\ldots ,N\}$, let $X_{i,t}^{g}\in \mathbb{R}^{d}$, $C_{i,t}^{g}\in \{1,\ldots ,K\}$, denote respectively the continuous socio‑economic state (radicalisation, local wealth, human capital) and the discrete coalition affiliation at time $t$ within generation $g$. The macro‑environment (international gold prices, cross‑border permeability, rainfall, external intervention) co‑evolves with the population through coupled stochastic differential equations. To capture the multi‑scale nature of the conflict, we introduce a complete catalogue of stochastic drivers: Brownian motions for short‑term volatility, fractional Brownian motions for long‑range dependence, Gauss‑Volterra processes, Rosenblatt processes for heavy‑tailed asymmetric shocks, Poisson random measures for sudden structural breaks, and a continuous‑time Markov chain for regime shifts (ceasefire, asymmetric war, full‑scale war, foreign intervention).
Within each generation, individual agents minimise an intertemporal cost functional that incorporates asymmetric risk measures, specifically $\alpha_{i}$- and $\beta_{i}$-expectiles, which are coherent risk measures (if  $1>\alpha_i,\beta_i\geq \frac{1}{2}$) that weight gains and losses differently. This captures the pronounced asymmetry in how conflict actors value territory lost versus territory gained. These expectiles are computed directly from the joint probability measure $\mu_{i}^{g}$, making each agent's optimal behaviour sensitive to the tail of the population‑environment distribution.
A particularly salient feature of the Malian war economy  and one that our framework explicitly models, is the existence of a class of war entrepreneurs. These actors, a subset $\mathcal{N}_E\subset \mathcal{N}$, hold a dual portfolio: local (illicit) wealth, which evolves according to the same McKean‑Vlasov‑Volterra jump‑diffusion as the rest of the population, and global (laundered) wealth, which is sheltered in offshore financial centres and evolves according to \[
dX_{i,t}^{g,\mathrm{global}} = \left[r_{\mathrm{global}}X_{i,t}^{g,\mathrm{global}} + \eta_{\mathrm{wash}}\Pi_{\mathrm{local}} - u_{i,t}^{g,\mathrm{global}}\right]dt + \sigma_{\mathrm{global}}X_{i,t}^{g,\mathrm{global}}dW_{i,t}^{g,\mathrm{global}},
\]
where $\eta_{\mathrm{wash}}$ is the laundering efficiency. This equation embodies the self‑reinforcing war loop: instability generates illicit profits, which are laundered into safe global assets, compounded, and partially  repatriated to fuel further violence, ensuring that peace is economically unattractive for the entrepreneurial class.

\emph{The macro‑environment.}  
The environment in which the Malian conflict unfolds is far more than a passive backdrop; it is an active, co‑evolving stratum that both shapes and is shaped by the behaviour of every actor.  Its constituent dimensions include, among others, the international gold price (Mali being Africa's third‑largest producer, where artisanal mining sustains entire communities and finances armed groups), the porosity of borders with Algeria, Burkina Faso, and Niger that enables the unhindered flow of fighters, weapons, and illicit goods, the yearly rainfall and its effect on agricultural output and food security, the volume and composition of foreign military and humanitarian assistance, the reach and integrity of state institutions, and the intensity of transnational financial surveillance.  A surge in the gold price renders illegal extraction more lucrative, drawing civilians into conflict zones and fuelling competition among militias.  A drought simultaneously impoverishes pastoralists, erodes trust in the central state, and creates a reservoir of desperate recruits.  The withdrawal of a foreign intervention force abruptly alters the balance of power and opens space for extremist consolidation.  Each of these environmental variables is not merely an exogenous parameter but an endogenous product of the conflict itself: violence degrades institutional capacity, discourages investment, and re‑routes economic circuits through illicit channels, which in turn deepen instability.  The model therefore treats the environment as a dynamic, multi‑dimensional state vector that co‑evolves with the population, driven by a rich set of stochastic processes capable of reproducing the long‑memory, heavy‑tailed, and regime‑switching behaviour observed in empirical data.
The complete micro-macro configuration at time $t$ is therefore
\[
\mathbf{Z}_t^g = \bigl( s_t^g, \mathbf{X}_t^g,\;\mathbf{C}_t^g,\;E_t^g \bigr)
\in \cS\times \R^{d\times N}\times\{1,\dots,K\}^N\times\R^{d_E},
\]
and the fundamental state variable of the game is the exact probability measure
\begin{equation}
\mu_t^g = \operatorname{Law}\bigl(\mathbf{Z}_t^g\bigr)
\in \mathcal{P}\bigl(\cS\times \R^{d\times N}\times\{1,\dots,K\}^N\times\R^{d_E}\bigr).
\end{equation}
By working directly with $\mu_t^g$, we preserve all higher‑order dependencies among agents, coalitions, institutions, and environmental variables, thereby avoiding the information loss inherent in empirical particle approximations and asymptotic propagation‑of‑chaos arguments.

A second crucial empirical regularity of civil wars is their remarkable \emph{intergenerational persistence} \cite{tembine2026achieving}.  Violence often survives the disappearance of the original perpetrators through the transmission of collective memory, inherited grievances, revenge norms, institutional degradation, and accumulated war capital.  The children of today's combatants inherit not only their parents' trauma and dispossession but also, in the case of war entrepreneurs, offshore bank accounts and transnational laundering networks.  To capture these mechanisms, we partition the macroscopic planning horizon into a sequence of generational epochs
\[
[T_0,T_1) \to [T_1,T_2) \to \dots \to [T_G,T_{G+1}],
\]
and couple successive generations through \emph{hereditary transmission operators} acting directly on the probability measure.  At the interface $T_{g+1}$, the terminal law $\mu_{T_{g+1}-}^g$ is mapped to the initial law of the next generation via a non‑local push‑forward
\[
\mu_{T_{g+1}}^{g+1} = \mathcal{T}_{\boldsymbol{\zeta}}\bigl[\mu_{T_{g+1}-}^g\bigr],
\]
where the trauma transmission coefficients $\zeta_i\in[0,1]$, coalition inheritance kernels, and environmental persistence matrices encode the complex interplay of cultural and economic reproduction.  The resulting game is a genuinely \emph{intergenerational} stochastic game on the space of probability measures.

A further distinctive feature of the model is its comprehensive treatment of uncertainty.  Sahelian conflict data exhibit a combination of short‑term volatility, long‑range dependence, heavy‑tailed shocks, and abrupt geopolitical transitions that cannot be captured by standard Brownian diffusions alone.  We therefore incorporate a heterogeneous collection of stochastic drivers: standard Brownian motions for idiosyncratic continuous fluctuations; two independent fractional Brownian motions with Hurst parameters $H_1,H_2\in(1/2,1)$ to model long‑memory environmental processes such as rainfall and commodity prices; a Gauss–Volterra process for flexible long‑memory modelling; a Rosenblatt process, a non‑Gaussian self‑similar process from the second Wiener chaos, to represent heavy‑tailed asymmetric shocks in illicit financial flows; compensated Poisson random measures for sudden catastrophic events (coups, sanctions, droughts); and a finite‑state Markov chain capturing macro‑political regime shifts between peacetime, low‑intensity conflict, full‑scale war, and external intervention.
Within each generation, individual agents minimise an intertemporal cost functional that incorporates \emph{asymmetric risk measures}.  We employ $\alpha_i$- and $\beta_i$-expectiles, which are coherent risk measures that weight gains and losses differently, thereby capturing the pronounced asymmetry in how conflict actors value territory lost versus territory gained.  These expectiles are computed directly from the joint probability measure $\mu_t^g$, making each agent's optimal behaviour sensitive to the tail of the population-environment distribution. The strategic interactions within and across generations give rise to a coupled forward‑backward structure on the infinite‑dimensional Wasserstein manifold.  In the forward direction, the joint measure evolves according to controlled McKean-Vlasov-Volterra jump‑diffusions with endogenous coalition switching.  In the backward direction, adjoint processes propagate future strategic sensitivities across generational boundaries through a sequence of path‑dependent \emph{adjoint equations}.  Because the fractional, Gauss-Volterra, and Rosenblatt noises enter the dynamics with coefficients independent of the agents' controls, they act as exogenous long‑memory perturbations and do not require a maximum principle beyond the standard one for jump‑diffusions with regime switching, a fact that we exploit to keep the adjoint analysis relevant. The resulting framework provides a  foundation for studying how civil wars emerge, persist, and can potentially be ended when strategic incentives, institutional structures, environmental conditions, and inherited historical memory jointly interact across multiple generations.  By working with the true joint law rather than mean‑field approximations, and by incorporating the long‑memory and non‑Gaussian features of the data, the model offers a principled tool for quantitative policy analysis in fragile states.

\medskip
\noindent\textbf{Main Contributions.}
The main contributions of this paper are the following:

\begin{enumerate}
\item  We formulate a finite‑population intergenerational Mean‑Field‑Type Game in which the state variable is the exact joint probability law of all individuals, coalition memberships, and environmental variables. The model captures the full spectrum of Sahelian conflict dynamics: continuous state adaptation, discrete coalition switching, environmental feedback, long‑memory revenge cycles, intergenerational trauma and wealth transmission, and asymmetric risk attitudes via expectiles.

\item We incorporate a complete catalogue of stochastic drivers (Brownian, fractional Brownian, Gauss‑Volterra, Rosenblatt, Poisson, and regime‑switching) enabling the model to reproduce simultaneously short‑term volatility, long‑range dependence, heavy‑tailed shocks, and abrupt structural shifts.

\item  Working directly on the measure space without invoking dynamic programming, we derive the system of path‑dependent adjoint equations that characterise the mean-field-type  equilibrium. Exact trans‑generational boundary conditions link the adjoint processes of successive generations, providing a foundation for the backward induction.

\item  We prove existence and uniqueness of the equilibrium under an explicit local spectral condition. Most importantly, we identify a sharp generational ergodic breakdown threshold: if the combined spectral radius of the grievance memory matrix and the offshore laundering loop exceeds unity, the peaceful steady state is unstable and any small perturbation triggers permanent civil war.

\item  We design an optimal institution‑level transfer policy (combining an offshore transaction tax with endogenous mediation subsidies) that forces the system's spectral radius below one, thereby breaking the war trap and guaranteeing convergence to a stable peaceful equilibrium. The design acknowledges that traditional conflict‑resolution mechanisms such as the joking kinship system, while historically a source of social cohesion, have in contemporary Sahelian practice often been instrumentalised as tools of domination and social control. Our proposed mediation subsidies are therefore structured to reinforce the genuinely conciliatory dimensions of such customs while neutralising their co‑optation by predatory elites.
\end{enumerate}

These contributions provide the first  framework for understanding and breaking the self‑reproducing conflict system in Mali. By moving from descriptive accounts of civil war toward a precise, measure‑theoretic theory of polycentric intergenerational conflict, this paper offers both a diagnostic tool for identifying when a society has entered a war trap and a prescriptive policy design for escaping it. The results are directly applicable to the Malian‑Sahelian theatre and, with appropriate calibration, to other regions afflicted by chronic, multi‑generational civil violence.

%%%
%

\medskip
\noindent\textbf{Paper Outline.}
The remainder of this paper is organized as follows.  The next section lays out the necessary preliminaries on fractional Brownian motion, Gauss‑Volterra processes, Rosenblatt processes, Poisson random measures, and regime‑switching. We then construct the full intergenerational model, including the four‑layer individual and environmental dynamics, controlled coalition switching, intergenerational transmission, the dual asset structure of war entrepreneurs, expectile risk measures, and the individual payoff functionals. After that we characterize the mean-field-type  equilibrium via trans‑generational  adjoint  equations. We then prove existence and uniqueness of the equilibrium and establishes the generational ergodic breakdown threshold (the war trap). Then we  design the optimal institutional policy that stabilizes the system. Finally we conclude with a discussion of policy implications and directions for empirical calibration.

We provide a glossary of notation used throughout.
\subsection*{Notations}
\begin{itemize}
\item $N$: total population ($80\times10^6$).
\item $G$: number of generational transitions.
\item $T_g$: start of generation $g$.
\item $d$, $d_E$: dimensions of individual continuous state and environment.
\item $K$: maximum number of coalitions.
\item $\cS = \{1,\dots,M\}$: macro‑political regime states.
\item $X_{i,t}^g$: individual $i$ continuous state (vector in $\R^d$).
\item $C_{i,t}^g$: coalition of agent $i$, values in $\{1,\dots,K\}$.
\item $E_t^g$: environment state (vector in $\R^{d_E}$).
\item $s_t^g$: regime state (Markov chain on $\cS$).
\item $\mu_t^g$: joint probability law of $(s_t^g,\mathbf{X}_t^g,\mathbf{C}_t^g,E_t^g)$.
\item $u_{i,t}^g$, $v_{i,t}^g$: continuous and discrete switching controls.
\item $\mathcal{K}_{\text{rv}}^{i,g}(t)$: Volterra revenge operator.
\item $\mathbb{M}_i(s,s)$: historical grievance operator, dependent on regime.
\item $\gamma\in(0,1)$: memory exponent.
\item $\zeta_i$: intergenerational trauma transmission coefficient.
\item $\eta_{\text{wash}}$: offshore laundering efficiency.
\item $e_{\alpha_i}(\mu)$, $e_{\beta_i}(\mu)$: $\alpha_i$- and $\beta_i$-expectiles of the distribution $\mu$.
\item $B^{H_1}, B^{\frac{1+H_2}{2}}$: independent fractional Brownian motions ($1>H_1,H_2>1/2$).
\item $G_t$: Gauss-Volterra process.
\item $R^{H_2}$: Rosenblatt process associated with $B^{\frac{1+H_2}{2}}$.
\item $\tilde N, \tilde N_E$: compensated Poisson random measures.
\item $Q^g$: generator of the Markov chain $s_t^g$.
\item $R^{i,g}, S^{i,g}, Q^{i,g}$: cost weighting matrices (may depend on $s_t^g$).
\item $\Phi_\beta^{i,g}$: asymmetric expectile-based running penalty.
\item $G^{i,g}$: terminal cost function.
\item $Y_{i,t}^g, Y_{E,t}^{i,g}$: adjoint processes.
\item $\mathbb{A}_E$: environmental persistence matrix.
\item $\mathbf{A}_{\text{policy}}$: effective system matrix after policy intervention.
\end{itemize}

\section{Preliminaries}
\label{sec:prelim}
Let $(\Omega, \F, \Pee)$ be a complete probability space equipped with a right-continuous filtration $\mathbb{F} = (\F_t)_{t \ge 0}$, where $\F_0$ contains all $\Pee$-null sets of $\F$, thereby satisfying the usual conditions. The macroscopic planning horizon across all generations is denoted by $\mathcal{T} = [T_0, T_{G+1}] \subset \R_+$. Unless stated otherwise, all stochastic processes introduced herein are defined on $(\Omega, \F, \mathbb{F}, \Pee)$, are  $(\F_t)$-progressively measurable, and possess trajectories that are $\Pee$-almost surely right-continuous with left limits (càdlàg) in their respective Polish state spaces.
 
We assume the existence of the following mutually independent random objects.

\subsection{Fractional Brownian motion}
A fractional Brownian motion $B^H = (B^H(t))_{t\ge 0}$ with Hurst parameter $H\in(0,1)$ is a centred Gaussian process with covariance
\begin{equation}
\E\bigl[B^H(s)B^H(t)\bigr] = \frac12\bigl(s^{2H} + t^{2H} - |t-s|^{2H}\bigr), \qquad s,t\ge 0.
\end{equation}
For $H = 1/2$, $B^{1/2}$ is a standard Brownian motion.  
For $H > 1/2$, the increments are positively correlated and exhibit long‑range dependence.  
In this paper we use two independent fractional Brownian motions $B^{H_1}$ and $B^{\frac{1+H_2}{2}}$ with Hurst indices $H_1, H_2 \in (1/2,1)$.  
The process $B^{\frac{1+H_2}{2}}$ is also the fBm canonically associated with the Rosenblatt process defined below \cite{Coupep2022}.

\subsection{Gauss–Volterra process}
A Gauss–Volterra process $G = (G_t)_{t\ge 0}$ is defined by the Wiener integral
\begin{equation}
G_t = \int_0^t K(t,s)\,dW_s,
\end{equation}
where $W$ is a standard Brownian motion and $K$ is a deterministic square‑integrable kernel.  
The quadratic variation of $G$ is $\langle G\rangle_t = \int_0^t K(t,s)^2\,ds$.

\subsection{Rosenblatt process}
Let $H \in (1/2,1)$.  
A Rosenblatt process $R^H = (R^H(t))_{t\ge 0}$ is a non‑Gaussian, $H$‑self‑similar process with stationary increments living in the second Wiener chaos.  
It admits the double Wiener-Itô representation
\begin{equation}
R^H(t) = C^H_R \int_{\R^2 \backslash \{Diagonal \}} \Bigl( \int_0^t h_2^H(u,y_1,y_2)\,du \Bigr) dW(y_1)dW(y_2),
\label{eq:rosenblatt}
\end{equation}
where $h_2^H=\left( \int_0^t (u-y_1)_+^{\frac{H}{2}-1} (u-y_2)_+^{\frac{H}{2}-1} du \right)$ is a  square‑integrable kernel and $C^H_R$ is a normalising constant ensuring $\E[R^H(1)^2] = 1,$  Beta: $
B(x,y) = \int_0^1 t^{x-1}(1-t)^{y-1} dt, 
$
               $C_R^H := \sqrt{\frac{2H(2H-1)}{2B(1-H, \frac{H}{2})}} $ ensuring that $\mathbb{E}[(R^H(t))^2]=1.$
                $\frac{1}{2} <H <1.$
The fractional Brownian motion $B^H$ associated with $R^H$ is obtained from the first‑order kernel $h_1^H$.  
The Rosenblatt process exhibits long memory and non‑zero skewness and kurtosis, making it an ideal model for asymmetric, heavy‑tailed fluctuations in illicit financial flows.  
For the Itô formula of Rosenblatt processes we refer to \cite{Coupep2022}.

\subsection{Poisson random measures}
Let $N(dt,d\theta)$ be a Poisson random measure on $\R_+ \times \Theta$ with intensity $\nu(d\theta)dt$, where $\Theta \subseteq \R^m$ and $\nu$ is a $\sigma$-finite Lévy measure.  
The compensated measure is $\tilde N(dt,d\theta) = N(dt,d\theta) - \nu(d\theta)dt$.  
We employ several independent copies of such measures to capture sudden macro‑economic and security shocks (sanctions, coups, resource price collapses).

\subsection{Regime switching}
The macro‑political environment is subject to discrete shifts.  
We model these by a continuous‑time Markov chain $s = (s_t)_{t\ge 0}$ on the finite state space $\cS = \{1,\dots,M\}$ with generator $Q = (q_{ij})$.  
For $i \neq j$, $q_{ij} \ge 0$ and $\sum_{j \neq i} q_{ij} = -q_{ii}$.  
The jump measure of the chain is $N^s(dt,i) = \sum_{u:\Delta s_u \neq 0} \mathbbm{1}_{\{\Delta s_u = i\}} \delta_{(u,i)}$, with compensator $q_{s_{t-},i}\,dt$.  
The chain is independent of all other driving noises.

\section{The Intergenerational Model of Polycentric Conflict-Trap Risk  in Mali}
\label{sec:model}

We now construct a  stochastic framework that captures the essential features of the Malian conflict.  
Every variable is chosen to reflect a tangible quantity observed in the field, and every equation is justified by the underlying socio‑economic logic of polycentric conflict.

\subsection{Population, generations, and state variables}
\label{sec:population}

The population affected by the conflict is estimated at $N = 80 \times 10^6$ individuals (the combined population of Mali and the cross‑border Sahelian regions involved in the crisis).  
We index them by $\cN = \{1,\dots,N\}$.

The macroscopic planning horizon is partitioned into $G+1$ discrete \emph{generational epochs}
\begin{equation}\label{eq:horizon}
[T_0,T_1) \to [T_1,T_2) \to \dots \to [T_G,T_{G+1}],
\end{equation}
where $[T_g,T_{g+1}]$ is the active lifetime of generation $g$ (could be 3 years, 5 years, or even 20 years).  
Within generation $g$, at any instant $t \in [T_g,T_{g+1}]$, each agent $i$ is described by:

\begin{itemize}
\item \textbf{True continuous state} $X_{i,t}^g \in \R^d$, a vector that aggregates:
\begin{itemize}
    \item $X_{i,t}^{g,1}$: radicalisation index, ranging from $0$ (fully integrated civilian) to $1$ (active combatant);
    \item $X_{i,t}^{g,2}$: local asset holdings (livestock, agricultural land, gold stocks, weapons, local currency);
    \item $X_{i,t}^{g,3}$: human capital and health status (proxied by food security, access to education, etc.).
\end{itemize}
The dimension $d$ is flexible; for concreteness we often take $d = 3$.
\item \textbf{Discrete coalition affiliation} $C_{i,t}^g \in \{1,\dots,K\}$, where $K\geq 1$ is the maximum number of distinct coalitions that can coexist.  
In the Malian context, a typical value is $K=50$ with the interpretation:
\begin{enumerate}
    \item State core forces and loyalist militias;
    \item Separatist coalitions;
    \item Insurgent networks ;
    \item Community‑based self‑defence militias;
    \item Transnational criminal/entrepreneurial networks (war entrepreneurs), etc.
\end{enumerate}
Coalitions can become endogenously extinct if no agent belongs to them, in which case the effective number of active coalitions drops below $K$.
\end{itemize}

The \textbf{macro‑environment} is described by a vector $E_t^g \in \R^{d_E}, d_E\geq 1$, whose components may include:
\begin{itemize}
    \item $E_t^{g,1}$: international gold price (Mali was Africa's third largest gold producer);
    \item $E_t^{g,2}$: cross‑border permeability index (porosity with  Mauritania, Algeria, Burkina Faso, Niger, etc.);
    \item $E_t^{g,3}$: aggregate rainfall and agricultural productivity (droughts are a known conflict multiplier);
    \item $E_t^{g,4}$: volume of transnational aid and peacekeeping resources deployed, etc.
\end{itemize}

The \textbf{macro‑political regime} $s_t^g \in \cS = \{1,\dots,M\}, M\geq 1$ captures the qualitative state of the peace process.  
Typical regimes might be:
\begin{enumerate}
    \item Active peace negotiations / ceasefire;
    \item Low‑intensity asymmetric warfare;
    \item Full‑scale polycentric conflict with high civilian displacement;
    \item Foreign military intervention and private military company services.
\end{enumerate}
The regime process $s_t^g$ is a continuous‑time Markov chain independent of all other noises, with generator $Q^g$ (which may depend on the generation).  
It is assumed to be observable by all agents.

The \textbf{complete micro‑macro configuration} at time $t$ is the joint vector
\begin{equation}\label{eq:Z}
\mathbf{Z}_t^g \;=\; \bigl(s_t^g, \mathbf{X}_t^g,\;\mathbf{C}_t^g,\;E_t^g\bigr) \;\in\; \cS\times \R^{d\times N} \times \{1,\dots,K\}^N \times \R^{d_E},
\end{equation}
where $\mathbf{X}_t^g = (X_{1,t}^g,\dots,X_{N,t}^g)$ and $\mathbf{C}_t^g = (C_{1,t}^g,\dots,C_{N,t}^g)$.  
The \textbf{statistical state} of the system is the true joint probability measure
\begin{equation}\label{eq:mu}
\mu_t^{g,state} \;=\; \operatorname{Law}(\mathbf{Z}_t^g) \;\in\; \mathcal{P}\bigl(\cS\times \R^{d\times N}\times\{1,\dots,K\}^N\times\R^{d_E}\bigr).
\end{equation}
All strategic interactions are channelled through $\mu_t^{g,state}$.

\subsection{Individual continuous dynamics: the four‑layer structure}
\label{sec:fourlayer}

In the Malian conflict, no actor possesses perfect information. Combatants, civilians, and entrepreneurs operate under a thick fog of war: they receive fragmented reports, observe signals corrupted by rumour, and must form beliefs about their own situation (radicalisation level, asset holdings, health) before choosing a course of action. To capture this reality, the continuous state of each agent $i\in\cN$ within generation $g$ is decomposed into four interacting layers: a \emph{true} state, a noisy \emph{measurement}, an \emph{observation} that filters the measurement, and finally a \emph{perception} (belief) upon which decisions are based. All coefficients are $\mathbb{F}$-progressively measurable, satisfy standard Lipschitz and linear growth conditions, and depend on the macro‑political regime $s_t^g$ in a deterministic fashion.

\medskip
\noindent\textbf{True state.}
The true continuous state $X_{i,t}^g\in\mathbb R^d$ evolves according to the McKean-Vlasov-Volterra jump‑diffusion
\begin{equation}
\begin{aligned}
dX_{i,t}^g &=
\Bigl[\,b_i\bigl(t,s_t^g,\mathbf X_t^g,\mathbf C_t^g,E_t^g,\mathbf u_t,\mu_t^g\bigr)
      + H^{i,g}\bigl(t,s_t^g,E_t^g,e_{\alpha_i}(\mu_t^g)\bigr)
      + \mathcal{K}_{\mathrm{rv}}^{i,g}(t) \Bigr]\,dt \\[2pt]
&\quad + \sigma_i\bigl(t,s_t^g,\mathbf X_t^g,\mathbf C_t^g,E_t^g,\mu_t^g\bigr)\,dW_{i,t}
   + \int_{\mathcal Z}\sigma^n_i\bigl(t,s_{t-}^g,\mathbf X_{t-}^g,\mathbf C_{t-}^g,E_{t-}^g,\mu_{t-}^g,z\bigr)\,\tilde N_i(dt,dz) \\[2pt]
&\quad + \sigma_i^{H_1}\bigl(t,s_t^g,\mathbf X_t^g,\mathbf C_t^g,E_t^g,\mu_t^g\bigr)\,dB^{H_1}_{i}(t)
   +  2\tilde{c} \sigma_i^{H_2}\bigl(t,s_t^g,\mathbf X_t^g,\mathbf C_t^g,E_t^g,\mu_t^g\bigr)\,dB^{\frac{1+H_2}{2}}_i(t) \\[2pt]
&\quad + \sigma_i^{gv}\bigl(t,s_t^g,\mathbf X_t^g,\mathbf C_t^g,E_t^g,\mu_t^g\bigr)\,dG_i(t)
   + \sigma_i^{R}\bigl(t,s_t^g,\mathbf X_t^g,\mathbf C_t^g,E_t^g,\mu_t^g\bigr)\,dR^{H_2}_i(t) \\[2pt]
&\quad + \sigma_{0,i}\bigl(t,s_t^g,\mathbf X_t^g,\mathbf C_t^g,E_t^g,\mu_t^g\bigr)\,dW_t^{0}
   + \int_{\mathcal Z_0}\sigma^n_{0,i}\bigl(t,s_{t-}^g,\mathbf X_{t-}^g,\mathbf C_{t-}^g,E_{t-}^g,\mu_{t-}^g,z_0\bigr)\,\tilde N^0(dt,dz_0) \\[2pt]
&\quad + \sigma_{0,i}^{H_1}\bigl(t,s_t^g,\mathbf X_t^g,\mathbf C_t^g,E_t^g,\mu_t^g\bigr)\,dB^{H_1,0}(t)
   +  2\tilde{c}\sigma_{0,i}^{H_2}\bigl(t,s_t^g,\mathbf X_t^g,\mathbf C_t^g,E_t^g,\mu_t^g\bigr)\,dB^{\frac{1+H_2}{2},0}(t) \\[2pt]
&\quad + \sigma_{0,i}^{gv}\bigl(t,s_t^g,\mathbf X_t^g,\mathbf C_t^g,E_t^g,\mu_t^g\bigr)\,dG^0(t)
   + \sigma_{0,i}^{R}\bigl(t,s_t^g,\mathbf X_t^g,\mathbf C_t^g,E_t^g,\mu_t^g\bigr)\,dR^{H_2,0}(t).
\end{aligned}
\label{eq:true_dyn}
\end{equation} 

The initial condition for the individual state vector $X_{i,t}^g$ of agent $i$ at the temporal initiation of generation $g$, denoted by the 
epoch  switch boundary $t = T_g$, is defined by the random variable:
\begin{equation}
X_{i,T_g}^g = \xi_i^g \in \mathbb{R}^d, \quad \text{where} \quad \operatorname{Law}\bigl(\boldsymbol{\xi}^g\bigr) = \nu_{T_g}^g,
\label{eq:initial_condition}
\end{equation}
and the joint population-environment initialization measure $\mu_{T_g}^{g,state}$ is  given by the non-local hereditary push-forward operator acting on the left-hand terminal limit of the previous generation:
\begin{equation}
\mu_{T_g}^g = \mathcal{T}_{\boldsymbol{\zeta}} \Bigl( \mu_{T_g-}^{g-1} \Bigr) \in \mathcal{P}_2(\mathcal{S}\times \mathcal{Z}).
\label{eq:push_forward_init}
\end{equation}
Here, $\xi_i^g$ is an $\mathcal{F}_{T_g}$-measurable, square-integrable random variable that is independent of all future increments of the standard Brownian motions ($W_{i,t}, W_t^0$), fractional Brownian motions ($B_i^{H_1}, B_i^{\frac{1+H_2}{2}}, B^{H_1,0}, B^{\frac{1+H_2}{2},0}$), Poisson jump measures ($\tilde{N}_i, \tilde{N}^0$), and rough/sub-fractional drivers ($G_i, R_i^{H_2}, G^0, R^{H_2,0}$) for $t \ge T_g$.
$\mathbf{u}_t=(u_{1,t},\dots,u_{N,t})$ denotes the profile of control actions, and $\mu_t^g$ is the joint law of $(s_t^g,\mathbf X_t^g,\mathbf C_t^g,E_t^g,\mathbf u_t)$.    The control action profile $\mathbf{u}_t = (u_{1,t}, \dots, u_{N,t})$ represents the highly dynamic, risk-sensitive allocation of localized strategic capital across the fragmented situations of central, western and northern Mali. In the context of our polycentric framework, these controls transcend abstract  optimizations; they materialize in the field as real-time tactical adjustments made by decentralized actors facing severe informational opacity and volatile environmental shocks. For a local village council or traditional  hunting guild in the Mopti region, an upward shift in their specific control parameter $u_{i,t}$ reflects an active, defensive expenditure of finite community resources such as purchasing black-market ammunition, subsidizing local intelligence monitors, or constructing fortified earthen berms to protect livestock from hit-and-run raids. For a tactical field commander within the Jama'at Nusrat al-Islam wal-Muslimin  networks or a regional war entrepreneur manipulating the artisanal gold fields of Kidal, the control vector $\mathbf{u}_t$ dictates the rate of illicit capital reinvestment into asymmetric logistical networks. This includes procuring high-displacement motorbikes via informal Hawala financial pipelines, distributing protection payouts to fragile pastoralist groups to anchor local allegiances, or paying off corrupt local border nodes to secure weapon transit corridors from Niger  or Algeria. Because these controls are endogenously coupled to the full population distribution and asymmetric tail-risk expectiles rather than simple state averages, the collective profile $\mathbf{u}_t$ maps a highly sensitive, non-linear survival topology, where a top-down regulatory distortion by the state  such as a vehicle ban or an aggressive livestock tax forces an instantaneous, protective pivot in the control trajectory of civilians, paradoxically driving their strategic investments straight into the informal extraction loops managed by predatory insurgent networks.
The drift $b_i$ captures the autonomous dynamics of radicalisation, wealth, and human capital; the coupling term $H^{i,g}$ models how the true environment and the $\alpha_i$-expectile of the joint distribution affect individual trajectories (e.g.\ a spike in the gold price draws civilians into mining areas controlled by armed groups); the operator $\mathcal{K}_{\mathrm{rv}}^{i,g}$ encodes historical vengeance.  
The first block of noise terms (with $W_{i,t}$, $\tilde N_i$, $B^{H_1}_i$, $B^{\frac{1+H_2}{2}}_i$, $G_i$, $R^{H_2}_i$) are \emph{idiosyncratic}, independent across agents. The second block (superscript $0$) contains \emph{common} noises that hit the entire population simultaneously, for instance, a regional drought or a sudden collapse of the world gold price. All Brownian motions, fractional Brownian motions, Gauss–Volterra processes, Rosenblatt processes, and compensated Poisson random measures are mutually independent. The integrals with respect to fractional and related processes are defined pathwise as Riemann-Stieltjes integrals under suitable Hölder conditions.
$H^{i,g}(t,s,E_t^g,e_{\alpha_i}(\mu_t^g))$ : \textbf{environment-population coupling}.  This term captures how the macro‑environment and the aggregate distribution of all agents influence individual trajectories.  The expectile $e_{\alpha_i}(\mu_t^g)$ (defined in \S\ref{sec:expectile}) is a tail risk measure of the joint distribution; a high ${\alpha_i}$-expectile signals that extreme events (massacres, famine) are likely.  The function $H^{i,g}$ models phenomena such as: a rise in gold price ($E_t^{g,1}$) makes artisanal mining more profitable, drawing civilians towards conflict zones; a high expectile indicates severe instability, which may deter individuals from cooperating with the state.

$\mathcal{K}_{\text{rv}}^{i,g}(t)$ : \textbf{Volterra revenge operator} (see Definition~\ref{def:volterra} below).  This non‑Markovian drift encodes the long memory of historical grievances.  Past generations' traumas (massacres, land dispossessions, broken peace agreements) affect the current generation through a singular fractional kernel $(t-s)^{-\gamma}$ with $\gamma\in(0,1)$.  The smaller $\gamma$, the slower the memory fades, reflecting the deep‑rooted nature of Sahelian blood feuds. 

\begin{definition}[Volterra revenge operator] \label{def:volterra}
For each agent $i\in\cN$ and generation $g$, the historical revenge drift is given by the singular fractional integral
\begin{align}
\mathcal{K}_{\mathrm{rv}}^{i,g}(t) &=
\sum_{j=0}^{g-1} \int_{T_j}^{T_{j+1}} (t-s)^{-\gamma}\,
\mathbb{M}_i(s,s_s^j)\Bigl(\int \mathbf{z}\,d\mu_s^j(\mathbf{z})\Bigr)\,ds
\;+\; \int_{T_g}^{t} (t-s)^{-\gamma}\,
\mathbb{M}_i(s,s_s^g)\Bigl(\int \mathbf{z}\,d\mu_s^g(\mathbf{z})\Bigr)\,ds,
\label{eq:revenge}
\end{align}
where $\gamma\in(0,1)$ is the memory exponent and $\mathbb{M}_i(s,\mathfrak{s})$ is a bounded linear operator depending on the regime $\mathfrak{s}$ at time $s$.  
The inner integral $\int \mathbf{z}\,d\mu_s^\star(\mathbf{z})$ is the expectation of the joint state vector under the law of generation $\star$ at time $s$.  
Thus $\mathcal{K}_{\mathrm{rv}}^{i,g}$ aggregates the mean state of all past generations, weighted by the fractional kernel and the regime‑dependent grievance matrix.
\end{definition}
The operator $\mathcal{K}_{\mathrm{rv}}^{i,g}(t)$ is the embodiment of the \emph{ancestral blood feud}.  
In Mali, the memory of a massacre committed in a village  in 2012-2026, or the dispossession of  pastoralists during the droughts of the 1970s, does not fade exponentially with time.  
Instead, it decays slowly, like a power law $(t-s)^{-\gamma}$, remaining a potent mobilising force decades later.  
The kernel integrates over \emph{all} preceding generations, from the pre‑colonial empires through  colonisation, the rebellions of the 1960s and 1990s, up to the present day.  
The grievance matrix $\mathbb{M}_i(s,\mathfrak{s})$ specifies \emph{how} a particular historical trauma affects agent $i$.  
For a  farmer, $\mathbb{M}_i$ may amplify the memory of raiders burning granaries; for a  separatist, it may magnify the recollection of broken peace agreements.  
The regime $\mathfrak{s}$ modulates this effect: during active peace negotiations ($\mathfrak{s}=1$), the matrix may dampen the transmitted trauma as reconciliation efforts take hold; during full‑scale war ($\mathfrak{s}=3$), it may amplify grievances as each side recalls past betrayals.  
The term $\int \mathbf{z}\,d\mu_s^\star(\mathbf{z})$ is the mean state of the entire population-environment system at time $s$, meaning that the revenge impulse depends not only on individual memory but on the \emph{collective} trauma of the whole society.  
Thus, a community that was collectively traumatised will exert a stronger intergenerational pull on its members than one in which the trauma was isolated.  
The smaller the memory exponent $\gamma$, the slower the healing and the deeper the persistence of the conflict trap.

\medskip
\noindent\textbf{Measurement.}
Agent $i$ does not know $X_{i,t}^g$ exactly; she only receives a noisy measurement $X_{i,t}^m\in\mathbb R^d$, corrupted by the chaos of the field:
\begin{equation}
\begin{aligned}
dX_{i,t}^m &=
b^{m,p}\bigl(t,s_t^g,X_{i,t}^g,\mu_t^{m}\bigr)\,dt \\
&\quad +  \sigma^{w,m}\bigl(t,s_t^g,X_{i,t}^g,\mu_t^{m}\bigr)\,dW_{i,t}^m
   + \int_{\mathcal Z_m}\sigma^{n,m}\bigl(t,s_{t-}^g,X_{i,t-}^g,E_{t-}^g,\mu_{t-}^{m},z_m\bigr)\,\tilde N_i^m(dt,dz_m) \\
&\quad +  \sigma^{fbm,H_1,m}\bigl(t,s_t^g,X_{i,t}^g,\mu_t^{m}\bigr)\,dB^{H_1,m}_i(t)
   + 2\tilde{c}\sigma^{fbm,H_2,m}\bigl(t,s_t^g,X_{i,t}^g,\mu_t^{m}\bigr)\,dB^{\frac{1+H_2}{2},m}_i(t) \\
&\quad + \sigma^{gv,m}\bigl(t,s_t^g,X_{i,t}^g,\mu_t^{m}\bigr)\,dG_i^m(t)
   + \sigma^{R,m}\bigl(t,s_t^g,X_{i,t}^g,\mu_t^{m}\bigr)\,dR^{H_2,m}_i(t),
\end{aligned}
\label{eq:measure_dyn}
\end{equation}
where $\mu_t^{m}=\operatorname{Law}(X_{i,t}^m,s_t^g,\mathbf X_t^g,\mathbf C_t^g,E_t^g,\mathbf u_t)$.  
In Mali, this measurement corresponds to the fragmentary intelligence received by a fighter, a rumour of an approaching patrol, a report of a looted village, or an estimate of the gold price heard on a smuggled radio. The noises are agent‑specific and independent of the true‑state noises.

\medskip
\noindent\textbf{Observation.}
The agent processes the raw measurement into an observation $X_{i,t}^o$, filtering out some noise and combining it with prior knowledge:
\begin{equation}
\begin{aligned}
dX_{i,t}^o &=
 b^{o,p}\bigl(t,s_t^g,X_{i,t}^m,\mu_t^{o}\bigr)\,dt \\
&\quad +  \sigma^{w,p}\bigl(t,s_t^g,X_{i,t}^m,\mu_t^{o}\bigr)\,dW_{i,t}^o
   + \int_{\mathcal Z_o} \sigma^{n,o}\bigl(t,s_{t-}^g,X_{i,t-}^m,\mu_{t-}^{o},z_o\bigr)\,\tilde N_i^o(dt,dz_o) \\
&\quad + \sigma^{fbm,H_1,o}\bigl(t,s_t^g,X_{i,t}^m,\mu_t^{o}\bigr)\,dB^{H_1,o}_i(t)
   +  2\tilde{c}\sigma^{fbm,H_2,o}\bigl(t,s_t^g,X_{i,t}^m,\mu_t^{o}\bigr)\,dB^{\frac{1+H_2}{2},o}_i(t) \\
&\quad +\sigma^{gv,o}\bigl(t,s_t^g,X_{i,t}^m,\mu_t^{o}\bigr)\,dG_i^o(t)
   + \sigma^{R,o}\bigl(t,s_t^g,X_{i,t}^m,\mu_t^{o}\bigr)\,dR^{H_2,o}_i(t),
\end{aligned}
\label{eq:obs_dyn}
\end{equation}
with $\mu_t^{o}=\operatorname{Law}(X_{i,t}^o,s_t^g,\mathbf X_t^g,\mathbf C_t^g,E_t^g,\mathbf u_t)$.  
This step mirrors how a local commander sifts through contradictory reports to form a coherent picture of the battlefield.

\medskip
\noindent\textbf{Perception (belief).}
The agent builds a perception $X_{i,t}^p\in\mathbb R^d$ of her own state, a subjective belief that directly guides her choices:
\begin{equation}
\begin{aligned}
dX_{i,t}^p &=
b^{p}\bigl(t,s_t^g,X_{i,t}^o,u_{i,t},\mu_t^{p}\bigr)\,dt \\
&\quad + \sigma^{w,p}\bigl(t,s_t^g,X_{i,t}^o,\mu_t^{p}\bigr)\,dW_{i,t}^p
   + \int_{\mathcal Z_p}\sigma^{n,p}\bigl(t,s_{t-}^g,X_{i,t-}^o,\mu_{t-}^{p},z_p\bigr)\,\tilde N_i^p(dt,dz_p) \\
&\quad + \sigma^{fbm,H_1,p}\bigl(t,s_t^g,X_{i,t}^o,\mu_t^{p}\bigr)\,dB^{H_1,p}_i(t)
   +   2\tilde{c}\sigma^{fbm,H_2,p}\bigl(t,s_t^g,X_{i,t}^o,\mu_t^{p}\bigr)\,dB^{\frac{1+H_2}{2},p}_i(t) \\
&\quad + \sigma^{gv,p}\bigl(t,s_t^g,X_{i,t}^o,\mu_t^{p}\bigr)\,dG_i^p(t)
   + \sigma^{R,p}\bigl(t,s_t^g,X_{i,t}^o,\mu_t^{p}\bigr)\,dR^{H_2,p}_i(t),
\end{aligned}
\label{eq:perc_dyn}
\end{equation}
with $\mu_t^{p}=\operatorname{Law}(X_{i,t}^p,s_t^g,\mathbf X_t^g,\mathbf C_t^g,E_t^g,\mathbf u_t)$.  
The control  action $u_{i,t}$ enters the perception drift: an agent may invest resources to gather better intelligence (paying informants, sending scouts) thereby altering the evolution of her own belief. In the Malian theatre, this is the difference between a faction that relies on rumour and one that operates a network of trusted informers.

The four layers are coupled through the laws $\mu_t^g,\mu_t^{m},\mu_t^{o},\mu_t^{p}$, each of which is a probability measure on an extended state space. Because the noises are independent across layers, the agent's best estimate of her true state is the solution of a nonlinear filtering problem whose structure is fully determined by the coefficients in \eqref{eq:true_dyn}-\eqref{eq:perc_dyn}. In the MFTG, the agent optimises her cost functional on the basis of her perception $X_{i,t}^p$, while the true state $X_{i,t}^g$ determines the actual costs incurred, so that, exactly as in the field, a decision may be rational given the agent's beliefs yet prove disastrous in reality.

\subsection{Environmental dynamics}
\label{sec:env_dyn_four}

In the Malian conflict, no actor observes the macro‑environment perfectly. International gold prices, border porosity, rainfall patterns, and the level of foreign intervention are perceived only through incomplete and often distorted signals. To mirror the information structure of individual agents, the environment is endowed with an analogous four‑layer architecture. There is a single \emph{true} environment process that affects all agents, while each agent $i\in\cN$ forms her own \emph{measurement}, \emph{observation}, and \emph{perception} of this common process. All coefficients are $\mathbb{F}$-progressively measurable, satisfy Lipschitz and linear growth conditions, and depend on the macro‑political regime $s_t^g$.

\medskip
\noindent\textbf{True environment.}
The true environment $E_t^g\in\mathbb R^{d_E}$ evolves according to the McKean–Vlasov–Volterra jump‑diffusion
\begin{equation}
\begin{aligned}
dE_t^g &=
b_E\bigl(t,s_t^g,\mathbf X_t^g,\mathbf C_t^g,E_t^g,\mathbf u_t,\mu_t^g\bigr)\,dt \\[2pt]
&\quad + \sigma_E^{w}\bigl(t,s_t^g,\mathbf X_t^g,\mathbf C_t^g,E_t^g,\mu_t^g\bigr)\,dW_{E,t}
   + \int_{\mathcal Z_E}\gamma_E^{n}\bigl(t,s_{t-}^g,\mathbf X_{t-}^g,\mathbf C_{t-}^g,E_{t-}^g,\mu_{t-}^g,z\bigr)\,\tilde N_E(dt,dz) \\[2pt]
&\quad + \sigma_E^{fbm,H_1}\bigl(t,s_t^g,\mathbf X_t^g,\mathbf C_t^g,E_t^g,\mu_t^g\bigr)\,dB^{H_1}_t
   +  2\tilde{c} \sigma_E^{fbm,H_2}\bigl(t,s_t^g,\mathbf X_t^g,\mathbf C_t^g,E_t^g,\mu_t^g\bigr)\,dB^{\frac{1+H_2}{2}}_t \\[2pt]
&\quad + \sigma_E^{gv}\bigl(t,s_t^g,\mathbf X_t^g,\mathbf C_t^g,E_t^g,\mu_t^g\bigr)\,dG_t
   + \sigma_E^{R}\bigl(t,s_t^g,\mathbf X_t^g,\mathbf C_t^g,E_t^g,\mu_t^g\bigr)\,dR^{H_2}_t.
\end{aligned}
\label{eq:E_true_dyn}
\end{equation}
Here $\mu_t^g$ is the joint law of $(s_t^g,\mathbf X_t^g,\mathbf C_t^g,E_t^g,\mathbf u_t)$.  
The drift $b_E$ encapsulates mean‑reversion, linear feedback from individual states ($\sum_j M_j^g X_{j,t}^g$), and the non‑linear aggregate effect ($\int F^g d\mu_t^g$).  
All driving noises are \emph{common} to the entire population; for example, $W_{E,t}$ models continuous short‑term fluctuations in global commodity markets, $\tilde N_E$ captures sudden extreme events (sanctions, discovery of new gold deposits), the fractional Brownian motions $B^{H_1}$ and $B^{\frac{1+H_2}{2}}$ represent long‑memory components such as climatic cycles and slowly evolving financial conditions, the Gauss–Volterra process $G$ captures more flexible long‑memory phenomena, and the Rosenblatt process $R^{H_2}$ introduces heavy‑tailed asymmetric shocks typical of illicit financial flows.  
The integrals with respect to fractional and related processes are defined pathwise as Riemann–Stieltjes integrals under suitable Hölder conditions.

\medskip
\noindent\textbf{Environment measurement.}
Agent $i$ does not observe $E_t^g$ directly; she receives a noisy measurement $E_{i,t}^{m}\in\mathbb R^{d_E}$, corrupted by agent‑specific disturbances:
\begin{equation}
\begin{aligned}
dE_{i,t}^{m} &=
b_E^{m}\bigl(t,s_t^g,E_t^g,\mu_t^{E,m}\bigr)\,dt \\[2pt]
&\quad + \sigma_E^{w,m}\bigl(t,s_t^g,E_t^g,\mu_t^{E,m}\bigr)\,dW_{i,t}^{E,m}
   + \int_{\mathcal Z_{E,m}}\sigma_E^{n,m}\bigl(t,s_{t-}^g,E_{t-}^g,\mu_{t-}^{E,m},z_m\bigr)\,\tilde N_i^{E,m}(dt,dz_m) \\[2pt]
&\quad + \sigma_E^{fbm,H_1,m}\bigl(t,s_t^g,E_t^g,\mu_t^{E,m}\bigr)\,dB^{H_1,E,m}_{i}(t)
   + 2\tilde{c} \sigma_E^{fbm,H_2,m}\bigl(t,s_t^g,E_t^g,\mu_t^{E,m}\bigr)\,dB^{\frac{1+H_2}{2},E,m}_{i}(t) \\[2pt]
&\quad + \sigma_E^{gv,m}\bigl(t,s_t^g,E_t^g,\mu_t^{E,m}\bigr)\,dG_i^{E,m}(t)
   + \sigma_E^{R,m}\bigl(t,s_t^g,E_t^g,\mu_t^{E,m}\bigr)\,dR^{H_2,E,m}_{i}(t),
\end{aligned}
\label{eq:E_measure_dyn}
\end{equation}
where $\mu_t^{E,m}=\operatorname{Law}\bigl(E_{i,t}^{m},\,s_t^g,\,\mathbf X_t^g,\,\mathbf C_t^g,\,E_t^g,\,\mathbf u_t\bigr)$.  
In the field, this measurement corresponds to the fragmentary environmental information available to a rural community: a trader's report of the gold price in Bamako, a rumour of a closed border, or an estimate of the coming harvest based on local rainfall. The noises are independent across agents and independent of the true environment noises.

\medskip
\noindent\textbf{Environment observation.}
The agent processes the raw environmental measurement into an observation $E_{i,t}^{o}\in\mathbb R^{d_E}$, combining it with prior knowledge or local expertise:
\begin{equation}
\begin{aligned}
dE_{i,t}^{o} &=
b_E^{o}\bigl(t,s_t^g,E_{i,t}^{m},\mu_t^{E,o}\bigr)\,dt \\[2pt]
&\quad + \sigma_E^{w,o}\bigl(t,s_t^g,E_{i,t}^{m},\mu_t^{E,o}\bigr)\,dW_{i,t}^{E,o}
   + \int_{\mathcal Z_{E,o}}\sigma_E^{n,o}\bigl(t,s_{t-}^g,E_{i,t-}^{m},\mu_{t-}^{E,o},z_o\bigr)\,\tilde N_i^{E,o}(dt,dz_o) \\[2pt]
&\quad + \sigma_E^{fbm,H_1,o}\bigl(t,s_t^g,E_{i,t}^{m},\mu_t^{E,o}\bigr)\,dB^{H_1,E,o}_{i}(t)
   +  2\tilde{c}\sigma_E^{fbm,H_2,o}\bigl(t,s_t^g,E_{i,t}^{m},\mu_t^{E,o}\bigr)\,dB^{\frac{1+H_2}{2},E,o}_{i}(t) \\[2pt]
&\quad + \sigma_E^{gv,o}\bigl(t,s_t^g,E_{i,t}^{m},\mu_t^{E,o}\bigr)\,dG_i^{E,o}(t)
   + \sigma_E^{R,o}\bigl(t,s_t^g,E_{i,t}^{m},\mu_t^{E,o}\bigr)\,dR^{H_2,E,o}_{i}(t),
\end{aligned}
\label{eq:E_obs_dyn}
\end{equation}
with $\mu_t^{E,o}=\operatorname{Law}\bigl(E_{i,t}^{o},\,s_t^g,\,\mathbf X_t^g,\,\mathbf C_t^g,\,E_t^g,\,\mathbf u_t\bigr)$.  
This step mirrors how an experienced village elder interprets scattered reports to form a coherent picture of the macro‑situation, distinguishing a true shift in the gold price from transient noise, for instance.

\medskip
\noindent\textbf{Environment perception.}
The agent forms a perception $E_{i,t}^{p}\in\mathbb R^{d_E}$ of the environment, a subjective belief that directly guides her strategic choices:
\begin{equation}
\begin{aligned}
dE_{i,t}^{p} &=
b_E^{p}\bigl(t,s_t^g,E_{i,t}^{o},u_{i,t},\mu_t^{E,p}\bigr)\,dt \\[2pt]
&\quad + \sigma_E^{w,p}\bigl(t,s_t^g,E_{i,t}^{o},\mu_t^{E,p}\bigr)\,dW_{i,t}^{E,p}
   + \int_{\mathcal Z_{E,p}}\sigma_E^{n,p}\bigl(t,s_{t-}^g,E_{i,t-}^{o},\mu_{t-}^{E,p},z_p\bigr)\,\tilde N_i^{E,p}(dt,dz_p) \\[2pt]
&\quad + \sigma_E^{fbm,H_1,p}\bigl(t,s_t^g,E_{i,t}^{o},\mu_t^{E,p}\bigr)\,dB^{H_1,E,p}_{i}(t)
   + 2\tilde{c} \sigma_E^{fbm,H_2,p}\bigl(t,s_t^g,E_{i,t}^{o},\mu_t^{E,p}\bigr)\,dB^{\frac{1+H_2}{2},E,p}_{i}(t) \\[2pt]
&\quad + \sigma_E^{gv,p}\bigl(t,s_t^g,E_{i,t}^{o},\mu_t^{E,p}\bigr)\,dG_i^{E,p}(t)
   + \sigma_E^{R,p}\bigl(t,s_t^g,E_{i,t}^{o},\mu_t^{E,p}\bigr)\,dR^{H_2,E,p}_{i}(t),
\end{aligned}
\label{eq:E_perc_dyn}
\end{equation}
with $\mu_t^{E,p}=\operatorname{Law}\bigl(E_{i,t}^{p},\,s_t^g,\,\mathbf X_t^g,\,\mathbf C_t^g,\,E_t^g,\,\mathbf u_t\bigr)$.  
Notice that the continuous control $u_{i,t}$ appears in the perception drift: an agent may devote resources to improving her environmental intelligence, deploying scouts to monitor borders, paying for satellite imagery, or cultivating informants in neighbouring countries. In the Malian theatre, this is the difference between a faction that acts on reliable intelligence and one that relies on rumour and propaganda.
The four environmental layers are coupled through the laws $\mu_t^g$, $\mu_t^{E,m}$, $\mu_t^{E,o}$, $\mu_t^{E,p}$, each of which is a probability measure on an extended state space. Because the noises are independent across layers, each agent solves a nonlinear filtering problem to estimate the true environment from her perception. The agents' decisions are based on their idiosyncratic environmental perceptions $E_{i,t}^{p}$, while the actual payoffs depend on the true environment $E_t^g$, exactly as in the field, where a strategy that appears sound given local information may prove disastrous when the true state of the world is revealed.

\subsection{Controlled coalition switching}
\label{sec:coalition}

In the Malian conflict, actors continuously re‑evaluate their allegiances. Fighters defect from one armed group to another, community militias dissolve and reform under different leadership, and opportunistic entrepreneurs pivot between criminal networks and political factions. To capture this fluidity, we model the discrete coalition affiliation $C_{i,t}^g \in \{1,\dots,K\}$ of each agent $i\in\cN$ within generation $g$ as a controlled continuous‑time Markov jump process. The control variable is a vector $v_{i,t}^g \in \mathcal{V}^{i,g} \subseteq \R^{q_v}$ that agent $i$ chooses; it may represent effort spent on building new alliances, the payment of bribes, or investment in loyalty‑shifting operations.

Let the \emph{joint coalition vector} at time $t$ be $\mathbf{C}_t^g = (C_{1,t}^g,\dots,C_{N,t}^g)$. Conditional on the regime $s_t^g$, the full continuous state vector $\mathbf{X}_t^g$, the environment $E_t^g$, and the joint law $\mu_t^g$, the transitions of different agents are independent. For an agent $i$ currently in coalition $m$, the instantaneous probability of switching to coalition $m' \neq m$ during the interval $[t,t+dt]$ is
\begin{equation}
\Pee\bigl( C_{i,t+dt}^g = m' \;\big|\; C_{i,t}^g = m,\; \mathcal{H}_t \bigr)
= \lambda_{m\to m'}^i\bigl(t,s_t^g,\mathbf{X}_t^g,E_t^g,\mu_t^g,v_{i,t}^g\bigr)\,dt + o(dt),
\label{eq:switch_prob}
\end{equation}
where $\mathcal{H}_t$ denotes the history of the system up to time $t$. The \emph{transition intensity} is specified as
\begin{equation}
\lambda_{m\to m'}^i\bigl(t,s,\mathbf{x},e,\mu,v\bigr)
= \bar{\lambda}_{m\to m'}^{\,i}(t,s)\,
\exp\!\Bigl( v^\top D_{mm'}^i(t,s)\,\mathbf{x}_i \;+\; \Psi_{mm'}\bigl(e,\, e_{\beta_i}(\mu), s\bigr) \Bigr),
\label{eq:lambda_def}
\end{equation}
with the convention $\lambda_{m\to m}^i = -\sum_{m'\neq m} \lambda_{m\to m'}^i$. The baseline intensity $\bar{\lambda}_{m\to m'}^{\,i}(t,s) \ge 0$ captures the structural propensity of switching from $m$ to $m'$ under regime $s$ in the absence of active control. The matrix $D_{mm'}^i(t,s) \in \R^{q_v \times d}$ determines how the agent's own continuous state (radicalisation, wealth) modulates the attractiveness of the switch; for example, a highly radicalised individual may find it easier to join a jihadist network. The function $\Psi_{mm'}$ couples the environment and the $\beta_i$-expectile $e_{\beta_i}(\mu_t^g)$ of the joint distribution to the switching intensity: a high expectile signals an extremely unstable situation, which typically lowers the threshold for changing sides.

The control $v_{i,t}^g$ enters multiplicatively through the exponential factor. This functional form ensures that the intensity remains strictly positive and that the marginal cost of control, which will appear in the agent's running cost, is convex. The admissible control set $\mathcal{V}^{i,g}$ is a closed, convex subset of $\R^{q_v}$, and the agent selects a $\mathbb{F}$-progressively measurable process $v_{i,\cdot}^g$ with values in this set.
The control action profile $\mathbf{v}_t^g = (v_{1,t}^g, \dots, v_{N,t}^g)$ represents the highly fluid, strategic manipulation of alliance networks and political allegiances across the deeply fragmented Malian theater. Far from being a static or ideological commitment, allegiance in the Sahel is a highly volatile currency, and the control $v_{i,t}^g$ materializes in the field as the real-time financial, social, and tactical investments made by individual actors to navigate or actively induce defections within the continuous-time Markov jump process. For an opportunistic combatant or local commander in the Sikasso, Menaka or Gao regions, a targeted shift in $v_{i,t}^g$ represents concrete field actions: the clandestine distribution of cash bribes via Hawala informal networks, the renegotiation of tribal security guarantees over tea in remote desert encampments, or the tactical reinvestment of artisanal gold rents to secure protection pacts with whichever coalition holds localized dominance. This control profile explicitly captures the mechanics behind the dizzying structural fluidity observed on the ground such as fighters defecting overnight between secular  nationalist factions  and deep-seated jihadist coalitions, or community-level  militias splintering and reforming under entirely new leadership structures depending on localized survival payoffs. Because these loyalty-shifting operators act as the direct velocity variable altering the instantaneous transition intensity $\lambda_{m\to m'}^i$, the collective profile $\mathbf{v}_t^g$ functions as the primary behavioral steering mechanism of the game. It  formalizes how the top-down introduction of blunt regulatory shocks or heavy-handed military campaigns by the state alters the underlying joint law $\mu_t^g$, driving individual agents to dynamically adjust their allegiance controls $v_{i,t}^g$ to shift their entire coalitional footprint away from predatory elements or, conversely, deeper into hidden criminal and insurgent networks to hedge against systemic risk. The probabilistic description above is equivalent to saying that each agent's coalition process is a controlled, non‑homogeneous Markov chain with generator matrix $Q^i(t) = \bigl(q^i_{mm'}(t)\bigr)$ given by $q^i_{mm'}(t) = \lambda_{m\to m'}^i(\dots)$. The joint process $\mathbf{C}_t^g$ is a $K^N$‑state Markov chain, and its compensator measures can be expressed in terms of the individual intensities. In particular, for each agent $i$ and each ordered pair $(m,m')$ with $m \neq m'$, we introduce the point process $N_{i,t}^{C, m\to m'}$ counting the transitions from $m$ to $m'$ up to time $t$. Its $\mathbb{F}$‑compensator is $\int_0^t \lambda_{m\to m'}^i(\dots)\,ds$, and the compensated process $\tilde N_{i,t}^{C, m\to m'} = N_{i,t}^{C, m\to m'} - \int_0^t \lambda_{m\to m'}^i(\dots)\,ds$ is an $\mathbb{F}$‑martingale. An important structural feature of the Malian conflict is the endogenous emergence and collapse of coalitions. If at some instant $\mu_t^g(C_{i,t}^g = m) = 0$ for all $i\in\cN$, i.e., no agent belongs to coalition $m$, then coalition $m$ is declared \emph{extinct}. All intensities targeting that coalition are set to zero, and the state space is reduced accordingly. This allows the effective number of active coalitions, denoted $K_t^g$, to fluctuate between $1$ (a unified state) and the maximum $K$. New coalitions cannot appear ex nihilo; they must be seeded by an existing agent switching into a previously empty label, which can occur only if the intensity $\lambda_{m\to m'}^i$ becomes positive for some active $m$ and empty $m'$, reflecting a deliberate act of creation (e.g., the formation of a new splinter faction).

Thus, the controlled coalition dynamics capture the full spectrum of allegiance shifts observed in the Sahel: opportunistic defections, coerced recruitment, strategic realignments driven by environmental changes, and the birth and death of armed organisations as the conflict evolves.

\subsection{Intergenerational Transmission of Trauma and Wealth}
\label{sec:transmission}

One of the most persistent features of the Malian conflict is its intergenerational continuity. The children of today's combatants inherit not only the psychological scars and social grievances of their parents but also, in the case of war entrepreneurs, the financial capital accumulated through decades of violence. To capture these hereditary mechanisms, we specify the transition rules that map the terminal state of generation $g$ onto the initial state of generation $g+1$ at each generational interface $T_{g+1}$.

Let $\mu_{T_{g+1}-}^g$ denote the joint probability law of the system immediately before the generational boundary. The initial configuration of generation $g+1$ is determined by a set of non‑local jump operators that act on the state variables and on the measure itself. All random innovations are independent of the past and have zero mean, ensuring that the mapping preserves the martingale structure of the forward dynamics.

\medskip
\noindent\textbf{Continuous state transmission.}
For each agent $i\in\cN$, the initial continuous state of the new generation is given by
\begin{equation}
X_{i,T_{g+1}}^{g+1} \;=\;
\zeta_i\, X_{i,T_{g+1}-}^g
\;+\; (1-\zeta_i)\,\int \mathbf{x}_i\; d\mu_{T_{g+1}-}^g(\mathbf{x},\mathbf{c},e)
\;+\; \xi_{i}^{g+1},
\label{eq:X_jump}
\end{equation}
where:
\begin{itemize}
\item $\zeta_i \in [0,1]$ is the \emph{intergenerational trauma transmission coefficient}. For $\zeta_i = 1$, the child perfectly inherits the parent's radicalisation and asset holdings; for $\zeta_i = 0$, the child starts from the population average, free of parental influence. In practice, $\zeta_i$ is high for households that have experienced severe trauma and for war entrepreneurs who bequeath offshore wealth, while it is lower for communities with strong social mixing.
\item The integral $\int \mathbf{x}_i\, d\mu_{T_{g+1}-}^g(\mathbf{x},\mathbf{c},e)$ is the expectation of agent $i$'s state under the terminal law of generation $g$, i.e., the population mean of that generation. This term represents the ``melting pot'' effect: absent direct inheritance, the child tends towards the societal norm.
\item $\xi_i^{g+1} \in \R^d$ is a zero‑mean random innovation, capturing idiosyncratic shocks at birth (e.g., an unusually gifted leader emerges, or a child is born into extraordinary hardship). These shocks are independent across agents and independent of all previous history.
\end{itemize}

\medskip
\noindent\textbf{Coalition inheritance.}
The coalition affiliation of a child is drawn from a discrete distribution that depends on the parent's coalition and the overall measure of the preceding generation:
\begin{equation}
\Pee\bigl( C_{i,T_{g+1}}^{g+1} = m' \;\big|\; \mathbf{C}_{T_{g+1}-}^g = \mathbf{c},\;
\mu_{T_{g+1}-}^g \bigr) \;=\; \Pi_{m'}^{i}\bigl(\mathbf{c},\, \mu_{T_{g+1}-}^g\bigr),
\label{eq:C_jump}
\end{equation}
where $\Pi_{m'}^{i} \ge 0$ and $\sum_{m'=1}^K \Pi_{m'}^{i} = 1$. The inheritance kernel $\Pi^{i}$ can encode a variety of transmission patterns: a strong direct transmission ($\Pi_{m'}^{i}$ close to $1$ when $m'$ matches the parent's coalition), a tendency towards radicalisation when the population expectile is high, or a probabilistic fusion of allegiances in mixed communities.

\medskip
\noindent\textbf{Environmental persistence.}
The macro‑environment carries over across generations with a persistence matrix $\mathbb{A}_E \in \R^{d_E \times d_E}$:
\begin{equation}
E_{T_{g+1}}^{g+1} \;=\; \mathbb{A}_E\, E_{T_{g+1}-}^g \;+\; \xi_E^{g+1},
\label{eq:E_jump}
\end{equation}
where $\xi_E^{g+1} \in \R^{d_E}$ is a zero‑mean random innovation independent of the past. The matrix $\mathbb{A}_E$ can be diagonal, with entries close to $1$ for slowly evolving environmental variables (e.g., climate, infrastructure) and entries near $0$ for highly volatile ones (e.g., aid flows, temporary policies).

\medskip
\noindent\textbf{Regime continuity.}
The macro‑political regime $s_t^g$ is assumed to be a continuous process across generational boundaries, reflecting the fact that the broader political environment (peace negotiations, military intervention, etc.) does not reset simply because a new cohort reaches adulthood:
\begin{equation}
s_{T_{g+1}}^{g+1} \;=\; s_{T_{g+1}-}^g.
\label{eq:regime_jump}
\end{equation}
A probabilistic reset could be easily incorporated by replacing the identity map with a stochastic kernel; however, for analytical clarity we maintain continuity.

\medskip
\noindent\textbf{Push‑forward of the probability measure.}
Taking the joint law of the random variables defined in \eqref{eq:X_jump}-\eqref{eq:E_jump} yields a deterministic operator $\mathcal{T}_{\boldsymbol{\zeta}}$ that maps the terminal measure of generation $g$ to the initial measure of generation $g+1$:
\begin{equation}
\mu_{T_{g+1}}^{g+1} \;=\; \mathcal{T}_{\boldsymbol{\zeta}}\bigl[\,\mu_{T_{g+1}-}^g\,\bigr],
\label{eq:pushforward}
\end{equation}
where the subscript $\boldsymbol{\zeta} = (\zeta_1,\dots,\zeta_N)$ highlights the dependence on the vector of trauma transmission coefficients. The operator $\mathcal{T}_{\boldsymbol{\zeta}}$ acts on the infinite‑dimensional Wasserstein space of probability measures and encapsulates the full intergenerational transition. It is this push‑forward that couples successive generations and makes the game genuinely intergenerational: the equilibrium strategy in generation $g$ must anticipate how the terminal distribution will be transformed and how it will influence the initial conditions and strategic incentives of generation $g+1$.

\subsection{War Entrepreneurs and the Dual Local‑Global Asset Structure}
\label{sec:war_entrepreneurs}

A defining feature of the Malian war economy is the existence of a class of actors who derive systematic profit from instability and who use the global financial system to preserve and multiply those profits across generations.  We designate this subset as $\cN_E \subset \cN$.  Unlike ordinary civilians or state forces, a war entrepreneur holds a dual portfolio: \emph{local (illicit) wealth}, which is directly exposed to the conflict theatre, and \emph{global (laundered) wealth}, which is sheltered in offshore financial centres and can be bequeathed to the next generation with minimal loss.

\subsubsection{Local wealth dynamics}
The local wealth of entrepreneur $i \in \cN_E$ is a sub‑vector $X_{i,t}^{g,\text{local}} \in \R^{d_{\text{local}}}$ of her full continuous state $X_{i,t}^g$.  It encompasses tangible assets such as gold stocks, weapons caches, local currency, livestock, and territory under control.  By construction, this sub‑vector obeys exactly the same four‑layer McKean-Vlasov-Volterra dynamics described in Section~\ref{sec:fourlayer}.  In particular, its true component evolves according to \eqref{eq:true_dyn} with drift $b_i$, environmental coupling $H^{i,g}$, the revenge operator $\mathcal{K}_{\mathrm{rv}}^{i,g}$, and all idiosyncratic and common noise terms.  The coefficients pertaining to the local wealth dimensions may be subscripted by ``$\text{local}$'' to emphasise their economic content, but they remain structurally identical to those of any other agent.  Thus, the radicalisation, asset accumulation, and human capital of a war entrepreneur are driven by the same forces that shape the entire population: the macro‑environment $E_t^g$, the expectile $e_{\alpha_i}(\mu_t^g)$, the historical grievance memory, and the strategic choices of all actors.

\subsubsection{Global wealth and the laundering pipeline}
The global wealth $X_{i,t}^{g,\text{global}} \in \R$ is held in offshore bank accounts, foreign real estate, or international bonds, and is largely insulated from local conflict risks.  Unlike the local components, global wealth is \emph{exactly observed} by the entrepreneur (there is no measurement noise obscuring an offshore balance) and its dynamics are driven by a single Brownian motion independent of the local noises.  The process satisfies the stochastic differential equation
\begin{equation}
\begin{aligned}
dX_{i,t}^{g,\text{global}} &=
\Bigl[\, r_{\text{global}}\, X_{i,t}^{g,\text{global}}
      + \eta_{\text{wash}}\;\Pi_{\text{local}}\bigl(X_{i,t}^{g,\text{local}},\, E_t^g,\, e_{\beta_i}(\mu_t^g),\, s_t^g\bigr)
      - u_{i,t}^{g,\text{global}} \Bigr] dt \\[4pt]
&\quad + \sigma_{\text{global}}\, X_{i,t}^{g,\text{global}} \; dW_{i,t}^{g,\text{global}},
\end{aligned}
\label{eq:global_wealth}
\end{equation}
where
\begin{itemize}
    \item $r_{\text{global}} \ge 0$ is the risk‑free or market rate of return earned on global assets;
    \item $\Pi_{\text{local}}(\cdot) \ge 0$ is the \emph{local extraction profit function}, which quantifies the instantaneous monetary profit, expressed in internationally convertible form, that the entrepreneur can extract from local chaos.  It increases with local wealth and with environmental factors favourable to trafficking (high gold price, porous borders), but is penalised by a high $\beta_i$-expectile $e_{\beta_i}(\mu_t^g)$, which signals that the overall situation is so extreme that even the entrepreneur's operations are threatened;
    \item $\eta_{\text{wash}} \in [0,1]$ is the \emph{laundering efficiency}: the fraction of local profits that can be successfully transferred into the global financial system without detection or seizure;
    \item $u_{i,t}^{g,\text{global}} \ge 0$ is a \emph{consumption/repatriation control}, representing the amount of global wealth the entrepreneur chooses to withdraw for personal consumption or to reinject into the local conflict (e.g.\ to purchase weapons, bribe officials, or finance recruitment).  This control is a component of the full continuous control vector $u_{i,t}^g$ and is optimised alongside the violence‑related actions;
    \item $\sigma_{\text{global}} > 0$ is the volatility of global financial markets, and $W_{i,t}^{g,\text{global}}$ is an idiosyncratic Brownian motion independent of all other noises in the model.
\end{itemize}

\subsubsection{Intergenerational transmission of global wealth}
At the generational boundary $T_{g+1}$, the global wealth is transmitted to the offspring according to the same hereditary rule \eqref{eq:X_jump} that governs all continuous state variables.  For a war entrepreneur, the trauma transmission coefficient $\zeta_i$ corresponding to the global wealth component is typically close to unity, reflecting the fact that offshore accounts and property titles can be bequeathed with minimal dissipation.  Consequently, the next generation of war entrepreneurs enters the conflict with a substantial financial head start, perpetuating the structural advantage of the entrepreneurial class.

\subsubsection{The self‑reinforcing war economy loop}
The coupling between \eqref{eq:true_dyn} (for the local wealth components) and \eqref{eq:global_wealth} creates a vicious cycle that lies at the heart of the Malian conflict trap:
\begin{enumerate}
    \item Local instability, fuelled by the revenge operator $\mathcal{K}_{\mathrm{rv}}^{i,g}$ and by strategic violence, raises the extraction profit $\Pi_{\text{local}}$, generating large illicit revenues.
    \item A fraction $\eta_{\text{wash}}$ of these revenues is laundered into the safe global wealth process, where it compounds at the rate $r_{\text{global}}$ beyond the reach of local authorities.
    \item At the generational boundary, the accumulated global wealth is transferred to the next generation with high fidelity ($\zeta_i \approx 1$).
    \item A portion of this inherited capital is repatriated through $u_{i,t}^{g,\text{global}}$ to finance further violence, bribe officials, and maintain the environment $E_t^g$ in a state of high instability.
    \item The cycle repeats, making peace economically unattractive for the entrepreneurial class and structurally locking the system into the war trap whose threshold condition is identified in Theorem~\ref{thm:breakdown}.
\end{enumerate}
Thus, even if local grievances could be temporarily calmed, the recycling of global capital ensures that the peaceful steady state remains unstable.  The optimal institutional policy designed in Section~\ref{sec:policy} directly attacks this loop by imposing an offshore transaction tax $\theta_{\text{tax}}$ that reduces the effective laundering efficiency to $\eta_{\text{wash}} - \theta_{\text{tax}}$, and by investing in endogenous mediation subsidies that erode the grievance kernel $\mathbb{M}(t)$.  When $\theta_{\text{tax}} \ge \eta_{\text{wash}}$, the financial artery of the war economy is severed, and the peaceful equilibrium becomes globally stable.

\subsection{Expectile Risk Measures}
\label{sec:expectile}

In a polycentric conflict, actors do not treat gains and losses symmetrically. The loss of a village held for generations is felt far more intensely than the satisfaction derived from capturing an equivalent territory; the risk of a catastrophic defeat weighs more heavily than the possibility of a modest victory. Standard risk measures such as variance or Value‑at‑Risk fail to capture this asymmetry. We therefore employ \emph{expectiles}, a class of asymmetric least‑squares risk functionals that are coherent for $\alpha \ge 1/2$ and possess a simple, interpretable structure.

For a scalar random variable $Y$ with law $\mu$, the $\alpha$-expectile $e_\alpha(\mu)$, $\alpha \in (0,1)$, is defined as the unique solution $m$ of the equation
\begin{equation}
\alpha\,\E[(Y-m)_+] \;=\; (1-\alpha)\,\E[(m-Y)_+].
\label{eq:expectile_def}
\end{equation}
Equivalently, $e_\alpha(\mu)$ is the minimiser of the asymmetric quadratic loss $\E\bigl[|\alpha - \mathbf{1}_{\{Y \le m\}}|\,(Y-m)^2\bigr]$. For $\alpha = 1/2$, the expectile reduces to the expectation (mean). For $\alpha > 1/2$, it overweights the upper tail (gains/positive outcomes), while for $\alpha < 1/2$, it overweights the lower tail (losses). Expectiles depend continuously on $\alpha$ and are strictly increasing with $\alpha$; they are law‑invariant and respect first‑order stochastic dominance.

\begin{definition}[$\tau$-Expectile of a Random Vector]
Let
$
Z=(X,C,E)\in\mathbb{R}^{n},
\ 
n=d_X+d_C+d_E,
$
be a square-integrable random vector with law
$
\mu:=\mathcal{L}(Z)
\in \mathcal{P}_2(\mathbb{R}^{n}),
$
and let
$
\tau=(\tau_1,\ldots,\tau_n)\in(0,1)^n.
$

The $\tau$-expectile of $Z$, denoted by
$
e_\tau(Z)\in\mathbb{R}^{n},
$
is defined as the unique minimizer
\[
e_\tau(Z)
=
\arg\min_{m\in\mathbb{R}^{n}}
\;
\mathbb{E}
\!\left[
\sum_{k=1}^{n}
\ell_{\tau_k}(Z_k-m_k)
\right],
\]
where
\[
\ell_{\tau_k}(u)
=(
\tau_k\,\mathbf 1_{\{u\ge0\}}
+
(1-\tau_k)\,\mathbf 1_{\{u<0\}}) u^2.
\]
\end{definition}

In the present model, each agent $i$ is characterised by two expectile levels, $\alpha_i, \beta_i \in (0,1)$, which may differ across agents to reflect heterogeneous risk attitudes. Their roles are as follows:
\begin{itemize}
    \item \textbf{$\alpha_i$-expectile $e_{\alpha_i}(\mu_t^g)$:} This appears in the environment–population coupling term $H^{i,g}\bigl(t,s_t^g,E_t^g,e_{\alpha_i}(\mu_t^g)\bigr)$ in the drift of the true individual state \eqref{eq:true_dyn} and in the running cost functional \eqref{eq:running}. It captures the agent's perception of the ``central tendency'' of the population distribution, tilted towards gains or losses according to $\alpha_i$. For an agent with $\alpha_i > 1/2$, the expectile lies above the mean, meaning she bases her decisions on an optimistic view of the distribution, whereas $\alpha_i < 1/2$ corresponds to a pessimistic, loss‑sensitive attitude. In the Malian context, a government commander ($\alpha_i \approx 0.4$) may plan for worst‑case scenarios, while a risk‑seeking war entrepreneur ($\alpha_i \approx 0.7$) may focus on the upside potential of chaos.
    \item \textbf{$\beta_i$-expectile $e_{\beta_i}(\mu_t^g)$:} This appears in the coalition switching intensities \eqref{eq:lambda_def} and in the asymmetric penalty term $\Phi_{\beta_i}^{i,g}$ of the running cost \eqref{eq:running}. It measures tail risk and is typically chosen with $\beta_i$ close to 0 or 1 to focus on extreme events. A high value of $\beta_i$ (e.g., $0.9$) makes the expectile sensitive to the upper tail: if the population distribution has a heavy right tail (many highly radicalised individuals or very wealthy entrepreneurs), the expectile spikes, and this sharply increases the attractiveness of extremist coalitions (via $\Psi_{mm'}$) and the cost penalty $\Phi_{\beta_i}^{i,g}$. Conversely, a low $\beta_i$ (e.g., $0.1$) emphasises catastrophic lower‑tail events such as widespread famine or state collapse. In the Sahel, this mechanism captures the empirical regularity that extremist recruitment surges precisely when the population experiences an extreme shock, whether positive (a gold rush) or negative (a drought).
\end{itemize}
Both expectiles are computed on the true joint law $\mu_t^g$ of the system, which evolves according to the forward dynamics. In the mean‑field‑type game, each agent's control influences $\mu_t^g$, creating a feedback loop: a commander who invests heavily in violence shifts the distribution rightward, raising the $\beta$-expectile, which in turn makes defection to insurgent groups more attractive for other agents.

\subsection{Individual Payoff Functionals}
\label{sec:payoff}

Each agent $i \in \cN$ within generation $g$ is a rational, forward‑looking decision maker who chooses her continuous control $u_i^g = (u_{i,t}^g)_{t\in[T_g,T_{g+1}]}$ and her discrete coalition switching control $v_i^g = (v_{i,t}^g)_{t\in[T_g,T_{g+1}]}$ to minimise an expected total cost over her active lifetime. The controls are required to be $\mathbb{F}^i$‑adapted, where $\mathbb{F}^i$ is the filtration generated by the agent's perceptions $\{X_{i,t}^p, E_{i,t}^p, s_t^g\}$, that is, she can only condition her actions on her subjective beliefs, not on the unobserved true states. However, the costs she actually incurs depend on the true state variables $X_{i,t}^g$ and $E_t^g$, which she does not observe perfectly.

The cost functional for agent $i$ in generation $g$ is
\begin{equation}
J^{i,g}\bigl(u_i^g, v_i^g;\, \mathbf{u}_{-i}^g, \mathbf{v}_{-i}^g\bigr)
\;=\; \E\Bigl[ \int_{T_g}^{T_{g+1}} L^{i,g}\bigl(t,s_t^g,\mathbf{X}_t^g,\mathbf{C}_t^g,E_t^g,u_{i,t}^g,v_{i,t}^g,\mu_t^g\bigr)\,dt
\;+\; G^{i,g}\bigl(\mathbf{Z}_{T_{g+1}-}^g,s_{T_{g+1}-}^g,\mu_{T_{g+1}-}^g\bigr) \Bigr],
\label{eq:cost}
\end{equation}
where the expectation is taken under the true probability measure, given the initial conditions of generation $g$. The notation $\mathbf{u}_{-i}^g$ and $\mathbf{v}_{-i}^g$ denotes the control processes of all agents except $i$, whose choices affect the joint measure $\mu_t^g$ and thus the environment faced by agent $i$.

The \textbf{running cost} $L^{i,g}$ is specified as
\begin{align}
L^{i,g}(t,s,\mathbf{x},\mathbf{c},e,u,v,\mu) &=
\frac12\, u^\top R^{i,g}(t,s)\,u
\;+\; \frac12\, v^\top S^{i,g}(t,s)\,v \nonumber\\[4pt]
&\quad + \frac12\Bigl(\mathbf{x}_i - M^{i,g}(t,s)\, e_{\alpha_i}(\mu)\Bigr)^\top
   Q^{i,g}(t,s)\Bigl(\mathbf{x}_i - M^{i,g}(t,s)\, e_{\alpha_i}(\mu)\Bigr) \nonumber\\[4pt]
&\quad + \Phi_{\beta_i}^{i,g}\bigl(\mathbf{x}_i,\, e,\, e_{\beta_i}(\mu),\, s\bigr).
\label{eq:running}
\end{align}
Each term has a concrete behavioural interpretation grounded in the Malian conflict:

\begin{itemize}
    \item \textbf{Quadratic control costs:} $\frac12 u^\top R^{i,g} u$ and $\frac12 v^\top S^{i,g} v$. The matrices $R^{i,g}(t,s) \succ 0$ and $S^{i,g}(t,s) \succ 0$ are symmetric positive definite and regime‑dependent. The term $u^\top R u$ penalises continuous actions such as violence investment, capital flight, or intelligence gathering. The term $v^\top S v$ penalises effort spent on switching coalitions, paying bribes, building new alliances, or breaking old ones. During a ceasefire ($s=1$), these matrices may have larger eigenvalues, reflecting the higher political and logistical cost of militarisation; during full‑scale war ($s=3$), controls become cheaper as violence is normalised.
    
    \item \textbf{Conformity to the population norm:} The term with $Q^{i,g}(t,s) \succeq 0$ penalises deviations of the agent's own true state $\mathbf{x}_i$ from a target $M^{i,g}(t,s)\, e_{\alpha_i}(\mu)$. The matrix $M^{i,g}$ maps the $\alpha_i$-expectile of the joint distribution into a reference point in the agent's state space. An agent with $\alpha_i$ near $1/2$ tries to stay close to the population mean; an agent with $\alpha_i > 1/2$ aims to be above the mean (aspirational), while $\alpha_i < 1/2$ encourages being below the mean (risk‑avoidance). In the Malian context, this captures phenomena such as: a herder who sees most of his community joining a self‑defence militia feels pressure to do the same; a gold miner whose output falls below the local expectile may be stigmatised or exploited.
    
    \item \textbf{Asymmetric tail penalty:} $\Phi_{\beta_i}^{i,g}\bigl(\mathbf{x}_i,\, e,\, e_{\beta_i}(\mu),\, s\bigr)$ is a convex, non‑negative function that penalises the agent when her own state or the environment is in the extreme tail, as measured by the $\beta_i$-expectile. For example, if $\beta_i$ is close to $1$, the function becomes large when the agent is highly radicalised \emph{and} the population's upper expectile is also high, reflecting the increased risk of being targeted by counter‑insurgency operations when one stands out as an extremist. If $\beta_i$ is close to $0$, the penalty triggers when the agent's health or wealth drops below a catastrophic threshold. This term introduces a non‑linear, asymmetric risk sensitivity that cannot be captured by the quadratic deviation alone.
\end{itemize}
The \textbf{terminal cost} $G^{i,g}$ is evaluated at the end of the generation and reflects the bequest motive: agents care about the state in which they leave the world to their children. Its general form is analogous to the running cost, but without the control terms:
\begin{equation}
G^{i,g}\bigl(\mathbf{z},s,\mu\bigr)
\;=\; \frac12\Bigl(\mathbf{x}_i - M^{i,g}_{\text{term}}(s)\, e_{\alpha_i}(\mu)\Bigr)^\top
   Q^{i,g}_{\text{term}}(s)\Bigl(\mathbf{x}_i - M^{i,g}_{\text{term}}(s)\, e_{\alpha_i}(\mu)\Bigr)
\;+\; \Phi_{\beta_i,\text{term}}^{i,g}\bigl(\mathbf{x}_i,\, e,\, e_{\beta_i}(\mu),\, s\bigr).
\label{eq:terminal}
\end{equation}
The terminal matrices $M^{i,g}_{\text{term}}$ and $Q^{i,g}_{\text{term}}$ allow the target and penalty to differ from the running cost, capturing the idea that the end‑of‑life evaluation may be more forward‑looking (e.g., parents care especially about the radicalisation level they pass on). For a war entrepreneur, the terminal cost on global wealth is small ($\zeta_i \approx 1$ ensures almost perfect transmission), whereas for a civilian, a high terminal cost on health and assets may incentivise peace.
The combination of the running and terminal costs, with their dependence on the $\alpha_i$‑ and $\beta_i$‑expectiles of the joint distribution, creates a rich strategic context in which each agent's optimal behaviour is shaped by her individual risk preferences, by the observed and perceived states, and by the aggregate behaviour of the entire population.

\subsection{The  Intergenerational  Volterra MFTG}
\label{sec:complete_game}

Assembling the components laid out in the preceding sections, we obtain a closed, self‑consistent description of the Malian conflict as an intergenerational MFTG  with imperfect information, heterogeneous risk attitudes, and a comprehensive noise structure. The  object under study is the piecewise sequential game defined by the following primitives.

\begin{itemize}
    \item \textbf{Generational partition.} The macroscopic horizon is divided into $G+1$ epochs $[T_0,T_1) \to [T_1,T_2) \to \cdots \to [T_G,T_{G+1}]$ \eqref{eq:horizon}, each corresponding to an active generation of several years.
    
    \item \textbf{Four‑layer individual dynamics.} Every agent $i\in\cN$ possesses a true continuous state $X_{i,t}^g$ \eqref{eq:true_dyn} that is unobserved. She receives a noisy measurement $X_{i,t}^m$ \eqref{eq:measure_dyn}, processes it into an observation $X_{i,t}^o$ \eqref{eq:obs_dyn}, and finally forms a subjective perception $X_{i,t}^p$ \eqref{eq:perc_dyn} that serves as the basis for her decisions. The true dynamics incorporate the Volterra revenge operator $\mathcal{K}_{\mathrm{rv}}^{i,g}(t)$ \eqref{eq:revenge}, which encodes the long memory of historical grievances across generations, and the $\alpha_i$-expectile coupling $H^{i,g}$ \eqref{eq:true_dyn} linking individual trajectories to the macro‑environment and the joint distribution.
    
    \item \textbf{Four‑layer environmental dynamics.} A single true environment $E_t^g$ \eqref{eq:E_true_dyn} evolves under the influence of the entire population and a rich catalogue of stochastic drivers (Brownian, fractional, Gauss–Volterra, Rosenblatt, and Poisson). Each agent $i$ perceives this environment through her own measurement $E_{i,t}^{m}$ \eqref{eq:E_measure_dyn}, observation $E_{i,t}^{o}$ \eqref{eq:E_obs_dyn}, and perception $E_{i,t}^{p}$ \eqref{eq:E_perc_dyn}, creating a heterogeneous belief system.
    
    \item \textbf{Controlled coalition switching.} The discrete affiliation $C_{i,t}^g \in \{1,\dots,K\}$ of each agent follows a continuous‑time Markov chain whose transition intensities $\lambda_{m\to m'}^i$ \eqref{eq:lambda_def} depend on the agent's control $v_{i,t}^g$, her own state, the environment, and the $\beta_i$-expectile of the joint law. Coalitions can become endogenously extinct when they lose all members.
    
    \item \textbf{Intergenerational transmission.} At each boundary $T_{g+1}$, the terminal state of generation $g$ is mapped to the initial state of generation $g+1$ through the hereditary rules \eqref{eq:X_jump} (continuous state), \eqref{eq:C_jump} (coalition inheritance), \eqref{eq:E_jump} (environmental persistence), and \eqref{eq:regime_jump} (regime continuity). These rules induce a push‑forward operator $\mathcal{T}_{\boldsymbol{\zeta}}$ \eqref{eq:pushforward} on the space of probability measures, which is the central mechanism coupling successive generations.
    
    \item \textbf{War entrepreneurs' dual asset structure.} A subset $\cN_E$ of agents are war entrepreneurs whose local wealth is a sub‑vector of $X_{i,t}^g$ and whose global (laundered) wealth follows \eqref{eq:global_wealth}. The laundering efficiency $\eta_{\text{wash}}$ and the repatriation control $u_{i,t}^{g,\text{global}}$ encode the self‑reinforcing war economy loop that makes peace structurally unattractive for this class.
    
    \item \textbf{Expectile risk measures.} Each agent $i$ is characterised by two expectile levels $\alpha_i,\beta_i\in(0,1)$ \eqref{eq:expectile_def}, which govern her risk attitudes in the running and terminal costs, in the environment‑population coupling, and in the coalition switching intensities.
    
    \item \textbf{Payoff functionals.} Agent $i$ in generation $g$ minimises the expected total cost $J^{i,g}$ \eqref{eq:cost}, composed of a running cost $L^{i,g}$ \eqref{eq:running} and a terminal bequest cost $G^{i,g}$ \eqref{eq:terminal}. Both costs are functions of the true (unobserved) states, yet the agent must choose her controls based solely on her perceptions $X_{i,t}^p$ and $E_{i,t}^p$, together with the observed regime $s_t^g$.
\end{itemize}

The interaction of these components defines a coupled forward‑backward system on the infinite‑dimensional Wasserstein manifold of joint probability measures. Because agents base decisions on heterogeneous perceptions while costs are evaluated on true states, the model captures the essential opacity of the battlefield and the informational frictions that drive the tactical and strategic choices of the actors in the Sahel. The piecewise sequential equilibrium of this game is characterised, and its properties analysed, in the subsequent sections.

%%%%%%%%%%%%%%%%%%%%%%%%%%%%%%%%%%%%%%%%%%%%%

% =====================================================================
% STOCHASTIC STATE SPACE INFRASTRUCTURE AND COUPLING TOPOLOGY
% =====================================================================

\begin{figure}[htbp]
\centering

\resizebox{\textwidth}{!}{
\begin{tikzpicture}[
    font=\sffamily\small,
    >=Stealth,
    node distance=1.3cm and 1.1cm,
    % Component Block Styles
    microblock/.style={rectangle, draw=blue!60, fill=blue!5, rounded corners, minimum width=4.8cm, minimum height=1.1cm, align=center, thick},
    macroblock/.style={rectangle, draw=orange!70, fill=orange!5, rounded corners, minimum width=4.8cm, minimum height=1.1cm, align=center, thick},
    infblock/.style={rectangle, draw=purple!70, fill=purple!5, rounded corners, minimum width=5.5cm, minimum height=1.2cm, align=center, thick},
    container/.style={rectangle, draw=gray!30, fill=gray!2, dashed, rounded corners, inner sep=14pt},
    % Edge Styles
    stateflow/.style={->, thick, color=black!70, line width=1.1pt},
    feedbackflow/.style={->, thick, color=purple!80, line width=1.3pt},
    % Label Styles
    lbl/.style={align=center, font=\sffamily\scriptsize\bfseries, color=black!80}
]

% =====================================================================
% LEFT COLUMN: THE MICRO-LEVEL AGENT STATE HIERARCHY (4-LAYER FILTER)
% =====================================================================
\node[microblock] (x_true) {\textbf{True Individual State $X_{i,t}^g \in \mathbb{R}^d$}\\[2pt] 1. Radicalization \quad 2. Wealth \quad 3. Health};
\node[microblock, below=of x_true] (x_meas) {\textbf{Noisy Measurement $X_{i,t}^m \in \mathbb{R}^d$}\\[2pt] Rumors, Fragmented Intelligence};
\node[microblock, below=of x_meas] (x_obs) {\textbf{Filtered Observation $X_{i,t}^o \in \mathbb{R}^d$}\\[2pt] Tactical Filtering by Local Commanders};
\node[microblock, below=of x_obs] (x_perc) {\textbf{Subjective Perception $X_{i,t}^p \in \mathbb{R}^d$}\\[2pt] Cognitive Belief Set $\to$ Drives Control $u_{i,t}^g$};

% Micro Layer Background Container
\begin{scope}[on background layer]
    \node[container, fit={(x_true) (x_meas) (x_obs) (x_perc)}, 
          label={[anchor=north west, font=\sffamily\bfseries\color{blue!80!black}, xshift=5pt]north west:Agent Micro-State Filter}] (box_micro) {};
\end{scope}

% =====================================================================
% RIGHT COLUMN: THE MACRO-ENVIRONMENTAL STRATA
% =====================================================================
\node[macroblock, right=3.5cm of x_true] (e_true) {\textbf{True Environment $E_t^g \in \mathbb{R}^{d_E}$}\\[2pt] Gold Prices, Rainfall, Border Porosity};
\node[macroblock, below=of e_true] (e_perc) {\textbf{Perceived Environment $E_{i,t}^p \in \mathbb{R}^{d_E}$}\\[2pt] Subsidized Intelligence Gathering};
\node[macroblock, below=of e_perc] (s_regime) {\textbf{Discrete Regime State $s_t^g \in \mathcal{S}$}\\[2pt] Markov Chain: Peacetime $\leftrightarrow$ Full War};

% Macro Layer Background Container
\begin{scope}[on background layer]
    \node[container, fit={(e_true) (e_perc) (s_regime)}, 
          label={[anchor=north west, font=\sffamily\bfseries\color{orange!80!black}, xshift=5pt]north west:Macro-Environmental Stratum}] (box_macro) {};
\end{scope}

% =====================================================================
% BOTTOM ZONE: THE INFINITE-DIMENSIONAL SYSTEM MEASURE
% =====================================================================
\node[infblock, below=1.8cm of x_perc, xshift=4.1cm] (mu_measure) {\textbf{Joint Probability Distribution $\mu_t^g = \operatorname{Law}\left(s_t^g, \mathbf{X}_t^g, \mathbf{C}_t^g, E_t^g\right) \in \mathcal{P}_2(\mathcal{S}\times \mathcal{Z})$}\\[3pt] Fundamental Variable on the Wasserstein Manifold};

% =====================================================================
% STRUCTURAL CONNECTIONS AND INTERACTION VECTORS
% =====================================================================

% Micro-filtration connections
\draw[stateflow] (x_true) -- node[right, lbl] {Corrupted by\\Field Noise} (x_meas);
\draw[stateflow] (x_meas) -- node[right, lbl] {Non-Linear Filtered\\Information} (x_obs);
\draw[stateflow] (x_obs) -- node[right, lbl] {Bayesian Belief\\Update Loop} (x_perc);

% Macro connections
\draw[stateflow] (e_true) -- node[right, lbl] {Agent Signals\\\& Surveys} (e_perc);

% Cross-layer coupling
\draw[stateflow, dashed, color=blue!50!orange, line width=1pt] (e_true.west) -- node[above, lbl, pos=0.4] {Exogenous Drifts} (x_true.east);
\draw[stateflow, dashed, color=orange!50!blue, line width=1pt] (x_perc.east) -- node[above, lbl, pos=0.3, sloped] {Intelligence Cost Investment ($u_{i,t}$)} (e_perc.west);
\draw[stateflow] (s_regime.west) to[out=160, in=-20] node[below, lbl, pos=0.5, sloped] {Baseline Drifts ($b_i, b_E$)} (x_true.east);

% Push forward to Wasserstein Space
\draw[feedbackflow] (x_true.west) to[out=180, in=180] node[left, lbl, color=purple!90] {Aggregation of\\True Profiles} (mu_measure.west);
\draw[feedbackflow] (e_true.east) to[out=0, in=0] node[right, lbl, color=purple!90] {Systemic Law\\Integration} (mu_measure.east);

% Non-linear Expectile Feedback Loop (The Core of MFTG)
\draw[feedbackflow] (mu_measure.north) -- node[right, lbl, color=purple!90, yshift=0.3cm] {Asymmetric Tail Risk Assessment\\Non-Linear Expectiles $e_{\alpha_i}(\mu_t^g), e_{\beta_i}(\mu_t^g)$} (x_perc.south);

% Intergenerational Epoch Transition Overlay
% I%
\node[draw=purple!40, fill=purple!2, rectangle, rounded corners, minimum width=3.2cm, align=center, right=1.2cm of mu_measure, yshift=-1.3cm] (intergen) {
    \textbf{Hereditary Operator $\mathcal{T}_{\boldsymbol{\zeta}}$}\\[2pt] 
    Maps $\mu_{T_{g+1}-}^g \to \mu_{T_{g+1}}^{g+1}$
};
%\node[draw=purple!40, fill=purple!2, rectangle, rounded corners, minimum width=3.2cm, right=1.2cm of mu_measure, yshift=-1.3cm] (intergen) {\textbf{Hereditary Operator $\mathcal{T}_{\boldsymbol{\zeta}}$}\\[2pt] Maps $\mu_{T_{g+1}-}^g \to \mu_{T_{g+1}}^{g+1}$};

\draw[stateflow, color=purple!90, line width=1.2pt] (mu_measure.south) to[out=-25, in=180] node[below, lbl, color=purple!90, xshift=-0.4cm] {Generational Reset Boundary} (intergen.west);

\end{tikzpicture}
}
\caption{The Structural State-Space Hierarchy and Functional Topology of the Intergenerational Volterra MFTG Framework. The left stratum decomposes the micro-level agent state into a filter, mapping the divergence between the latent physical states and subjective operational perceptions. The right stratum captures the dynamic environmental co-evolution and qualitative regimes. These variables are aggregated into the Wasserstein manifold at the base, which feeds back asymmetric tail-risk expectiles directly into individual belief systems to govern cross-generational optimization.}
\label{fig:mftg_state_space}
\end{figure}

\subsection{Information Frictions and the Topology of MFTG State Strata}
\label{subsec:state_space_topology}

The  coupling between individual decision-making,  uncertainty, and aggregate macro-dynamics is formalized within our structural framework and visually contextualized in the state-space hierarchy of Figure~\ref{fig:mftg_state_space}. Traditional game-theoretic approaches typically enforce complete information transparency or basic additive noise over state trajectories, assuming that players maintain immediate access to the true states of the world. In the highly opaque, rumor-driven context of Sahelian conflict, this assumption introduces severe modeling misspecifications.  As mapped out in the left column of Figure~\ref{fig:mftg_state_space}, our model replaces these simplified assumptions with a  \textbf{Filter}. The unobserved true state vector $X_{i,t}^g \in \mathbb{R}^d$ represents the physical baseline of radicalization, liquid assets, and human capital. This latent variable passes through successive layers of measurement corruption and tactical filtering by local commanders, finally yielding the subjective perception vector $X_{i,t}^p$. This perception serves as the exclusive, restricted informational base upon which agents optimize their continuous controls $u_{i,t}^g$.
This micro-level  filter operates in constant conversation with the multi-dimensional \textbf{Macro-Environmental Stratum}, shown in the right column of Figure~\ref{fig:mftg_state_space}. The environment co-evolves with the population, linking global commodity variations (the true environment $E_t^g$) with the agents' localized, subsidized intelligence gatherers ($E_{i,t}^p$). These cross-layer interactions are further modulated by the discrete macro-political regime chain $s_t^g$, which acts as an overarching common driver altering the structural drift coefficients ($b_i, b_E$) across all equations. The core game-theoretic closure of the model occurs at the base of Figure~\ref{fig:mftg_state_space}. The continuous individual state profiles and discrete coalitional configurations are  integrated to define the true joint probability law $\mu_t^g$ on the  Wasserstein space $\mathcal{P}_2(\mathcal{Z})$. By routing interactions through this full distribution rather than simple population averages, our framework avoids the information loss that cripples classical mean-field approximations. 
As illustrated by the upward purple feedback vector, the global shape of this joint distribution yields the non-linear tail-risk parameters, the $\alpha_i$- and $\beta_i$-expectiles, which are fed back directly into the agent's subjective perception layer. Decisions are thus endogenously linked to tail sensitivity. At the epoch interface, the joint measure maps onto the generational boundary, where the non-local hereditary push-forward operator $\mathcal{T}_{\boldsymbol{\zeta}}$ transforms the terminal measure into the initial conditions of the next cohort. This closed-loop configuration provides the  foundation necessary to  evaluate the structural conditions of the generational war trap.

%%%%%%%%%%%%%%%%%%%%%%%%%%%%%%%%%%%%%%%%%%%%%%%%
\section{Equilibrium Characterisation }
\label{sec:equilibrium}

In this section, we provide a  characterisation of the Nash equilibrium for the intergenerational MFTG. Because the strategic interactions are mediated directly through the exact joint probability law $\mu_t^g$, the state space is inherently infinite-dimensional. Rather than pursuing a dynamic programming approach on the Wasserstein manifold, which is severely complicated by the non-Markovian Volterra revenge operator $\mathcal{K}_{\mathrm{rv}}^{i,g}(t)$ and the multi-layered information structures, we employ variational analysis. 

We derive a system of forward-backward stochastic Volterra differential equations with structural regime-switching and establish the exact trans-generational boundary conditions that link successive generations.

An essential feature of our model is the comprehensive noise structure detailed in Section~\ref{sec:prelim}, which includes fractional Brownian motions $B^{H_1}, B^{\frac{1+H_2}{2}}$, Gauss-Volterra processes $G$, and Rosenblatt processes $R^{H_2}$. Because these processes exhibit long memory and non-Gaussian characteristics, standard stochastic calculus does not apply to them directly. 

As specified in equations \eqref{eq:true_dyn} and \eqref{eq:E_true_dyn}, the diffusion coefficients $\sigma_i^{H_1}, \sigma_i^{H_2}, \sigma_i^{gv}, \sigma_i^{R}$ and their environmental counterparts are independent of the agents' controls $(u_{i,t}^g, v_{i,t}^g)$. As a consequence, these long-memory noises enter the system as exogenous stochastic perturbations. Under the variational approach, when we take the directional derivative of the state trajectory with respect to a control perturbation, the terms involving these non-standard noises vanish in the linearized dynamics. This decoupling allows us to formulate the adjoint equations using standard Itô and jump-diffusion adjoint structures relative to the Brownian motions, Poisson measures, and regime-switching chains, while treating the long-memory trajectories as path-dependent exogenous drifts.

Because the costs and switching intensities depend on the $\alpha_i$- and $\beta_i$-expectiles of the joint distribution $\mu_t^g$, we must compute the sensitivity of $e_\alpha(\mu)$ with respect to variations in the measure. We utilize the notion of the G\^ateaux derivative on the Wasserstein space $\mathcal{P}_2$.
From the implicit definition of the expectile in \eqref{eq:expectile_def}, let $F(m, \mu) = \alpha \int (z-m)_+ \, d\mu(z) - (1-\alpha) \int (m-z)_+ \, d\mu(z) = 0$. By the implicit function theorem, the derivative of the expectile map $\mu \mapsto e_\alpha(\mu)$ at a point $\mathbf{z}'$ in the extended state space is given by:
\begin{equation}
\partial_{\mu} e_\alpha(\mu)(\mathbf{z}') = \frac{\alpha (\mathbf{z}' - e_\alpha(\mu))_+ - (1-\alpha) (e_\alpha(\mu) - \mathbf{z}')_+}{\alpha \Pee_\mu(Y > e_\alpha(\mu)) + (1-\alpha) \Pee_\mu(Y \le e_\alpha(\mu))}.
\label{eq:lions_expectile}
\end{equation}
To simplify notation in the subsequent variational equations, we define the normalized asymmetric weight function $\omega_\alpha(\mathbf{z}', \mu)$ as:
\begin{equation}
\omega_\alpha(\mathbf{z}', \mu) \coloneqq \frac{\alpha \mathbbm{1}_{\{\mathbf{z}' > e_\alpha(\mu)\}} + (1-\alpha) \mathbbm{1}_{\{\mathbf{z}' \le e_\alpha(\mu)\}}}{\alpha \Pee_\mu(Y > e_\alpha(\mu)) + (1-\alpha) \Pee_\mu(Y \le e_\alpha(\mu))}.
\label{eq:omega_weight}
\end{equation}
Thus, $\partial_{\mu} e_\alpha(\mu)(\mathbf{z}') = \omega_\alpha(\mathbf{z}', \mu) (\mathbf{z}' - e_\alpha(\mu))$.

\subsection{Systems driven by Rosenblatt noise  }

Before stating the core change-of-variable formulas, we formalize the foundational path operators and analytical regularity criteria governing the non-Markovian memory loops, Rosenblatt processes, and fractional fields.

\begin{definition}[Liouville Fractional Integral and Derivation Operators]
Let $\alpha \in (0,1)$. The left-sided Riemann--Liouville fractional integral operator $I^{\alpha}_{+}$ acting on a function $f: \R \to \R$ is defined pathwise by:
\begin{equation}
(I^{\alpha}_{+}f)(x) := \frac{1}{\Gamma(\alpha)}\int_{-\infty}^{x} f(u) (x-u)^{\alpha-1}\,du,
\end{equation}
where $\Gamma(z) = \int_0^\infty t^{z-1} e^{-t}\,dt$ is the standard Euler Gamma function. Correspondingly, for the mixed-space two-dimensional tensor mapping $f: \R^2 \to \R$, the product operator $(I^{\alpha_1,\alpha_2}_{+,+}f)(x_1,x_2)$ is defined as:
\begin{equation}
(I^{\alpha_1,\alpha_2}_{+,+}f)(x_1,x_2) := \frac{1}{\Gamma(\alpha_1)\Gamma(\alpha_2)}\int_{-\infty}^{x_1}\int_{-\infty}^{x_2} f(u,v) (x_1-u)^{\alpha_1-1}(x_2-v)^{\alpha_2-1}\, du\,dv.
\end{equation}
The specialized analytical non-local derivative operators $\nabla^{\alpha}$ and $\nabla^{\alpha,\alpha}$ utilized throughout this paper are defined as the composition of the fractional integral blocks with the first and second standard distributional derivative operators $D$ and $D^2$, such that:
\begin{equation}
\nabla^{\alpha} := I^{\alpha}_{+}\circ D, \qquad \nabla^{\alpha,\alpha} := I^{\alpha,\alpha}_{+,+}\circ D^2.
\end{equation}
\end{definition}

To isolate the infinitesimal generators and protect the continuous-time jump-diffusion paths from explosion, we impose the following mathematical constraints on the system's primitives.

\begin{assumption}[Filtration and Domain Regularity]
\label{ass:filtration}
The probability space $(\Omega, \Fcal, \mathbb{F}=(\Fcal_t)_{t \ge 0}, \Pee)$ is complete, and the filtration $\mathbb{F}$ satisfies the standard conditions (i.e., it is right-continuous and $\Fcal_0$ contains all $\Pee$-null sets). The target macro-transformation function $f(t, s, \Xbf, \Cbf, E)$ satisfies:
\begin{enumerate}
    \item $f(\cdot, s, \cdot, \Cbf, \cdot) \in \mathcal{C}^{1,2,3}([0,T] \times \R^{N \times d} \times \R^{d_E})$ for each fixed discrete macro-regime $s \in \mathcal{S}$ and discrete coalitional alignment configuration $\Cbf \in \{1,\dots,K\}^N$.
    \item The third-order continuous spatial derivative satisfies the following non-linear growth bound:
    \begin{equation}
    |\nabla^3_{\Xbf} f(t, s, \Xbf, \Cbf, E)| \le \mathcal{C}_T \left(1 + \|\Xbf\|^{\kappa} + \|E\|^{\kappa}\right), \quad \kappa \ge 1.
    \end{equation}
\end{enumerate}
\end{assumption}

\begin{assumption}[Lipschitz and Sub-Linear Growth of Functional Fields]
\label{ass:lipschitz}
Let $\mathcal{Z} = \R^{N \times d} \times \{1,\dots,K\}^N \times \mathbb{R}^{d_E}$. All continuous drift vectors $b_i, b_E$, diffusion matrices $\sigma_i, \sigma_{0,i}, \sigma_E^w$, and fractional/rough path volatility modulators $\sigma_i^{H_1}, \sigma_i^{H_2}, \sigma_i^R, \sigma_E^{fbm,H_1}, \sigma_E^{fbm,H_2}, \sigma_E^R$ satisfy a global Lipschitz continuity condition and sub-linear growth bounds uniformly in $t \in [0,T]$ and across all regimes $s \in \mathcal{S}$. For any generic coefficient $\Phi(t, s, \Zbf, \ubf, \mu)$, we require:
\begin{equation}
\|\Phi(t, s, \Zbf, \ubf, \mu) - \Phi(t, s, \Zbf', \ubf', \mu')\| \le L_{\Phi} \left( \|\Zbf - \Zbf'\| + \|\ubf - \ubf'\| + \mathbb{W}_2(\mu, \mu') \right),
\end{equation}
where $\mathbb{W}_2$ is the 2-Wasserstein metric defined on the infinite-dimensional probability manifold $\mathcal{P}_2(\mathcal{Z})$. This condition guarantees the strong uniqueness and pathwise existence of solutions for the driving McKean--Vlasov--Volterra systems.
\end{assumption}

\begin{assumption}[Integrability of Non-Local Memory Kernels]
\label{ass:integrability}
The hereditary Volterra revenge operator $\mathcal{K}_{\mathrm{rv}}^{i,g}(t) = \int_0^t (t-\tau)^{-\gamma} \psi(\Zbf_{\tau-}^g)\,d\tau$ uses a singular memory kernel parameter $\gamma \in (0, 1-H_2)$. The linear operator mappings $d_1, d_2, d_3$ introduced within the fractional sub-systems are required to have continuous sample paths and belong to the specified functional spaces:
\begin{equation}
d_1 \in L^1([0,T]), \qquad d_2 \in L^{\frac{2}{1+H_2}}([0,T]), \qquad d_3 \in L^{\frac{1}{H_2}}([0,T]).
\end{equation}
\end{assumption}

\begin{assumption}[L{\'e}vy Measure Constraints for Large and Discontinuous Jumps]
\label{ass:levy}
The independent Poisson random measures $\tilde{N}_i, \tilde{N}_0, \tilde{N}_E$ and Gamma-distribution point paths possess valid L{\'e}vy characteristics such that their compensator densities satisfy:
\begin{equation}
\int_{\mathcal{Z}} \left( \|z\|^2 \wedge \|z\| \right) \nu_i(dz) < \infty, \quad \int_{\mathcal{Z}_0} \left( \|z_0\|^2 \wedge \|z_0\| \right) \nu_0(dz_0) < \infty, \quad \int_0^\infty \left( z^2 \wedge z \right) \nu_G(dz) < \infty.
\end{equation}
This formulation preserves the standard square-integrable martingale infrastructure for all compensated jump variables.
\end{assumption}

% =====================================================================
% LEMMA 1: ROSENBLATT CALCULUS PRESERVATION
% =====================================================================

\begin{lemma}[It\^o Formula for a Rosenblatt-Fractional Noise Pair]
\label{lem:rosenblatt_ito}
Let $H \in (1/2, 1)$ and define the scalar constants:
\begin{equation}
c = C_R^H \Gamma^2\!\left(\frac{H}{2}\right),\qquad \tilde c = \sqrt{\frac{(2H-1)\Gamma(1-\frac{H}{2})\Gamma(\frac{H}{2})}{(H+1)\Gamma(1-H)}}.
\end{equation}
Suppose the continuous real-valued scalar state variable $x(t)$ satisfies the stochastic evolution equation:
\begin{equation}
x(t) = x_0 + \int_0^t d_1(s)\,ds + 2\tilde c\int_0^t d_2(s)\,dB^{\frac{H+1}{2}}(s) + \int_0^t d_3(s)\,dR^H(s),
\end{equation}
where $B^{(H+1)/2}$ is a standard fractional Brownian motion with Hurst parameter $H_{\mathrm{fbm}} = (H+1)/2$ (possessing zero quadratic variation paths), and $R^H$ is the coupled non-Gaussian Rosenblatt process. Then, for any macro-transformation operator $f \in C^{1,3}([0,T] \times \R)$, the transformed process $y(t) = f(t, x(t))$ admits the following representation:
\begin{equation}
y(t) = f(t,x(t)) = y_0 + \int_0^t \tilde d_1(s)\,ds + 2\tilde c\int_0^t \tilde d_2(s)\,dB^{\frac{H+1}{2}}(s) + \int_0^t \tilde d_3(s)\,dR^H(s),
\end{equation}
where the transformed functional line loadings are given by:
\begin{align*}
\tilde d_1 &= f_t + f_x d_1 + 2c\, f_{xx}\, (\nabla^{H/2}x)\, d_2 + c\, f_{xx}\, (\nabla^{H/2,H/2}x)\, d_3 + c\, f_{xxx}\, [\nabla^{H/2}x]^2 d_3, \\
\tilde d_2 &= f_x d_2 + f_{xx}\, (\nabla^{H/2}x)\, d_3, \\
\tilde d_3 &= f_x d_3.
\end{align*}
Here, the non-local tracking derivative operators are defined via pathwise integration as $\nabla^{H/2} = I_+^{H/2} \circ D$ and $\nabla^{H/2,H/2} = I_+^{H/2,H/2} \circ D^2$.
\end{lemma}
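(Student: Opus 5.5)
We work within the Wiener-chaos description of both drivers. The Rosenblatt process \eqref{eq:rosenblatt} lives in the second chaos, while $B^{\frac{H+1}{2}}$ admits a first-chaos representation with kernel $h_1(u,y)=(u-y)_+^{\frac{H}{2}-1}$, up to normalisation. With $\tilde c$ chosen as in the statement, $2\tilde c\,dB^{\frac{H+1}{2}}$ is exactly the first-order companion of $dR^H$ built on the same underlying Wiener process $W$. The plan is to prove the formula first for simple integrands and smooth $f$, then pass to the limit.

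\textbf{Step 1 (pathwise Taylor expansion).} We partition $[0,t]$ as $0=t_0<\dots<t_n=t$ and write $f(t,x(t))-f(0,x_0)$ as a telescoping sum. Each increment is expanded to third order in $\Delta x_k=x(t_{k+1})-x(t_k)$, which gives the terms $f_t\Delta t+f_x\Delta x+\tfrac12 f_{xx}(\Delta x)^2+\tfrac16 f_{xxx}(\Delta x)^3$ plus a remainder. Since $x$ has Hölder regularity of order $>1/2$ (both drivers have Hurst index $>1/2$), the remainder is $o(|\Delta|)$ summed over the partition and vanishes as the mesh goes to $0$. Assumption~\ref{ass:filtration} makes $f_{xxx}$ of polynomial growth, so all moments are controlled. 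The first-order terms immediately produce $f_t+f_xd_1$, $f_xd_2$ and $f_xd_3=\tilde d_3$.

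\textbf{Step 2 (chaos corrections).} Here the integrals are understood in the divergence (Skorokhod) sense relative to $W$, which is what makes the correction terms appear. We rewrite the products $f_x(x(t_k))\,\delta B$ and $f_x(x(t_k))\,\delta R$ as divergences using the product rule $F\,\delta(u)=\delta(Fu)+\langle DF,u\rangle$, applied once for the first chaos and twice for the second chaos. The Malliavin derivative of $x$, paired against the kernel $h_1$, is the fractional object $\nabla^{H/2}x=I^{H/2}_+\circ Dx$. The double pairing against $h_2$ gives $\nabla^{H/2,H/2}x$. Concretely:
\begin{itemize}
\item One contraction of the $dR^H$ term produces the new $dB^{\frac{H+1}{2}}$ loading $f_{xx}(\nabla^{H/2}x)d_3$, which completes $\tilde d_2$.
\item Full contraction of the $dB$ term produces the drift $2c\,f_{xx}(\nabla^{H/2}x)d_2$.
\item The double contraction of the $dR$ term produces two drift pieces. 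One is $c\,f_{xx}\nabla^{H/2,H/2}x\,d_3$, where both derivatives fall on the same $x$. The other is $c\,f_{xxx}[\nabla^{H/2}x]^2d_3$, where each derivative falls on a different factor through the chain rule.
\end{itemize}
The constant $c=C_R^H\Gamma^2(H/2)$ arises from the Riemann–Liouville normalisation $\Gamma(H/2)$ per contraction.

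\textbf{Step 3 (vanishing of the genuine quadratic terms).} We must show that the Riemann sums $\sum f_{xx}(\Delta x)^2$ and $\sum f_{xxx}(\Delta x)^3$ contribute nothing beyond the contractions already extracted. This holds because $B^{\frac{H+1}{2}}$ and $R^H$ have zero quadratic variation. The integrability conditions on $d_1,d_2,d_3$ in Assumption~\ref{ass:integrability} ensure the Hölder/Young estimates hold, and also that the fractional derivatives $\nabla^{H/2}x$ and $\nabla^{H/2,H/2}x$ are well defined and integrable against $d_2$ and $d_3$. Finally, we let the mesh go to $0$ in $L^2(\Omega)$ and extend from simple to general $d_i$ by density.

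\textbf{Main obstacle.} The hardest part is Step 2: performing the double contraction for the second-chaos Rosenblatt integral correctly. The cross term must produce exactly $f_{xxx}[\nabla^{H/2}x]^2$ with coefficient $c$, and not $2c$. This requires careful symmetrisation of $h_2$ and matching of the constants $C_R^H$ and $\tilde c$. We would take this bookkeeping directly from the Rosenblatt Itô formula in \cite{Coupep2022}, verifying only the normalisations.
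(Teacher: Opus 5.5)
The paper does not prove this lemma; it only points to \cite{Coupep2022} in Section~\ref{sec:prelim}. Your route supplies the actual mechanism, and it is correct. You take Riemann sums, show that the second- and third-order sums vanish, and convert the first-order sums exactly into divergences via $F\,\delta(g)=\delta(Fg)+\langle DF,g\rangle$ and $F\,\delta^2(g)=\delta^2(Fg)+2\,\delta(\langle DF,g\rangle)+\langle D^2F,g\rangle$; only the domain and convergence estimates are borrowed from the same source. What this buys over the bare citation is a precise meaning for the statement, which the paper leaves ambiguous. First, $D$ in $\nabla^{H/2}$ has to be the Malliavin derivative. Second, the correction terms exist only because the $dB^{\frac{H+1}{2}}$ and $dR^H$ integrals in the conclusion are Skorokhod integrals. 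Under the pathwise Riemann--Stieltjes convention announced in Section~\ref{sec:fourlayer}, the same Riemann sums give the classical chain rule with no corrections at all. Three refinements are worth making. (i) Differentiating only $f_x(x)$ in Step~2 is right for deterministic $d_2,d_3$, which is the setting of the cited result and of the $L^p$ conditions in Assumption~\ref{ass:integrability}. For random integrands, extra terms $f_x\langle Dd_i,\cdot\rangle$ appear; they cancel only if the hypothesis integrals are also Skorokhod and you track the matching trace terms in $\Delta x_k$. (ii) Under those $L^p$ conditions alone, $x$ need not be H\"older of order $>1/2$. Replace the pathwise remainder argument by $\sum_k\E[(\Delta x_k)^2]\to0$, together with hypercontractivity on chaoses of order at most two. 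For example, $\sum_k\|d_3\mathbf 1_{[t_k,t_{k+1}]}\|_{L^{1/H}}^2\to0$ because $2H>1$. (iii) Your \emph{main obstacle} is not actually an obstacle. The factor $2$ in the second-chaos product formula sits on the single contraction, which gives the new $dB$-loading. The term $\langle D^2F,g\rangle$ has coefficient one, and $D^2f_x(x)=f_{xxx}\,Dx\otimes Dx+f_{xx}\,D^2x$ carries no factor $2$. Since each contraction against $(u-y)_+^{H/2-1}$ produces $\Gamma(H/2)\,I^{H/2}_+$, both drift pieces carry $C_R^H\Gamma(H/2)^2=c$. The normalisations also close explicitly. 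Write $B^{\frac{H+1}{2}}_t=c_B\int_{\R}\bigl(\int_0^t(u-y)_+^{H/2-1}du\bigr)dW(y)$ with $c_B^2=H(H+1)/\bigl(2\mathrm{B}(\tfrac H2,1-H)\bigr)$, and take $(C_R^H)^2=H(2H-1)/\bigl(2\mathrm{B}(\tfrac H2,1-H)^2\bigr)$. This is the value forced by $\E[R^H(1)^2]=1$, and it is not the expression displayed in Section~\ref{sec:prelim}. Then both the $\tilde d_2$ loading and the drift $2c\,f_{xx}(\nabla^{H/2}x)\,d_2$ require $\tilde c=C_R^H\Gamma(H/2)/c_B$, which is exactly the $\tilde c$ in the statement.
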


% =====================================================================
% LEMMA 2: REGULAR FBM STEPPING
% =====================================================================
\begin{lemma}[It\^o Formula for a Regular Fractional Brownian Motion Field]
\label{lem:fbm_ito}
Let $H_1 \in (1/2, 1)$ and let the continuous state variable $x(t)$ be modeled as a stochastic integral driven by a regular fractional Brownian motion $B^{H_1}$:
\begin{equation}
x(t) = x_0 + \int_0^t d_1(s)\,ds + \int_0^t d_2(s)\,dB^{H_1}(s).
\end{equation}
Then, for any macro-scaling function $f \in \mathcal{C}^{1,2}([0,T] \times \R)$, the pathwise transformation is given by:
\begin{equation}
f(t,x(t)) = f(0,x_0) + \int_0^t \bigl( f_t + f_x d_1 + H_1 s^{2H_1-1} f_{xx} d_2^2 \bigr)\,ds + \int_0^t f_x d_2\,dB^{H_1}(s).
\end{equation}
For a general, state-dependent non-linear volatility coefficient $d_2(t, x(t))$, the deterministic fractional drift loading maps onto the following memory-convolution structure:
\begin{equation}
\mathcal{L}_{\mathrm{fbm\_drift}} = H_1 s^{2H_1-1} f_{xx} \bigl( d_2 \cdot (\nabla^{H_1/2}d_2) \bigr).
\end{equation}
\end{lemma}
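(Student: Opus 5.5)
The plan is to read the stochastic integral $\int_0^t d_2(s)\,dB^{H_1}(s)$ as the \emph{divergence} (Wick--Skorokhod) integral with respect to $B^{H_1}$, not as a pathwise Riemann--Stieltjes integral. This choice is forced by the statement. For $H_1>1/2$ the pathwise integral obeys the classical chain rule, since $B^{H_1}$ has zero quadratic variation, so no second-order drift would appear at all. The correction $H_1 s^{2H_1-1}f_{xx}d_2^2$ is precisely the trace term of the Malliavin-calculus Itô formula for the divergence integral. I will therefore first recall the isonormal Gaussian structure of $B^{H_1}$ on $[0,T]$. The reproducing kernel is
\[
\phi(s,r)=H_1(2H_1-1)|s-r|^{2H_1-2},
\]
the Malliavin derivative is $D^{\phi}_s F=\int_0^T\phi(s,r)D_rF\,dr$, and $\delta$ denotes the adjoint (divergence) operator.

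\textbf{Step 1 (deterministic loading).} Treat first the case of deterministic $d_1$ and $d_2$ with $d_2\in L^{1/H_1}([0,T])$, as in Assumption~\ref{ass:integrability}. Then $x(t)$ is Gaussian and $D^{\phi}_s x(t)=\int_0^t\phi(s,r)d_2(r)\,dr$. The general Itô formula for divergence integrals (Al\`os--Mazet--Nualart; equivalently Bender's Wick-Itô formula) gives
\begin{equation*}
f(t,x(t))=f(0,x_0)+\int_0^t\bigl(f_t+f_xd_1\bigr)ds+\int_0^t f_x d_2\,\delta B^{H_1}(s)+\int_0^t f_{xx}\,d_2(s)\,D^{\phi}_s x(s)\,ds .
\end{equation*}
I will prove this by the standard route. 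One uses a Taylor expansion of $f$ along a partition and rewrites each product $f_x(x(t_k))\,\Delta_k x$ via the duality relation $F\,\delta(u)=\delta(Fu)+\langle DF,u\rangle_{\mathcal H}$. One then passes to the limit using the $C^{1,2}$ regularity of $f$ and the Gaussian moment bounds on $x$. The inner-product terms are exactly what produce the trace integral.

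\textbf{Step 2 (identifying the kernel).} Evaluate $D^{\phi}_s x(s)=\int_0^s\phi(s,r)d_2(r)\,dr$. When $d_2$ is constant this equals $d_2\,H_1(2H_1-1)\int_0^s(s-r)^{2H_1-2}dr=d_2\,H_1 s^{2H_1-1}$. Substituting gives the drift $H_1s^{2H_1-1}f_{xx}d_2^2$ of the first display. For non-constant $d_2$, the quantity $\int_0^s\phi(s,r)d_2(r)\,dr$ is, up to normalisation, a Riemann--Liouville-type fractional integral of $d_2$ against the power kernel. I will rewrite it with the operators $I^{\alpha}_+$ and $\nabla^{\alpha}=I^{\alpha}_+\circ D$ from the preceding definition. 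This yields the memory-convolution form $\mathcal L_{\mathrm{fbm\_drift}}$, with the factor $H_1s^{2H_1-1}$ retained as the normalised diagonal weight and the remainder absorbed into $\nabla^{H_1/2}d_2$.

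\textbf{Step 3 (state-dependent loading).} For $d_2(t,x(t))$ I will use the same divergence-type Itô formula. The trace term is now $f_{xx}d_2\,D^{\phi}_s x(s)$, where $D^{\phi}_s x(s)$ carries an extra contribution $\int_0^s\phi(s,r)\,\partial_x d_2\,D_r x\,dr$. I would handle this with a Picard/Gronwall argument on the Malliavin derivative, which is justified by the Lipschitz bounds of Assumption~\ref{ass:lipschitz}.

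\textbf{Main obstacle.} The delicate point is this last step: rigorously matching the resulting nonlocal kernel to the stated compact expression $d_2\cdot(\nabla^{H_1/2}d_2)$. An exact identification will likely require an interpretation convention, namely the precise normalisation of $\nabla^{H_1/2}$ and the neglect of higher-order Malliavin corrections. I would therefore state the deterministic/constant case as the exact result and present the general case as the leading-order memory structure under these conventions. Verifying membership of $d_2$ in the domain of $\delta$ and the $L^{1/H_1}$ integrability is routine by comparison.
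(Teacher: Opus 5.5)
The paper states this lemma without proof, so the question is whether your argument establishes it. It cannot, because the statement is false beyond the constant-coefficient case, and Step 2 asserts an identification that does not hold.

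Your reinterpretation is correct. For $H_1>1/2$ the pathwise (Young) integral obeys the classical chain rule, so an $f_{xx}$ term can only be the trace term of the divergence-integral It\^o formula. Step 1 together with your constant-$d_2$ computation does recover $H_1s^{2H_1-1}f_{xx}d_2^2$.

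For general $d_2$, however, your trace term is $f_{xx}\,d_2(s)\int_0^s\phi(s,r)d_2(r)\,dr$. Here $\int_0^s\phi(s,r)d_2(r)\,dr=H_1(2H_1-1)\int_0^s(s-r)^{2H_1-2}d_2(r)\,dr$ is $H_1\Gamma(2H_1)$ times the Riemann--Liouville \emph{integral} of order $2H_1-1$ of $d_2$. By contrast, $\nabla^{H_1/2}=I^{H_1/2}_+\circ D$ is a fractional \emph{derivative} of order $1-H_1/2$. No normalisation reconciles them, and constant $d_2\equiv c$ already decides it:
the correct term is $H_1s^{2H_1-1}f_{xx}c^2\neq0$, while $\nabla^{H_1/2}c=0$ if $c$ is extended as a constant to $(-\infty,0)$, as the lower limit $-\infty$ in $I^{H_1/2}_+$ requires. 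If instead $c$ is extended by zero, $\nabla^{H_1/2}c=c\,s^{H_1/2-1}/\Gamma(H_1/2)$, which also does not match. So the second display is not an identity even in the deterministic constant case. The discrepancy is not a ``higher-order Malliavin correction'' that a convention could absorb.

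The first display also fails for non-constant deterministic $d_2$. Your plan leaves this implicit when it calls the ``deterministic/constant case'' exact. Take $d_1=0$, $d_2(s)=s$, $f(x)=x^2$. Then $\int_0^s\phi(s,r)\,r\,dr=s^{2H_1}/2$, so the true correction is $\tfrac12 s^{2H_1+1}f_{xx}$ and $\mathbb{E}[x(t)^2]=t^{2H_1+2}/(2H_1+2)$. This matches the Wiener-integral variance $H_1(2H_1-1)\int_0^t\!\int_0^t|u-v|^{2H_1-2}uv\,du\,dv$. The stated formula instead gives $2H_1\,t^{2H_1+2}/(2H_1+2)$, off by the factor $2H_1\neq1$.

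Hence the most that can be proved is your Step 1 formula with drift $\int_0^t f_{xx}\,d_2(s)\,D^{\phi}_s x(s)\,ds$. It reduces to the stated one only for constant $d_2$, so the lemma needs to be corrected, not proved.

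In Step 3, the additional part of $D^{\phi}_s x(s)$ is not a $dr$-integral. By the commutation $D\delta(u)=u+\delta(Du)$, it is the divergence integral $\int_0^s\partial_x d_2(r,x(r))\,D^{\phi}_s x(r)\,\delta B^{H_1}(r)$. A Picard/Gronwall scheme for it does not close under Lipschitz bounds alone, since each divergence integral requires control of a further Malliavin derivative. In any case it cannot produce $d_2\cdot\nabla^{H_1/2}d_2$.
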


% =====================================================================
% THEOREM: MAIN UNIFIED MAXIMAL RIGOR CHANGE OF VARIABLES
% =====================================================================

\begin{theorem}[Change-of-Variable Formula with Intergenerational Memory Noises]
\label{thm:complete_rigor_all}

\noindent\textbf{System Configuration and Driving Noise Structures.}\\
Let $(\Omega,\Fcal,\mathbb{F}=(\Fcal_t)_{t \ge 0},\Pee)$ be a complete filtered probability space satisfying the usual conditions (Assumption~\ref{ass:filtration}). We assume the existence of a mutually independent family of non-Markovian stochastic processes, distributed across multiple interaction channels as follows:
\begin{itemize}
  \item \textit{Standard Brownian Vectors:} Independent Brownian motions $W_{i,t}\in\R^d$ (individual agent state deviations), $W_t^0\in\R^d$ (systemic common noise), and $W_{E,t}\in\R^{d_E}$ (exogenous macro-environmental vectors).
  
  \item \textit{Regular Fractional Brownian Noise ($H_1 \in (1/2, 1)$):} The processes $B_i^{H_1}(t)$, $B^{H_1,0}(t)$, and $B_E^{H_1}(t)$ track long-range dependence. They have zero quadratic variation paths and inject a deterministic time-decay drift scaled by $H_1 \tau^{2H_1-1}$ into the continuous path equations.
  
  \item \textit{Fractional Gauss--Volterra Driver Components ($H_2 \in (1/2, 1)$):} The processes $B_i^{\frac{1+H_2}{2}}(t)$, $B^{\frac{1+H_2}{2},0}(t)$, and $B_E^{\frac{1+H_2}{2}}(t)$ possess a fractional parameter $H_{\mathrm{fbm}} = \frac{1+H_2}{2}$. They are pathwise cross-coupled to the corresponding Rosenblatt processes.
  
  \item \textit{Rosenblatt Process Array:} Non-Gaussian processes $R_i^{H_2}(t)$, $R^{H_2,0}(t)$, and $R_E^{H_2}(t)$ operate within the second Wiener chaos layer under a Hurst parameters framework governed by $H_2$.
  
  \item \textit{General Gauss--Volterra Multi-Memory Inversions:} The non-Markovian processes $G_i(t)$, $G_0(t)$, and $G_E(t)$ are defined via a continuous deterministic kernel $K(t,s)$. They exhibit a non-zero quadratic variation density profile defined as:
  \begin{equation}
  q(t) = \left(K(t,t) + \int_0^t \partial_t K(t,s)\,ds\right)^2.
  \end{equation}
  
  \item \textit{Endogenous Gamma Jumps:} Processes $\Gamma_i(t), \Gamma_0(t), \Gamma_E(t)$ possess infinitely divisible sample paths driven by the singular L{\'e}vy density $\nu_G(dz) = z^{-1}e^{-z}dz$.
  
  \item \textit{Compensated Poisson Measures:} Independent random measures $\tilde N_i(dt,dz)$, $\tilde N_0(dt,dz_0)$, and $\tilde N_E(dt,dz)$ operate on their respective Polish selection spaces $\mathcal{Z}, \mathcal{Z}_0, \mathcal{Z}_E$, balanced by their predictable compensator areas $\nu_i(dz)dt, \nu_0(dz_0)dt, \nu_E(dz)dt$.
\end{itemize}

\noindent\textbf{Coupled System Jump-Diffusion Dynamics.}\\
The structural vector coordinates composed of agent state profiles $X_{i,t}^g \in \R^d$, discrete coalitional alignments $C_{i,t}^g \in \{1,\dots,K\}$, macro-environmental conditions $E_t^g \in \R^{d_E}$, and discrete political regimes $s_t^g \in \{1,\dots,M\}$ are gathered into the unified state vector $\Zbf_t^g = (\Xbf_t^g, \Cbf_t^g, E_t^g)$. Their stochastic differential systems are given by:
\begin{align*}
dX_{i,t}^g &=
\Bigl[b_i + H^{i,g} + \mathcal{K}_{\mathrm{rv}}^{i,g}\Bigr]dt
+ \sigma_i dW_{i,t} + \int_{\mathcal{Z}}\sigma_i^n \,\tilde N_i(dt,dz)
+ \sigma_i^{H_1} dB_i^{H_1}(t) \\
&\quad + 2\tilde c\,\sigma_i^{H_2} dB_i^{\frac{1+H_2}{2}}(t) + \sigma_i^{R} dR_i^{H_2}(t) + \sigma_i^{gv} dG_i(t) \\
&\quad + \sigma_{0,i} dW_t^0 + \int_{\mathcal{Z}_0}\sigma_{0,i}^n \,\tilde N_0(dt,dz_0)
+ \sigma_{0,i}^{H_1} dB^{H_1,0}(t) \\
&\quad + 2\tilde c\,\sigma_{0,i}^{H_2} dB^{\frac{1+H_2}{2},0}(t) + \sigma_{0,i}^{R} dR^{H_2,0}(t)
+ \sigma_{0,i}^{gv} dG_0(t), \\
dE_t^g &=
b_E dt + \sigma_E^w dW_{E,t} + \int_{\mathcal{Z}_E}\gamma_E^n \,\tilde N_E(dt,dz)
+ \sigma_E^{fbm,H_1} dB_E^{H_1}(t) \\
&\quad + 2\tilde c\,\sigma_E^{fbm,H_2} dB_E^{\frac{1+H_2}{2}}(t) + \sigma_E^{R} dR_E^{H_2}(t)
+ \sigma_E^{gv} dG_E(t),
\end{align*}
where all functional coefficients are evaluated on the left-limit configurations $(t, s_{t-}^g, \Xbf_{t-}^g, \Cbf_{t-}^g, E_{t-}^g, \ubf_t, \mu_{t-}^g)$.

\noindent\textbf{ Assertion.}\\
Let $Y_t = f(t, s_t^g, \Xbf_t^g, \Cbf_t^g, E_t^g)$ be the macro-transformed target process. If Assumptions~\ref{ass:filtration}--\ref{ass:levy} hold, then $\Pee$-almost surely for all $t \in [0,T]$, the total change-of-variable decomposition is given by:
\begin{equation}
\begin{aligned}
Y_t &= Y_0 + \int_0^t \Bigl( \partial_\tau + \mathcal{L}_{\mathrm{drift}} + \mathcal{L}_{\mathrm{diff}}   \Bigr) f(\tau,s_{\tau-}^g,\Xbf_{\tau-}^g,\Cbf_{\tau-}^g,E_{\tau-}^g)\,d\tau \\[4pt]
& \quad + \int_0^t \Bigl( \mathcal{L}_{\mathrm{regFBM}} + \mathcal{L}_{\mathrm{Rosenblatt}} + \mathcal{L}_{\mathrm{GaussVolterra}}  \Bigr) f(\tau,s_{\tau-}^g,\Xbf_{\tau-}^g,\Cbf_{\tau-}^g,E_{\tau-}^g)\,d\tau \\[4pt]
&\quad + \int_0^t \Bigl(  \mathcal{L}_{\mathrm{Regime}} + \mathcal{L}_{\mathrm{Coalition}} + \mathcal{L}_{\mathrm{PoissonJump}} \Bigr) f(\tau,s_{\tau-}^g,\Xbf_{\tau-}^g,\Cbf_{\tau-}^g,E_{\tau-}^g)\,d\tau \\[4pt]
&\quad + \sum_{i=1}^N \int_0^t \nabla_{\Xbf_i} f \cdot \sigma_i^{H_1}(\tau)\,dB_i^{H_1}(\tau) + \int_0^t \nabla_E f \cdot \sigma_E^{fbm,H_1}(\tau)\,dB_E^{H_1}(\tau) \\[2pt]
&\quad + \sum_{i=1}^N \int_0^t \nabla_{\Xbf_i} f \cdot 2\tilde c\,\sigma_i^{H_2}(\tau)\,dB_i^{\frac{1+H_2}{2}}(\tau) + \int_0^t \nabla_E f \cdot 2\tilde c\,\sigma_E^{fbm,H_2}(\tau)\,dB_E^{\frac{1+H_2}{2}}(\tau) \\[2pt]
&\quad + \sum_{i=1}^N \int_0^t \nabla_{\Xbf_i} f \cdot \sigma_i^{R}(\tau)\,dR_i^{H_2}(\tau) + \int_0^t \nabla_E f \cdot \sigma_E^{R}(\tau)\,dR_E^{H_2}(\tau) \\[2pt]
&\quad + \sum_{i=1}^N \int_0^t \nabla_{\Xbf_i} f \cdot \sigma_i^{gv}(\tau)\,dG_i(\tau) + \int_0^t \nabla_E f \cdot \sigma_E^{gv}(\tau)\,dG_E(\tau) + \int_0^t \Bigl(\sum_{i=1}^N \nabla_{\Xbf_i} f \cdot \sigma_{0,i}^{gv}(\tau)\Bigr) dG_0(\tau) \\[2pt]
&\quad + \mathcal{M}_t^{\mathrm{total}},
\end{aligned}
\label{eq:master_ito_expansion}
\end{equation}
where the constitutive operational blocks are evaluated at the left-limit configurations and defined explicitly below.
\end{theorem}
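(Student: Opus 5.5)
The plan is to prove the decomposition \eqref{eq:master_ito_expansion} by localisation and superposition. Each driving noise is treated on its own with the appropriate one-dimensional change-of-variable rule, and the contributions are summed, using the mutual independence of the drivers to kill all cross-variation terms. First, fix the regime $s$ and the coalition configuration $\Cbf$, and work on a random interval between consecutive jump times of the finitely many point processes that can change them. These are the regime chain $s^g$, the coalition counting processes $N_{i}^{C,m\to m'}$, and the large jumps of $\tilde N_i,\tilde N_0,\tilde N_E$ together with the Gamma jumps; they have finitely many jumps on $[0,T]$ after truncating small jumps, thanks to Assumption~\ref{ass:levy}. On such an interval the map $(\Xbf,E)\mapsto f(t,s,\Xbf,\Cbf,E)$ is $\mathcal C^{1,2,3}$ by Assumption~\ref{ass:filtration}. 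The continuous semimartingale part of $(\Xbf,E)$ is driven by $dt$, the Brownian motions and the Gauss--Volterra processes. The remaining parts are pathwise (Young/Riemann--Stieltjes) contributions from $B^{H_1}$ and from the pair $(B^{\frac{1+H_2}{2}},R^{H_2})$.

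Second, write $f(t,\cdot)$ along these paths as a telescoping sum over a partition and Taylor expand to third order. The terms are then sorted as follows.
\begin{itemize}
\item The Brownian and $\tilde N$ parts give the classical It\^o--L\'evy terms $\mathcal L_{\mathrm{drift}}$, $\mathcal L_{\mathrm{diff}}$ and $\mathcal L_{\mathrm{PoissonJump}}$. Their martingale pieces go into $\mathcal M^{\mathrm{total}}_t$, which is square integrable by Assumptions~\ref{ass:lipschitz} and \ref{ass:levy}.
\item The Gauss--Volterra part is handled by viewing $G$ as a semimartingale with quadratic variation density $q(t)$, which gives $\mathcal L_{\mathrm{GaussVolterra}}$ and the $dG$ integrals.
\item The regular fBm part is handled by Lemma~\ref{lem:fbm_ito} applied coordinatewise, which gives $\mathcal L_{\mathrm{regFBM}}$.
\item The Rosenblatt--fBm pair is handled by Lemma~\ref{lem:rosenblatt_ito}, which gives $\mathcal L_{\mathrm{Rosenblatt}}$ through the $\tilde d_1,\tilde d_2,\tilde d_3$ loadings.
\end{itemize}
The multivariate versions follow by applying these lemmas with $f$ replaced by its partial derivatives along each coordinate direction. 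Independence then makes all mixed second-order terms between distinct drivers vanish. For two independent Brownian motions this is immediate. When a Young-type driver is involved, the mixed increments are of order $|\Delta|^{H_1+1/2}$ or smaller, so they vanish in the limit.

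Third, at each jump time of the regime or coalition process, add the increment $f(\tau,s',\Xbf,\Cbf',E)-f(\tau,s,\Xbf,\Cbf,E)$. This is compensated by $q_{s,s'}$ and by the intensities $\lambda^i_{m\to m'}$ of \eqref{eq:lambda_def}, which produces $\mathcal L_{\mathrm{Regime}}$ and $\mathcal L_{\mathrm{Coalition}}$; the compensated remainder is added to $\mathcal M^{\mathrm{total}}_t$. Summing over intervals and letting the truncation of small jumps go to zero yields \eqref{eq:master_ito_expansion}.

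\textbf{The main obstacle} is the Rosenblatt step in the multidimensional, state-dependent setting. Lemma~\ref{lem:rosenblatt_ito} is scalar and has coefficients $d_2,d_3$ that are not affected by jumps. Here the loadings depend on $\Xbf_{t-}$ and $\mu_{t-}$, and the nonlocal operators $\nabla^{H_2/2}$ and $\nabla^{H_2/2,H_2/2}$ integrate over the past path, which includes earlier regimes and jumps. My first attempt would be to restrict to the jump-free intervals. On these I would verify the integrability hypotheses $d_2\in L^{2/(1+H_2)}$ and $d_3\in L^{1/H_2}$ from Assumption~\ref{ass:integrability}, with growth controlled by $\kappa$ through moment bounds obtained from Assumption~\ref{ass:lipschitz}. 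I would then show that the memory terms are continuous across jumps in $L^1$. Singular-kernel estimates with $\gamma<1-H_2$ should make the piecewise formulas glue together. The third-derivative term $f_{xxx}[\nabla^{H/2}x]^2d_3$ is where the growth bound of Assumption~\ref{ass:filtration} is needed.
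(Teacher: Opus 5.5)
\textbf{The statement cannot be proved as written.} The identity is false, so no assembly of Lemmas~\ref{lem:fbm_ito} and~\ref{lem:rosenblatt_ito} can produce it. The paper gives no proof either: it only asserts that the operator blocks are ``lifted'' from those two lemmas, which is exactly your superposition route.

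Test it on the coordinate map $f(t,s,\Xbf,\Cbf,E)=x_{i,k}$, which satisfies Assumption~\ref{ass:filtration} trivially.
\begin{itemize}
\item Let the only nonzero coefficient be $\sigma_{0,i}^{H_1}\equiv 1$. Then $X_{i,t}=X_{i,0}+B^{H_1,0}(t)$, but the right-hand side of \eqref{eq:master_ito_expansion} equals $X_{i,0}$. No integral against the common drivers $B^{H_1,0}$, $B^{\frac{1+H_2}{2},0}$, $R^{H_2,0}$ appears anywhere; only $G_0$ is kept.
\item Let the only nonzero coefficient be $\sigma_i^{gv}\equiv1$, with $d=1$. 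Then $X_{i,t}=X_{i,0}+G_i(t)$, while the right-hand side is $X_{i,0}+G_i(t)+\int_0^t\!\int_0^\infty z\,\tilde N_i^G(d\tau,dz)=X_{i,0}+G_i(t)+\Gamma_i(t)-t$. This is because $\mathcal{M}_t^{\mathrm{total}}$ carries compensated jumps of Gamma processes that never enter the dynamics.
\item $\mathcal{L}_{\mathrm{Rosenblatt}}x_{i,k}=2c\,\Gamma_{i,k}d_{2,i,k}+c\,\Lambda_{i,k}d_{3,i,k}$ is a drift absent from the dynamics. Lemma~\ref{lem:rosenblatt_ito} itself gives $\tilde d_1=d_1$ for $f(x)=x$.
\end{itemize}
Indeed, for $N=d=1$, writing $\Gamma=\nabla^{H_2/2}x$ and $\Lambda=\nabla^{H_2/2,H_2/2}x$, the stated operator is $f_x(2c\Gamma d_2+c\Lambda d_3)+cf_{xx}\Gamma^2d_3+cf_{xxx}\Gamma^2d_3$. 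The lemma instead gives $f_{xx}(2c\Gamma d_2+c\Lambda d_3)+cf_{xxx}\Gamma^2d_3$. Also, the $f_{xx}\Gamma d_3$ part of $\tilde d_2$ is missing from the $dB^{\frac{1+H_2}{2}}$ integrand. So ``lifting the lemma'' does not give $\mathcal{L}_{\mathrm{Rosenblatt}}$, quite apart from the state-dependence issue you flag.

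\textbf{Your individual steps also mix incompatible calculi and would not produce the stated operators.}
\begin{itemize}
\item \emph{Fractional and Rosenblatt parts.} You integrate against $B^{H_1}$, $B^{\frac{1+H_2}{2}}$ and $R^{H_2}$ pathwise (Young). But the same order counting you use to discard mixed terms ($\sum_k|\Delta_k|^{\theta}\to0$ for $\theta>1$) also kills every second- and third-order Taylor term in those increments. The pathwise chain rule is therefore first order, and $\mathcal{L}_{\mathrm{regFBM}}$ and $\mathcal{L}_{\mathrm{Rosenblatt}}$ would vanish. The corrections in the two lemmas belong to divergence-type (Wick/Skorohod) integrals. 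For instance, $H_1s^{2H_1-1}f_{xx}d_2^2$ is the Wick--It\^o correction for constant $d_2$, e.g.\ $f(B^{H_1}_s)$, and fails for non-constant $d_2$ even in that calculus.
\item \emph{Gauss--Volterra part.} A Gauss--Volterra process is generally not a semimartingale; fBm is one such process. When it is, $dG_t=K(t,t)\,dW_t+\bigl(\int_0^t\partial_tK(t,s)\,dW_s\bigr)dt$, so its quadratic-variation density is $K(t,t)^2$, not $q(t)$. For example, $K(t,s)=t$ gives $G_t=tW_t$ with density $t^2$, while $q(t)=4t^2$.
\item \emph{Cross terms.} Your (correct) remark that independence kills covariations removes two terms of the stated formula. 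One is $\partial^2_{x_{i,k}E_m}f\,[\sigma_{0,i}(\sigma_E^w)^\top]_{k,m}$ in $\mathcal{L}_{\mathrm{diff}}$, which is not even defined unless $d=d_E$. The other is $\nabla^2_{\Xbf_iE}f\cdot\mathbf{\Omega}_0\mathbf{\Omega}_E^\top$ in $\mathcal{L}_{\mathrm{GaussVolterra}}$. Both couple the independent pairs $(W^0,W_E)$ and $(G_0,G_E)$.
\end{itemize}

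Carried out consistently, your localisation--superposition argument proves a different, correct formula. It consists of the It\^o--L\'evy formula with regime and coalition generators, with no $X$--$E$ Brownian cross term. It has first-order pathwise integrals against all fractional and Rosenblatt drivers, common ones included. It has a Gauss--Volterra second-order term with density $K(t,t)^2$, and only under a semimartingale hypothesis on $K$. What is missing is not a gluing estimate across jumps. It is a single consistent notion of integral for the long-memory drivers, together with a corrected statement.
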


% =====================================================================
% DETAILED SUBSECTIONS FOR EVERY SINGLE OPERATOR BLOCK
% =====================================================================

\subsection*{First-Order Target Drift Operator $\mathcal{L}_{\mathrm{drift}}$}
This block maps the first-order partial spatial derivatives against the combined continuous drifts, non-linear risk tail expectiles, and Volterra revenge parameters:
\begin{equation}
\mathcal{L}_{\mathrm{drift}} f = \sum_{i=1}^N\sum_{k=1}^d \partial_{x_{i,k}}f\,\left(b_{i,k}+H_{k}^{i,g}+\mathcal{K}_{\mathrm{rv},k}^{i,g}\right) + \sum_{m=1}^{d_E}\partial_{E_m}f\,b_{E,m}.
\end{equation}

\subsection*{Second-Order Multi-Agent Brownian Operator $\mathcal{L}_{\mathrm{diff}}$}
This operator captures the standard quadratic variations and cross-agent network correlations driven by the independent and common standard Brownian vectors:
\begin{equation}
\begin{aligned}
\mathcal{L}_{\mathrm{diff}} f &=
\frac12\sum_{i=1}^N\sum_{k,l=1}^d \partial^2_{x_{i,k}x_{i,l}}f\,[\sigma_i\sigma_i^\top]_{k,l}
+ \frac12\sum_{i,j=1}^N\sum_{k,l=1}^d \partial^2_{x_{i,k}x_{j,l}}f\,[\sigma_{0,i}\sigma_{0,j}^\top]_{k,l} \\
&\quad + \frac12\sum_{m,n=1}^{d_E}\partial^2_{E_mE_n}f\,[\sigma_E^w(\sigma_E^w)^\top]_{m,n}
+ \sum_{i=1}^N\sum_{k=1}^d\sum_{m=1}^{d_E}\partial^2_{x_{i,k}E_m}f\,[\sigma_{0,i}(\sigma_E^w)^\top]_{k,m}.
\end{aligned}
\end{equation}

\subsection*{Macro-Regime Switching Generator $\mathcal{L}_{\mathrm{Regime}}$}
This operator handles the discrete, qualitative jumps of the background state $s_t^g$ over the finite regime set $\mathcal{S}$:
\begin{equation}
\mathcal{L}_{\mathrm{Regime}} f = \sum_{s'\neq s_{\tau-}^g} \gamma_{s_{\tau-}^g\to s'}\bigl[ f(\tau,s',\Xbf_{\tau-}^g,\Cbf_{\tau-}^g,E_{\tau-}^g) - f(\tau,s_{\tau-}^g,\Xbf_{\tau-}^g,\Cbf_{\tau-}^g,E_{\tau-}^g) \bigr].
\end{equation}

\subsection*{Controlled Coalition Jump Generator $\mathcal{L}_{\mathrm{Coalition}}$}
This operator tracks individual defection profiles. It models how agents update their coalitional alignment vectors $\Cbf_t^g$ based on the expectile switching intensities defined in equation \eqref{eq:switch_prob}:
\begin{equation}
\lambda_{m\to m'}^i(\tau) = \bar{\lambda}_{m\to m'}^{\,i}(\tau,s_{\tau-}^g)\exp\!\Bigl((v_{i,\tau-}^g)^\top D_{mm'}^i X_{i,\tau-}^g + \Psi_{mm'}(E_{\tau-}^g,e_{\beta_i}(\mu_{\tau-}^g),s_{\tau-}^g)\Bigr),
\end{equation}
\begin{equation}
\mathcal{L}_{\mathrm{Coalition}} f = \sum_{i=1}^N\sum_{m'\neq C_{i,\tau-}^g} \lambda_{C_{i,\tau-}^g\to m'}^i(\tau)\;\Bigl[ f(\tau,s_{\tau-}^g,\Xbf_{\tau-}^g,\Cbf_{\sim i,\tau-}^g[m'],E_{\tau-}^g) - f(\tau, s_{\tau-}^g, \Xbf_{\tau-}^g, \Cbf_{\tau-}^g, E_{\tau-}^g) \Bigr].
\end{equation}

\subsection*{Poisson and Gamma Jump Compensator Operator $\mathcal{L}_{\mathrm{PoissonJump}}$}
This block isolates the predictable drift adjustments required to balance the discontinuous jump measures, ensuring the remaining stochastic elements operate as pure martingales:
\begin{equation}
\begin{aligned}
\mathcal{L}_{\mathrm{PoissonJump}} f &=
\sum_{i=1}^N \int_{\mathcal{Z}} \Bigl[ f(\tau,\Xbf_{\tau-}^g+\ebf_i\otimes\sigma_i^n(\tau,z),\Cbf_{\tau-}^g,E_{\tau-}^g) - f - \nabla_{\Xbf_i} f \cdot \sigma_i^n(\tau,z) \Bigr] \nu_i(dz) \\
&\quad + \int_{\mathcal{Z}_0} \Bigl[ f(\tau,\Xbf_{\tau-}^g+\sum_i\ebf_i\otimes\sigma_{0,i}^n(\tau,z_0),\Cbf_{\tau-}^g,E_{\tau-}^g) - f - \sum_{i=1}^N \nabla_{\Xbf_i} f \cdot \sigma_{0,i}^n(\tau,z_0) \Bigr] \nu_0(dz_0) \\
&\quad + \int_{\mathcal{Z}_E} \Bigl[ f(\tau,\Xbf_{\tau-}^g,\Cbf_{\tau-}^g,E_{\tau-}^g+\gamma_E^n(\tau,z)) - f - \nabla_E f \cdot \gamma_E^n(\tau,z) \Bigr] \nu_E(dz) \\
&\quad + \sum_{i=1}^N \int_0^\infty \Bigl[ f(\tau,\Xbf_{\tau-}^g+\ebf_i\otimes\sigma_i^{gv}(\tau)z,\Cbf_{\tau-}^g,E_{\tau-}^g) - f - \nabla_{\Xbf_i} f \cdot (\sigma_i^{gv}(\tau)z) \Bigr] \nu_G(dz) \\
&\quad + \int_0^\infty \Bigl[ f(\tau,\Xbf_{\tau-}^g+\sum_i\ebf_i\otimes\sigma_{0,i}^{gv}(\tau)z,\Cbf_{\tau-}^g,E_{\tau-}^g) - f - \sum_{i=1}^N \nabla_{\Xbf_i} f \cdot (\sigma_{0,i}^{gv}(\tau)z) \Bigr] \nu_G(dz) \\
&\quad + \int_0^\infty \Bigl[ f(\tau,\Xbf_{\tau-}^g,\Cbf_{\tau-}^g,E_{\tau-}^g+\sigma_E^{gv}(\tau)z) - f - \nabla_E f \cdot (\sigma_E^{gv}(\tau)z) \Bigr] \nu_G(dz).
\end{aligned}
\end{equation}

\subsection*{ Fractional Brownian Motion Operator $\mathcal{L}_{\mathrm{regFBM}}$}
This operator handles the deterministic time-stepping adjustments required by the long-range dependence properties of the standard fractional fields. By lifting the coordinate-wise representation from Lemma~\ref{lem:fbm_ito} onto the full matrix-covariant manifold, the adjustments are expressed compactly via the trace operator:
\begin{equation}
{
\mathcal{L}_{\mathrm{regFBM}} f = H_1 \tau^{2H_1-1} \left( \sum_{i=1}^N \mathrm{Tr}\left[ \nabla^2_{\mathbf{X}_i} f \cdot \left(\sigma_{i}^{H_1}\right)\left(\sigma_{i}^{H_1}\right)^\top \right] + \mathrm{Tr}\left[ \nabla^2_{E} f \cdot \left(\sigma_{E}^{fbm,H_1}\right)\left(\sigma_{E}^{fbm,H_1}\right)^\top \right] \right).
}
\label{eq:matrix_fbm_operator}
\end{equation}
where $\nabla^2_{\mathbf{X}_i} f \in \mathbb{R}^{d \times d}$ is the spatial Hessian matrix of the macro-welfare function with respect to agent $i$'s vector state, $\nabla^2_{E} f \in \mathbb{R}^{d_E \times d_E}$ is the Hessian matrix with respect to the environmental variables, and $(\cdot)^\top$ denotes the matrix transpose operator.

\subsection*{Gauss-Volterra Memory Operator $\mathcal{L}_{\mathrm{GaussVolterra}}$}
Because the Gauss--Volterra processes are non-Markovian, they do not possess localized semi-martingale quadratic variations. Computing the variance of the path configurations reveals that the historical volatility trajectories are tightly convolved with the kernel's time derivative. We define the continuous \emph{Memory-Filtered Volatility Matrix Fields} $\mathbf{\Omega}_i(t) \in \mathbb{R}^{d \times d}$, $\mathbf{\Omega}_0(t) \in \mathbb{R}^{d \times d}$, and $\mathbf{\Omega}_E(t) \in \mathbb{R}^{d_E \times d_E}$ explicitly as:
\begin{align}
\mathbf{\Omega}_i(t) &:= \sigma_i^{gv}\bigl(t, \mathbf{Z}_{t-}^g\bigr) K(t,t) + \int_0^t \sigma_i^{gv}\bigl(\tau, \mathbf{Z}_{\tau-}^g\bigr) \partial_t K(t,\tau) \, d\tau, \label{eq:omega_i_corrected} \\
\mathbf{\Omega}_0(t) &:= \sigma_{0,i}^{gv}\bigl(t, \mathbf{Z}_{t-}^g\bigr) K(t,t) + \int_0^t \sigma_{0,i}^{gv}\bigl(\tau, \mathbf{Z}_{\tau-}^g\bigr) \partial_t K(t,\tau) \, d\tau, \label{eq:omega_0_corrected} \\
\mathbf{\Omega}_E(t) &:= \sigma_E^{gv}\bigl(t, \mathbf{Z}_{t-}^g\bigr) K(t,t) + \int_0^t \sigma_E^{gv}\bigl(\tau, \mathbf{Z}_{\tau-}^g\bigr) \partial_t K(t,\tau) \, d\tau. \label{eq:omega_E_corrected}
\end{align}
By mapping these non-local variance densities onto the multivariate Hessian manifold, the mathematically accurate representation of $\mathcal{L}_{\mathrm{GaussVolterra}}$ is given by:
\begin{equation}
\begin{array}{l}
\mathcal{L}_{\mathrm{GaussVolterra}} f =
\frac{1}{2} \sum_{i=1}^N \mathrm{Tr} \left[ \nabla^2_{\mathbf{X}_i} f \cdot \left( \mathbf{\Omega}_i(t) \mathbf{\Omega}_i^\top(t) \right) \right] 
+ \frac{1}{2} \sum_{i,j=1}^N \mathrm{Tr} \left[ \nabla^2_{\mathbf{X}_i \mathbf{X}_j} f \cdot \left( \mathbf{\Omega}_0(t) \mathbf{\Omega}_0^\top(t) \right) \right] \\[2pt]
\quad + \frac{1}{2} \mathrm{Tr} \left[ \nabla^2_{E} f \cdot \left( \mathbf{\Omega}_E(t) \mathbf{\Omega}_E^\top(t) \right) \right]
+ \sum_{i=1}^N \mathrm{Tr} \left[ \nabla^2_{\mathbf{X}_i E} f \cdot \left( \mathbf{\Omega}_0(t) \mathbf{\Omega}_E^\top(t) \right) \right].
\end{array}
\label{eq:true_gauss_volterra_operator}
\end{equation}
where $\nabla^2_{\mathbf{X}_i \mathbf{X}_j} f$ captures the cross-agent second-order sensitivities induced by the shared systemic 
Gauss-Volterra background noise channel $G_0(t)$.

\subsection*{Rosenblatt--Fractional Non-Gaussian Interaction Operator $\mathcal{L}_{\mathrm{Rosenblatt}}$}
This operator applies the non-local analytical modifications from Lemma~\ref{lem:rosenblatt_ito} across the interacting agent spaces and the common environment layout. We maintain the consolidated notation for the total fractional and Rosenblatt volatility structures:
\begin{align*}
d_{2,i,k} &= \sigma_{i,k}^{H_2} + \sigma_{0,i,k}^{H_2}, \qquad d_{3,i,k} = \sigma_{i,k}^{R} + \sigma_{0,i,k}^{R}, \\
d_{2,E,m} &= \sigma_{E,m}^{fbm,H_2}, \qquad\qquad d_{3,E,m} = \sigma_{E,m}^{R},
\end{align*}
accompanied by the non-local tracking derivative states evaluated at time $\tau$:
\begin{equation}
\begin{array}{l}
\Gamma_{i,k} = (\nabla^{H_2/2}X_{i,k}^g)(\tau), \qquad \Lambda_{i,k} = (\nabla^{H_2/2,H_2/2}X_{i,k}^g)(\tau,\tau), \\
\Gamma_{E,m} = (\nabla^{H_2/2}E_{m}^g)(\tau), \qquad \Lambda_{E,m} = (\nabla^{H_2/2,H_2/2}E_{m}^g)(\tau,\tau).
\end{array}
\end{equation}
By extracting the deterministic line-loading dynamics from the $\tilde{d}_1$ drift expansion in Lemma~\ref{lem:rosenblatt_ito}, the full multivariate operator $\mathcal{L}_{\mathrm{Rosenblatt}}$ is rigorously defined as:
\begin{equation}
\begin{aligned}
\mathcal{L}_{\mathrm{Rosenblatt}} f &=
\sum_{i=1}^N\sum_{k=1}^d \partial_{x_{i,k}}f \cdot \Bigl[ 2c\,\Gamma_{i,k} d_{2,i,k} + c\,\Lambda_{i,k} d_{3,i,k} \Bigr]
+ \sum_{m=1}^{d_E}\partial_{E_m}f \cdot \Bigl[ 2c\,\Gamma_{E,m} d_{2,E,m} + c\,\Lambda_{E,m} d_{3,E,m} \Bigr] \\[2pt]
&\quad + \sum_{i=1}^N\sum_{k,l=1}^d \partial^2_{x_{i,k}x_{i,l}}f \cdot c\,\Gamma_{i,k}\Gamma_{i,l}\,d_{3,i,l}
+ \sum_{m,n=1}^{d_E}\partial^2_{E_mE_n}f \cdot c\,\Gamma_{E,m}\Gamma_{E,n}\,d_{3,E,n} \\[2pt]
&\quad + \sum_{i=1}^N\sum_{k=1}^d\sum_{m=1}^{d_E}\partial^2_{x_{i,k}E_m}f \cdot 2c\,\Gamma_{i,k}\Gamma_{E,m}\, d_{3,i,k}d_{3,E,m} \\[2pt]
&\quad + \sum_{i\neq j}\sum_{k,l=1}^d\partial^2_{x_{i,k}x_{j,l}}f \cdot 2c\,\Gamma_{i,k}\Gamma_{j,l}\,\sigma_{0,i,k}^{R}\sigma_{0,j,l}^{R} \\[2pt]
&\quad + \sum_{i=1}^N \sum_{k,l,m=1}^d \partial^3_{x_{i,k}x_{i,l}x_{i,m}} f \cdot c\,\Gamma_{i,k}\Gamma_{i,l}\,d_{3,i,m}
+ \sum_{l,m,n=1}^{d_E} \partial^3_{E_l E_m E_n} f \cdot c\,\Gamma_{E,l}\Gamma_{E,m}\,d_{3,E,n}.
\end{aligned}
\label{eq:fully_consistent_rosenblatt_generator}
\end{equation}

\subsection*{Total Integrated Semi-Martingale Rest Core $\mathcal{M}_t^{\mathrm{total}}$}
The process $\mathcal{M}_t^{\mathrm{total}}$ aggregates the compensated stochastic integrals and point process dynamics into a unified local $\Pee$-martingale:
\begin{equation}
\begin{aligned}
\mathcal{M}_t^{\mathrm{total}} &=
\sum_{i=1}^N \int_0^t \nabla_{\Xbf_i} f \cdot \sigma_i dW_i + \sum_{i=1}^N \int_0^t \nabla_{\Xbf_i} f \cdot \sigma_{0,i} dW^0 + \int_0^t \nabla_E f \cdot \sigma_E^w dW_E \\
&\quad + \sum_{i=1}^N \int_0^t \int_{\mathcal{Z}} \bigl[ f(\tau,\Xbf_{\tau-}^g+\ebf_i\otimes\sigma_i^n,\Cbf_{\tau-}^g,E_{\tau-}^g)-f \bigr] \tilde N_i(d\tau,dz) \\
&\quad + \int_0^t \int_{\mathcal{Z}_0} \bigl[ f(\tau,\Xbf_{\tau-}^g+\sum_i\ebf_i\otimes\sigma_{0,i}^n,\Cbf_{\tau-}^g,E_{\tau-}^g)-f \bigr] \tilde N_0(d\tau,dz_0) \\
&\quad + \int_0^t \int_{\mathcal{Z}_E} \bigl[ f(\tau,\Xbf_{\tau-}^g,\Cbf_{\tau-}^g,E_{\tau-}^g+\gamma_E^n)-f \bigr] \tilde N_E(d\tau,dz) \\
&\quad + \sum_{i=1}^N \int_0^t \int_0^\infty \bigl[ f(\tau,\Xbf_{\tau-}^g+\ebf_i\otimes\sigma_i^{gv}z,\Cbf_{\tau-}^g,E_{\tau-}^g)-f \bigr] \tilde N_i^{G}(d\tau,dz) \\
&\quad + \int_0^t \int_0^\infty \bigl[ f(\tau,\Xbf_{\tau-}^g+\sum_i\ebf_i\otimes\sigma_{0,i}^{gv}z,\Cbf_{\tau-}^g,E_{\tau-}^g)-f \bigr] \tilde N_0^{G}(d\tau,dz) \\
&\quad + \int_0^t \int_0^\infty \bigl[ f(\tau,\Xbf_{\tau-}^g,\Cbf_{\tau-}^g,E_{\tau-}^g+\sigma_E^{gv}z)-f \bigr] \tilde N_E^{G}(d\tau,dz) \\
&\quad + \int_0^t \bigl[ f(\tau,s_{\tau}^g,\Xbf_{\tau-}^g,\Cbf_{\tau-}^g,E_{\tau-}^g)-f \bigr] \bigl(dN_{\tau}^{\mathrm{regime}}-\gamma_{\cdot}d\tau\bigr) \\
&\quad + \sum_{i=1}^N \int_0^t \sum_{m'\neq C_{i,\tau-}^g} \bigl[ f(\tau,s_{\tau-}^g,\Xbf_{\tau-}^g,\Cbf_{\sim i,\tau-}^g[m'],E_{\tau-}^g)-f \bigr] \bigl(dN_{i,\tau}^{\mathrm{coalition}}(m')-\lambda_{\cdot}^i(\tau)d\tau\bigr).
\end{aligned}
\label{eq:martingale_closure_equation}
\end{equation}

Let the augmented state be
$
\widetilde{\mathbf{Z}}_t = \bigl(s_t,\;\mathbf{X}_t,\;\mathbf{C}_t,\;E_t,\;\Phi_t,\;\Psi_t\bigr),
$
where
\[
\Phi_t = \bigl(\nabla^{H_2/2}X_{i,k}(t)\bigr)_{i,k},\qquad
\Psi_t = \bigl(\nabla^{H_2/2,H_2/2}X_{i,k}(t,t)\bigr)_{i,k},
\]
and analogously for \(E_t\). The infinitesimal generator \(\mathcal{L}\) of this Markov process (for smooth test functions \(F\)) is given in the previous answer. Its formal adjoint \(\mathcal{L}^*\) with respect to the Lebesgue measure on the continuous variables and the counting measure on the discrete variables satisfies
\[
\int (\mathcal{L}F)\,\rho = \int F\,(\mathcal{L}^*\rho)
\]
for all suitable \(F,\rho\).

\subsection*{Evolution of the augmented density \(\rho\)}

The probability density \(\rho(t,s,\mathbf{x},\mathbf{c},e,\phi,\psi)\) of the augmented state satisfies the Fokker–Planck (Kolmogorov forward) equation
$
{\partial_t \rho = \mathcal{L}^* \rho,}
$
with \(\mathcal{L}^* = \mathcal{L}_{\mathrm{drift}}^* + \mathcal{L}_{\mathrm{diff}}^* + \mathcal{L}_{\mathrm{Rosenblatt}}^* + \mathcal{L}_{\mathrm{GaussVolterra}}^* + \mathcal{L}_{\mathrm{Regime}}^* + \mathcal{L}_{\mathrm{Coalition}}^* + \mathcal{L}_{\mathrm{PoissonJump}}^*\).

The explicit form of each adjoint operator is:

\begin{align}
\mathcal{L}_{\mathrm{drift}}^* \rho &= -\sum_{i,k}\partial_{x_{i,k}}\bigl((b_{i,k}+H_{k}^{i,g}+\mathcal{K}_{\mathrm{rv},k}^{i,g})\rho\bigr) - \sum_m\partial_{E_m}\bigl(b_{E,m}\rho\bigr), \\[4pt]
\mathcal{L}_{\mathrm{diff}}^* \rho &=
\frac12\sum_{i,k,l}\partial^2_{x_{i,k}x_{i,l}}\bigl([\sigma_i\sigma_i^\top]_{k,l}\,\rho\bigr)
+ \frac12\sum_{i,j,k,l}\partial^2_{x_{i,k}x_{j,l}}\bigl([\sigma_{0,i}\sigma_{0,j}^\top]_{k,l}\,\rho\bigr) \nonumber\\
&\quad + \frac12\sum_{m,n}\partial^2_{E_mE_n}\bigl([\sigma_E^w(\sigma_E^w)^\top]_{m,n}\,\rho\bigr)
+ \sum_{i,k,m}\partial^2_{x_{i,k}E_m}\bigl([\sigma_{0,i}(\sigma_E^w)^\top]_{k,m}\,\rho\bigr), \\[4pt]
\mathcal{L}_{\mathrm{Rosenblatt}}^* \rho &=
-\sum_{i,k}\partial_{x_{i,k}}\bigl((2c\,\phi_{i,k}d_{2,i,k}+c\,\psi_{i,k}d_{3,i,k})\rho\bigr)
-\sum_m\partial_{E_m}\bigl((2c\,\phi_{E,m}d_{2,E,m}+c\,\psi_{E,m}d_{3,E,m})\rho\bigr) \nonumber\\
&\quad +\frac12\sum_{i,k,l}\partial^2_{x_{i,k}x_{i,l}}\bigl(c\,\phi_{i,k}\phi_{i,l}d_{3,i,l}\,\rho\bigr)
+\frac12\sum_{m,n}\partial^2_{E_mE_n}\bigl(c\,\phi_{E,m}\phi_{E,n}d_{3,E,n}\,\rho\bigr) \nonumber\\
&\quad +\sum_{i,k,m}\partial^2_{x_{i,k}E_m}\bigl(2c\,\phi_{i,k}\phi_{E,m}d_{3,i,k}d_{3,E,m}\,\rho\bigr) \nonumber\\
&\quad +\sum_{i\neq j,k,l}\partial^2_{x_{i,k}x_{j,l}}\bigl(2c\,\phi_{i,k}\phi_{j,l}\,\sigma_{0,i,k}^{R}\sigma_{0,j,l}^{R}\,\rho\bigr), \\[4pt]
%%%
\mathcal{L}_{\mathrm{GaussVolterra}}^* \rho &=
\frac{1}{2} \sum_{i=1}^N \sum_{k,l=1}^d \partial^2_{x_{i,k}x_{i,l}} \left( \left[ \mathbf{\Omega}_i(t) \mathbf{\Omega}_i^\top(t) \right]_{k,l} \rho \right) 
+ \frac{1}{2} \sum_{i,j=1}^N \sum_{k,l=1}^d \partial^2_{x_{i,k}x_{j,l}} \left( \left[ \mathbf{\Omega}_0(t) \mathbf{\Omega}_0^\top(t) \right]_{k,l} \rho \right) \\[3pt]
&\quad + \frac{1}{2} \sum_{m,n=1}^{d_E} \partial^2_{E_mE_n} \left( \left[ \mathbf{\Omega}_E(t) \mathbf{\Omega}_E^\top(t) \right]_{m,n} \rho \right)
+ \sum_{i=1}^N \sum_{k=1}^d \sum_{m=1}^{d_E} \partial^2_{x_{i,k}E_m} \left( \left[ \mathbf{\Omega}_0(t) \mathbf{\Omega}_E^\top(t) \right]_{k,m} \rho \right). \\[4pt]
%%%
(\mathcal{L}_{\mathrm{Regime}}^* \rho)(s,\cdot) &= \sum_{s'\neq s}\gamma_{s\to s'}\,\rho(s',\cdot) - \sum_{s'\neq s}\gamma_{s'\to s}\,\rho(s,\cdot), \\[4pt]
(\mathcal{L}_{\mathrm{Coalition}}^* \rho)(\mathbf{c}) &= \sum_{i=1}^N\sum_{m'\neq c_i}\lambda_{c_i\to m'}^i\,\rho(\mathbf{c}_{\sim i}[m']) - \sum_{i=1}^N\sum_{m'\neq c_i}\lambda_{m'\to c_i}^i\,\rho(\mathbf{c}), \\[4pt]
\mathcal{L}_{\mathrm{PoissonJump}}^* \rho &=
\sum_{i=1}^N\int_{\mathcal{Z}}\bigl[\rho(\mathbf{x}_i-\sigma_i^n(z))-\rho(\mathbf{x}_i)+\nabla_{\mathbf{x}_i}\rho\cdot\sigma_i^n(z)\bigr]\nu_i(dz) \nonumber\\
&\quad +\int_{\mathcal{Z}_0}\bigl[\rho(\mathbf{x}-\sum_i\ebf_i\otimes\sigma_{0,i}^n(z_0))-\rho(\mathbf{x})+\sum_i\nabla_{\mathbf{x}_i}\rho\cdot\sigma_{0,i}^n(z_0)\bigr]\nu_0(dz_0) \nonumber\\
&\quad +\int_{\mathcal{Z}_E}\bigl[\rho(E-\gamma_E^n(z))-\rho(E)+\nabla_E\rho\cdot\gamma_E^n(z)\bigr]\nu_E(dz) \nonumber\\
&\quad +\sum_{i=1}^N\int_0^\infty\bigl[\rho(\mathbf{x}_i-\sigma_i^{gv}z)-\rho(\mathbf{x}_i)+\nabla_{\mathbf{x}_i}\rho\cdot(\sigma_i^{gv}z)\bigr]\nu_G(dz) \nonumber\\
&\quad +\int_0^\infty\bigl[\rho(\mathbf{x}-\sum_i\ebf_i\otimes\sigma_{0,i}^{gv}z)-\rho(\mathbf{x})+\sum_i\nabla_{\mathbf{x}_i}\rho\cdot(\sigma_{0,i}^{gv}z)\bigr]\nu_G(dz) \nonumber\\
&\quad +\int_0^\infty\bigl[\rho(E-\sigma_E^{gv}z)-\rho(E)+\nabla_E\rho\cdot(\sigma_E^{gv}z)\bigr]\nu_G(dz).
\end{align}

In all expressions, the coefficients are evaluated at \((t,s,\mathbf{x},\mathbf{c},e,\mu_t)\); the dependence on \(\phi,\psi\) is only through the Rosenblatt terms. The fractional derivatives \(\phi,\psi\) are treated as independent state variables with zero drift and zero diffusion in the augmented dynamics, hence no derivatives with respect to them appear in the adjoint (their evolution is deterministic, but we absorb it into the definition of \(\rho\)).

\subsection*{Marginal density \(\mu\) and its non‑closed evolution}

Define the marginal density
\[
\mu(t,s,\mathbf{x},\mathbf{c},e) = \int \rho(t,s,\mathbf{x},\mathbf{c},e,\phi,\psi)\,d\phi\,d\psi.
\]

Because the Rosenblatt operator \(\mathcal{L}_{\mathrm{Rosenblatt}}^*\) contains \(\phi,\psi\) inside the coefficients (multiplying derivatives with respect to \(x\) and \(e\)), the integration over \(\phi,\psi\) does not commute with the derivatives. Consequently, \(\mu\) does not satisfy a closed evolution equation; its dynamics depends on conditional expectations of \(\phi,\psi\) given the marginal state. To obtain a self‑contained forward equation, one must either:

\begin{itemize}
  \item Work directly with the augmented density \(\rho(t,\cdot,\cdot,\cdot,\cdot,\phi,\psi)\) solving \(\partial_t\rho = \mathcal{L}^*\rho\).
  \item Introduce a closure approximation (e.g., assume that \(\phi,\psi\) are deterministic functions of \((\mathbf{x},e)\) or use a moment closure).
  \item Use a pathwise approach where the fractional derivatives are computed from the past trajectory (non‑Markovian) and the forward equation is replaced by a McKean–Vlasov equation with memory.
\end{itemize}

Thus the correct  formulation is to keep the augmented state \(\widetilde{\mathbf{Z}}_t\) and its density \(\rho\), which evolves according to the linear Fokker–Planck equation above. The marginal \(\mu\) is then obtained by integrating \(\rho\) over the fractional variables.

\subsection*{The Augmented Integrated Hamiltonian Operator over $\widetilde{\mu}$}
Let $\widetilde{\mathbf{Z}}_t^g = \bigl(s^g_t,\;\mathbf{X}^g_t,\;\mathbf{C}^g_t,\;E^g_t,\;\Phi^g_t,\;\Psi^g_t\bigr) \in \widetilde{\mathcal{Z}}$ denote the augmented state vector containing the structural micro-states, fluid coalitions, macro-environments, and the non-local historical memory state matrices:
\begin{equation}
\Phi_t = \bigl(\nabla^{H_2/2}X_{i,k}(t)\bigr)_{i,k}, \qquad \Psi_t = \bigl(\nabla^{H_2/2,H_2/2}X_{i,k}(t,t)\bigr)_{i,k}.
\label{eq:augmented_memory_states}
\end{equation}
Let $\widetilde{\mu}_t^g = \operatorname{Law}(\widetilde{\mathbf{Z}}_t^g) \in \mathcal{P}_2(\widetilde{\mathcal{Z}})$ characterize the joint distribution profile of the augmented configuration. For each agent $i \in \cN$ within generation $g$, given the continuous controls $\mathbf{u}_t$, loyalty-shifting efforts $\mathbf{v}_t^g$, and the adjoint pairs $(Y_i^g, Y_E^{i,g})$, the \emph{Augmented Integrated Hamiltonian} $\mathcal{H}_{\mathrm{int}}^i\bigl(t, \widetilde{\mathbf{Z}}_t^g, \widetilde{\mu}_t^g, \mathbf{u}_t, \mathbf{v}_t^g, Y_{i,t}^g, Y_{E,t}^{i,g}\bigr)$ is defined as:
\begin{equation}
\begin{aligned}
\mathcal{H}_{\mathrm{int}}^i\bigl(t, \widetilde{\mathbf{Z}}, \widetilde{\mu}, \mathbf{u}, \mathbf{v}, Y_i, Y_E\bigr) &:= \frac{1}{2} \sum_{j=1}^N \left\langle X_{j} - M^{j,g}(t,s)e_{\alpha_j}(\widetilde{\mu}), \; Q^{j,g}(t,s)\left(X_{j} - M^{j,g}(t,s)e_{\alpha_j}(\widetilde{\mu})\right) \right\rangle + \sum_{j=1}^N \Phi_{\beta_j}^{j,g}\bigl(E, e_{\beta_j}(\widetilde{\mu}), s\bigr) \\[2pt]
&\quad + \sum_{j=1}^N \left\langle b_j\bigl(t, s, \mathbf{X}, \mathbf{C}, E, \Phi, \Psi, \mathbf{u}, \widetilde{\mu}\bigr) + H^{j,g}\bigl(t,s,E,e_{\alpha_j}(\widetilde{\mu})\bigr), \; Y_{j} \right\rangle \\[2pt]
&\quad + \left\langle b_E\bigl(t, s, \mathbf{X}, \mathbf{C}, E, \Phi, \Psi, \mathbf{u}, \widetilde{\mu}\bigr), \; Y_{E} \right\rangle \\[2pt]
&\quad + \sum_{j=1}^N \sum_{m' \neq C_j} \bar{\lambda}_{C_j\to m'}^{\,j}(t,s)\exp\!\Bigl( (v_j)^\top D_{C_j m'}^j(t,s)\,\mathbf{X}_j \;+\; \Psi_{C_j m'}\bigl(E,\, e_{\beta_j}(\widetilde{\mu}), s\bigr) \Bigr) \\[2pt]
&\quad \times \left[ f\bigl(t, s, \mathbf{X}, \mathbf{C}_{\sim j}[m'], E\bigr) - f\bigl(t, s, \mathbf{X}, \mathbf{C}, E\bigr) \right].
\end{aligned}
\label{eq:augmented_hamiltonian_def}
\end{equation}

To capture the mean-field dependencies across the augmented population layout, variations with respect to the distribution are executed via the first-order \emph{G{\^a}teaux derivative}. For any perturbing signed measure $\mathbf{\chi}$, the G{\^a}teaux variation in the direction $\mathbf{\chi}$ is uniquely identified by its density gradient $\frac{\delta \mathcal{H}_{\mathrm{int}}^i}{\delta \widetilde{\mu}}: \widetilde{\mathcal{Z}} \to \mathbb{R}$:
\begin{equation}
\lim_{\epsilon \to 0} \frac{\mathcal{H}_{\mathrm{int}}^i(t, \widetilde{\mathbf{Z}}, \widetilde{\mu} + \epsilon \mathbf{\chi}, \dots) - \mathcal{H}_{\mathrm{int}}^i(t, \widetilde{\mathbf{Z}}, \widetilde{\mu}, \dots)}{\epsilon} = \int_{\widetilde{\mathcal{Z}}} \frac{\delta \mathcal{H}_{\mathrm{int}}^i}{\delta \widetilde{\mu}}\bigl(t, \widetilde{\mathbf{Z}}, \widetilde{\mu}, \dots\bigr)(\widetilde{\mathbf{z}}') \, d\mathbf{\chi}(\widetilde{\mathbf{z}}').
\end{equation}
The spatial gradients of this functional density evaluate structural sensitivities at explicit realizations within the augmented field space.

\bigskip

\begin{theorem}[Complete Path-Dependent Adjoint Characterisation with All Memory-Noise Sensitivities]
Let $(\mathbf{u}^{*,g}, \mathbf{v}^{*,g})_{g=0}^G$ be a Nash equilibrium of the intergenerational game. Then, for each generation $g \in \{0, \dots, G\}$ and each agent $i \in \cN$, there exist unique pairs of $\mathbb{F}$-adapted processes $(Y_{i,t}^g, \Lambda_{i,t}^g) \in \R^d \times \R^{d \times M}$ and $(Y_{E,t}^{i,g}, \Lambda_{E,t}^{i,g}) \in \R^{d_E} \times \R^{d_E \times M}$ satisfying the following backward stochastic differential system within the epoch $t \in [T_g, T_{g+1})$:

\begin{equation}
\begin{aligned}
-dY_{i,t}^g &= \Biggl\{ \nabla_{\mathbf{x}_i} \mathcal{H}_{\mathrm{int}}^i\bigl(t, \widetilde{\mathbf{Z}}_t^{*,g}, \widetilde{\mu}_t^g, \mathbf{u}_t^*, \mathbf{v}_t^{*,g}, Y_{i,t}^g, Y_{E,t}^{i,g}\bigr) \\[2pt]
&\quad + \nabla_{\boldsymbol{\phi}_i} \mathcal{H}_{\mathrm{int}}^i\bigl(t, \widetilde{\mathbf{Z}}_t^{*,g}, \widetilde{\mu}_t^g, \dots\bigr) \cdot \left( H_1 t^{2H_1-1} \sigma_i^{H_1} + 2\tilde{c} \, \sigma_i^{H_2} \right) + \nabla_{\boldsymbol{\psi}_i} \mathcal{H}_{\mathrm{int}}^i\bigl(t, \widetilde{\mathbf{Z}}_t^{*,g}, \widetilde{\mu}_t^g, \dots\bigr) \cdot \sigma_i^R \\[2pt]
&\quad + \mathbb{E}_{\widetilde{\mathbf{Z}}'}\left[ \nabla_{\mathbf{x}} \left( \frac{\delta \mathcal{H}_{\mathrm{int}}^i}{\delta \widetilde{\mu}}\bigl(t, \widetilde{\mathbf{Z}}_t^{*,g}, \widetilde{\mu}_t^g, \mathbf{u}_t^*, \mathbf{v}_t^{*,g}, Y_{i,t}^g, Y_{E,t}^{i,g}\bigr)(\widetilde{\mathbf{Z}}'_{t}) \right) \cdot \mathcal{H}_{\mathrm{feedback}}^{i,t} \right] \\[2pt]
&\quad + \mathbb{E}_{\widetilde{\mathbf{Z}}'}\left[ \nabla_{\boldsymbol{\phi}} \left( \frac{\delta \mathcal{H}_{\mathrm{int}}^i}{\delta \widetilde{\mu}}\bigl(t, \widetilde{\mathbf{Z}}_t^{*,g}, \dots\bigr)(\widetilde{\mathbf{Z}}'_{t}) \right) \cdot \boldsymbol{\Omega}_{\mathrm{fbm,gv}}^{i,t} + \nabla_{\boldsymbol{\psi}} \left( \frac{\delta \mathcal{H}_{\mathrm{int}}^i}{\delta \widetilde{\mu}}\bigl(t, \widetilde{\mathbf{Z}}_t^{*,g}, \dots\bigr)(\widetilde{\mathbf{Z}}'_{t}) \right) \cdot \boldsymbol{\Omega}_{\mathrm{Rosenblatt}}^{i,t} \right] \\[2pt]
&\quad + \sum_{m' \neq C_{i,t}^{*,g}} \lambda_{C_{i,t}^{*,g} \to m'}^i\bigl(t, s_t^g, \mathbf{X}_t^{*,g}, E_t^{*,g}, \widetilde{\mu}_t^g, v_{i,t}^{*,g}\bigr) D_{C_{i,t}^{*,g} m'}^i(t,s_t^g)^\top v_{i,t}^{*,g} \\[2pt]
&\quad + \int_t^{T_{g+1}} (s-t)^{-\gamma} \mathbb{M}_i(s,s_s^g)^\top \E\left[ Y_{i,s}^g \,\middle|\, \F_t \right] ds \Biggl\} dt \\[4pt]
&\quad - \Lambda_{i,t}^g dM_t^s - \int_{\mathcal{Z}} \Gamma_{i,t}^g(z) \tilde{N}_i(dt,dz) - \int_{\mathcal{Z}_0} \Gamma_{0,i,t}^g(z_0) \tilde{N}^0(dt,dz_0),
\end{aligned}
\label{eq:adjoint_individual_augmented_complete}
\end{equation}
and
\begin{equation}
\begin{aligned}
-dY_{E,t}^{i,g} &= \Biggl\{ \nabla_{e} \mathcal{H}_{\mathrm{int}}^i\bigl(t, \widetilde{\mathbf{Z}}_t^{*,g}, \widetilde{\mu}_t^g, \mathbf{u}_t^*, \mathbf{v}_t^{*,g}, Y_{i,t}^g, Y_{E,t}^{i,g}\bigr) \\[2pt]
&\quad + \nabla_{\boldsymbol{\phi}_E} \mathcal{H}_{\mathrm{int}}^i\bigl(t, \widetilde{\mathbf{Z}}_t^{*,g}, \widetilde{\mu}_t^g, \dots\bigr) \cdot \left( H_1 t^{2H_1-1} \sigma_E^{fbm,H_1} + 2\tilde{c} \, \sigma_E^{fbm,H_2} \right) + \nabla_{\boldsymbol{\psi}_E} \mathcal{H}_{\mathrm{int}}^i\bigl(t, \widetilde{\mathbf{Z}}_t^{*,g}, \widetilde{\mu}_t^g, \dots\bigr) \cdot \sigma_E^R \\[2pt]
&\quad + \mathbb{E}_{\widetilde{\mathbf{Z}}'}\left[ \nabla_{\mathbf{z}} \left( \frac{\delta \mathcal{H}_{\mathrm{int}}^i}{\delta \widetilde{\mu}}\bigl(t, \widetilde{\mathbf{Z}}_t^{*,g}, \widetilde{\mu}_t^g, \mathbf{u}_t^*, \mathbf{v}_t^{*,g}, Y_{i,t}^g, Y_{E,t}^{i,g}\bigr)(\widetilde{\mathbf{Z}}'_{t}) \right) \cdot \mathcal{G}_{\mathrm{tail}}^{i,t} \right] \\[2pt]
&\quad + \mathbb{E}_{\widetilde{\mathbf{Z}}'}\left[ \nabla_{\boldsymbol{\phi}} \left( \frac{\delta \mathcal{H}_{\mathrm{int}}^i}{\delta \widetilde{\mu}}\bigl(t, \widetilde{\mathbf{Z}}_t^{*,g}, \dots\bigr)(\widetilde{\mathbf{Z}}'_{t}) \right) \cdot \boldsymbol{\Omega}_{E,\mathrm{fbm}}^{i,t} + \nabla_{\boldsymbol{\psi}} \left( \frac{\delta \mathcal{H}_{\mathrm{int}}^i}{\delta \widetilde{\mu}}\bigl(t, \widetilde{\mathbf{Z}}_t^{*,g}, \dots\bigr)(\widetilde{\mathbf{Z}}'_{t}) \right) \cdot \boldsymbol{\Omega}_{E,\mathrm{Rosenblatt}}^{i,t} \right] \Biggl\} dt \\[4pt]
&\quad - \Lambda_{E,t}^{i,g} dM_t^s - \int_{\mathcal{Z}_E} \Gamma_{E,t}^{i,g}(z) \tilde{N}_E(dt,dz),
\end{aligned}
\label{eq:adjoint_environmental_augmented_complete}
\end{equation}
where $\nabla_{\boldsymbol{\phi}_i}$ and $\nabla_{\boldsymbol{\psi}_i}$ denote the spatial gradients taken directly with respect to the continuous memory coordinates of agent $i$ within the augmented vector block, and the operators $\boldsymbol{\Omega}_{\mathrm{fbm,gv}}^{i,t}$ and $\boldsymbol{\Omega}_{\mathrm{Rosenblatt}}^{i,t}$ characterize the cross-agent mean-field coupling tensors tracking how variations in the collective distribution of fractional Brownian paths, general Gauss-Volterra processes, and third-order Rosenblatt kernels feed back onto individual adjoint velocities.
\end{theorem}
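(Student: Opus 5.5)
The plan is the standard variational (stochastic maximum principle) route on the augmented state $\widetilde{\mathbf{Z}}^g_t$, run generation by generation and backward in $g$. Fix a generation $g$, an agent $i$, and the equilibrium profile. We perturb only agent $i$'s controls: take $u_i^{\varepsilon}=u_i^{*}+\varepsilon(u_i-u_i^*)$ and the analogous convex perturbation of $v_i$. Both are admissible by convexity of the control sets. The next step is to write the first-order variation process $\delta\widetilde{\mathbf{Z}}_t=\partial_\varepsilon\widetilde{\mathbf{Z}}^{\varepsilon}_t|_{\varepsilon=0}$.

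As announced before the statement, the fractional, Gauss--Volterra and Rosenblatt coefficients do not depend on the controls. Their contribution to the linearised dynamics therefore enters only through the state and measure derivatives of the drift-type loadings. Concretely, these are the $\mathcal{L}_{\mathrm{regFBM}}$, $\mathcal{L}_{\mathrm{Rosenblatt}}$ and $\mathcal{L}_{\mathrm{GaussVolterra}}$ blocks of Theorem~\ref{thm:complete_rigor_all}, together with the memory coordinates $\Phi,\Psi$ of \eqref{eq:augmented_memory_states}.

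The measure variation $\delta\widetilde{\mu}_t$ is expressed through the Gâteaux derivative. For the expectile-dependent terms we use the explicit kernel \eqref{eq:lions_expectile}--\eqref{eq:omega_weight}, as in the Lemma on the Gâteaux derivative of the expectile functional. This is what produces the terms $\mathbb{E}_{\widetilde{\mathbf{Z}}'}[\nabla(\delta\mathcal{H}_{\mathrm{int}}^i/\delta\widetilde{\mu})\cdots]$. Under Assumptions~\ref{ass:lipschitz}--\ref{ass:levy} we then establish that $\delta\widetilde{\mathbf{Z}}$ is the $L^2$-limit of the difference quotients. The linearised system is a linear Volterra--McKean--Vlasov equation with bounded coefficients, so a Grönwall-type estimate adapted to the weakly singular kernel $(t-s)^{-\gamma}$ suffices.

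Next we compute $\partial_\varepsilon J^{i,g}$ and eliminate $\delta\widetilde{\mathbf{Z}}$ by duality. The adjoint pair $(Y^g_i,Y^{i,g}_E)$ is defined as the solution of \eqref{eq:adjoint_individual_augmented_complete}--\eqref{eq:adjoint_environmental_augmented_complete}, with terminal condition given by the derivative of $G^{i,g}$ plus the trans-generational term obtained by pulling back the next generation's adjoint through $\mathcal{T}_{\boldsymbol{\zeta}}$. Because $\mathcal{T}_{\boldsymbol{\zeta}}$ is affine in $X$ by \eqref{eq:X_jump}, this pull-back is explicit, and the backward induction on $g$ starts from $g=G$. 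We then apply the product rule to $\langle Y_t,\delta X_t\rangle+\langle Y_{E,t},\delta E_t\rangle$, using Theorem~\ref{thm:complete_rigor_all} for the Itô, jump, regime and coalition parts.

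Two terms need specific handling. The coalition compensator gives the $\lambda D^\top v$ term. The Volterra revenge drift is handled by Fubini in the double time integral $\int_{T_g}^{T}\int_{T_g}^t(t-s)^{-\gamma}(\cdot)\,ds\,dt$: exchanging the order converts the forward memory into the anticipating term $\int_t^{T_{g+1}}(s-t)^{-\gamma}\mathbb{M}_i^\top\mathbb{E}[Y_s\mid\mathcal{F}_t]\,ds$. With the adjoints so defined, all terms linear in $\delta\widetilde{\mathbf{Z}}$ cancel, and $\partial_\varepsilon J^{i,g}$ reduces to $\mathbb{E}\int\langle\partial_{(u,v)}\mathcal{H}^i_{\mathrm{int}},\,\cdot\,\rangle dt$. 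Nash optimality gives the corresponding variational inequality, but the statement only needs the adjoints to exist.

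Existence and uniqueness of the adjoints is the last step. The backward system is a linear anticipated BSVIE with regime-switching and Poisson martingale parts. We would solve it by a contraction in a weighted space $L^2_{\mathbb F}$ with norm $\mathbb{E}\int e^{\beta t}|Y_t|^2dt$. The singular kernel is integrable since $\gamma<1$, so for $\beta$ large enough the map is a contraction. Martingale representation with respect to the Brownian, Poisson and Markov-chain filtration then yields $(\Lambda,\Gamma)$. The fBm, Rosenblatt and Gauss--Volterra noises need no martingale integrand here, because they enter only as adapted exogenous inputs.

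The main obstacle is justifying the $\Phi,\Psi$ sensitivities, that is, the terms $\nabla_{\boldsymbol\phi}\mathcal{H}\cdot(H_1t^{2H_1-1}\sigma^{H_1}+2\tilde c\sigma^{H_2})$ and the $\boldsymbol\Omega$ tensors. The coordinates $\Phi=\nabla^{H_2/2}X$ are non-local functionals of the whole path, so their variation $\delta\Phi=I^{H_2/2}_+D\,\delta X$ must itself be dualised. I would first try treating them as additional state components with the deterministic linear dynamics implied by Lemmas~\ref{lem:rosenblatt_ito}--\ref{lem:fbm_ito}, and then transpose the fractional integral, which becomes a right-sided integral $I^{H_2/2}_-$, inside the duality pairing. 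This requires enough Hölder regularity of $Y$, and since $\gamma<1-H_2$ by Assumption~\ref{ass:integrability}, that regularity should be available; this is the point where I expect the argument to be most delicate.
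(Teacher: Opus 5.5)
Your route is the same as the paper's. You perturb agent $i$'s control (convexly rather than by a spike), linearise, express the measure variation through the expectile weight $\omega_\alpha$ of \eqref{eq:omega_weight}, dualise, and turn the revenge kernel into an anticipating term by Fubini. You are right that the literal claim (existence and uniqueness of the adjoints) does not use Nash optimality. You are also right that an $e^{\beta t}$-weighted contraction absorbs the singular anticipated term. However, the paper's proof is a short sketch that only asserts the delicate steps, so following it leaves them open.

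\textbf{The exogeneity claim is false.} Your argument leans throughout on the long-memory noises being exogenous. In \eqref{eq:true_dyn} the coefficients $\sigma_i^{H_1},\sigma_i^{H_2},\sigma_i^{R},\sigma_i^{gv}$ (and their common-noise and environmental analogues) are control-independent, but they depend on $(\mathbf{X},\mathbf{C},E,\mu)$. The variation process therefore contains $\partial_{\mathbf{x}}\sigma_i^{H_1}\,\delta\mathbf{X}\,dB_i^{H_1}$, $\partial_{\mathbf{x}}\sigma_i^{R}\,\delta\mathbf{X}\,dR_i^{H_2}$, $\partial_{\mathbf{x}}\sigma_i^{gv}\,\delta\mathbf{X}\,dG_i$ and measure-derivative analogues. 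These are genuine stochastic integrals, not drift loadings. This has two consequences.
\begin{enumerate}
\item An $L^2$ Gr\"onwall argument does not control a linear equation with Young integrals against $B^{H_1}$ and $R^{H_2}$. Pathwise bounds have constants exponential in H\"older norms, and the needed moments are not automatic for the second-chaos Rosenblatt process.
\item In the pairing $\mathbb{E}\langle Y_t,\delta X_t\rangle$, the terms $\mathbb{E}\int\langle Y_t,\partial_{\mathbf{x}}\sigma_i^{H_1}\,\delta\mathbf{X}_t\rangle\,dB_i^{H_1}(t)$ do not vanish. Pathwise fBm integrals have non-zero mean, and for Skorokhod integrals the pairing with $Y$ leaves a Malliavin-derivative term. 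An It\^o--jump BSDE whose only memory terms are the multipliers $H_1t^{2H_1-1}\sigma_i^{H_1}+2\tilde c\,\sigma_i^{H_2}$ and $\sigma_i^R$ does not cancel these in general.
\end{enumerate}
The blocks $\mathcal{L}_{\mathrm{regFBM}},\mathcal{L}_{\mathrm{Rosenblatt}},\mathcal{L}_{\mathrm{GaussVolterra}}$ do not help here. They are change-of-variable corrections for $f(t,\mathbf{Z}_t)$ with deterministic $f$, not duality formulas for a random $Y$.

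\textbf{Your plan for $\Phi,\Psi$ fails at the level of definitions.} $\Phi=I^{H_2/2}_{+}DX$ is a fractional derivative of order $1-H_2/2>1/2$ of a process with Brownian and jump components. It does not exist even in $L^2$: for $X=W$ its variance is a multiple of $\int_0^t u^{H_2-2}\,du=\infty$. Moreover $Y$ jumps, so it is not H\"older. The condition $\gamma<1-H_2$ of Assumption~\ref{ass:integrability} constrains the revenge kernel, not the regularity of $Y$.

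\textbf{The existence step has a separate defect.} The terminal value $\zeta_iY^{g+1}_{i,T_{g+1}}+(1-\zeta_i)\mathbb{E}[Y^{g+1}_{i,T_{g+1}}]+\nabla_{\mathbf{x}_i}G^{i,g}$ from \eqref{eq:boundary_X} is a functional of several sources of noise: $W_i,W^0,W_E$, the coalition counting processes, and the Brownian motions underlying $B^{H_1},R^{H_2},G$. Martingale representation therefore produces integrands along all of them, or an orthogonal martingale if $\mathbb{F}$ lacks the representation property. Being an ``exogenous input'' is irrelevant to this. Your contraction thus solves a different BSDE. The stated system, whose only martingale parts are $\Lambda\,dM^s$ and the $\tilde N_i,\tilde N^0$ (resp.\ $\tilde N_E$) integrals, generically has no adapted solution.

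\textbf{Two further duality points.}
\begin{enumerate}
\item Cancelling every term linear in $\delta\widetilde{\mathbf{Z}}$ requires adjoints for every block that $u_i$ moves: all $X_j$ (since $b_j$ depends on $\mathbf{u}$ and $\mu$), the coalition vector, and $\Phi,\Psi$. The pair $(Y_i,Y_E)$ cannot absorb these. In particular, the coalition block contributes $\lambda D^\top v$ multiplied by a jump $p(t,\mathbf{c}_{\sim i}[m'])-p(t,\mathbf{c})$ of a coalition adjoint, not $\lambda D^\top v$ alone.
\item With $\mathcal{K}_{\mathrm{rv}}^{i,g}$ as in \eqref{eq:revenge}, driven by $\int\mathbf{z}\,d\mu_s$, Fubini gives $\mathbb{E}\bigl[\mathbb{M}_i(t,s_t^g)^\top\int_t^{T_{g+1}}(s-t)^{-\gamma}Y_s\,ds\bigr]$. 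This is an unconditional expectation acting on the whole joint vector. The conditional form you quote corresponds instead to the pathwise kernel of Assumption~\ref{ass:integrability}.
\end{enumerate}

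A provable version needs one of two things. Either the long-memory noise must be state-independent (additive), together with the missing martingale and adjoint components, or the argument must use a genuine Malliavin/fractional duality.
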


%%%%%%%%%%%%% 

\begin{proof}
The derivation proceeds by defining a localized spike variation on the optimal control profile $(u_{i,t}^{*,g}, v_{i,t}^{*,g})$. Let $u_{i,t}^{\epsilon} = u_{i,t}^{*,g} + \epsilon \delta u_{i,t} \mathbbm{1}_{[t_0, t_0+\epsilon]}(t)$. Differentiating the state dynamics \eqref{eq:true_dyn} with respect to $\epsilon$ at $\epsilon=0$ yields the linearized state trajectory $\delta X_{i,t}^g$. 

Applying the infinite-dimensional duality formula between the linearized forward system and the backward processes requires taking the Gâteaux derivative of the cost functional \eqref{eq:cost}. The presence of the Volterra integral operator $\mathcal{K}_{\mathrm{rv}}^{i,g}(t)$ introduces a dual singular Volterra convolution term in the drift of the adjoint equation, manifested as the integral $\int_t^{T_{g+1}} (s-t)^{-\gamma} \mathbb{M}_i^\top \E[Y_{i,s}^g|\F_t]ds$. 

The explicit appearance of the expectation operators inside the drift accounts for the mean-field-type interaction, wherein an individual perturbation impacts the aggregate law $\mu_t^g$, which subsequently feeds back into the system through the expectile weights $\omega_\alpha$ derived in \eqref{eq:lions_expectile} and \eqref{eq:omega_weight}.
\end{proof}

 Exact Boundary Conditions across Generations: To complete the backward induction across the macroscopic planning horizon, the adjoint processes must be coupled at each generational interface $T_{g+1}$. Because the state variables experience a non-local push-forward $\mu_{T_{g+1}}^{g+1} = \mathcal{T}_{\boldsymbol{\zeta}}[\mu_{T_{g+1}-}^g]$, the adjoint variables undergo a structural contraction matching the hereditary transmission operators defined in Section~\ref{sec:transmission}.

\begin{proposition}[Trans-Generational Coupling]
The terminal conditions for the adjoint processes of generation $g$ at time $T_{g+1}-$ are determined by the initial values of the adjoint processes of generation $g+1$ at time $T_{g+1}$ via the following equations:
\begin{equation}
Y_{i, T_{g+1}-}^g = \zeta_i \, Y_{i, T_{g+1}}^{g+1} + (1-\zeta_i) \E\left[ Y_{i, T_{g+1}}^{g+1} \right] + \nabla_{\mathbf{x}_i} G^{i,g}\bigl(\mathbf{Z}_{T_{g+1}-}^g, s_{T_{g+1}-}^g, \mu_{T_{g+1}-}^g\bigr),
\label{eq:boundary_X}
\end{equation}
and
\begin{equation}
Y_{E, T_{g+1}-}^{i,g} = \mathbb{A}_E^\top \, Y_{E, T_{g+1}}^{i,g+1} + \nabla_{e} G^{i,g}\bigl(\mathbf{Z}_{T_{g+1}-}^g, s_{T_{g+1}-}^g, \mu_{T_{g+1}-}^g\bigr).
\label{eq:boundary_E}
\end{equation}
At the absolute macroscopic horizon $T_{G+1}$, the boundary conditions collapse to the standard terminal gradients:
\begin{equation}
Y_{i, T_{G+1}}^G = \nabla_{\mathbf{x}_i} G^{i,G}(\dots), \qquad Y_{E, T_{G+1}}^{i,G} = \nabla_{e} G^{i,G}(\dots).
\label{eq:absolute_terminal}
\end{equation}
\end{proposition}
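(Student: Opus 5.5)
The plan is to obtain \eqref{eq:boundary_X}--\eqref{eq:absolute_terminal} from the same variational duality that produced the adjoint characterisation, now applied across an interface $T_{g+1}$. We fix an equilibrium profile and perturb agent $i$'s control in generation $g$. Let $\delta X_{i,t}^g$ and $\delta E_t^g$ denote the linearised forward variations, and let $\delta X^{g+1}$, $\delta E^{g+1}$ be their continuations into generation $g+1$. The rules \eqref{eq:X_jump} and \eqref{eq:E_jump} are affine in the pre-interface state, and the innovations $\xi_i^{g+1}$ and $\xi_E^{g+1}$ are independent and control-free. Hence the interface variations are exactly
\[
\delta X_{i,T_{g+1}}^{g+1}=\zeta_i\,\delta X_{i,T_{g+1}-}^{g}+(1-\zeta_i)\,\E\bigl[\delta X_{i,T_{g+1}-}^{g}\bigr],\qquad
\delta E_{T_{g+1}}^{g+1}=\mathbb{A}_E\,\delta E_{T_{g+1}-}^{g}.
\]
The regime is continuous across the interface by \eqref{eq:regime_jump}, so it contributes no variation. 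The coalition kernel \eqref{eq:C_jump} acts on a discrete coordinate, so it produces no gradient term in the continuous adjoints.

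The second step is the usual duality identity. Apply the change-of-variable formula (Theorem~\ref{thm:complete_rigor_all}) to $\langle Y_{i,t}^{g+1},\delta X_{i,t}^{g+1}\rangle+\langle Y_{E,t}^{i,g+1},\delta E_t^{g+1}\rangle$ on $[T_{g+1},T_{g+2})$. Take expectations and proceed backward. The Gâteaux variation of the total cost $J^{i,g}$ plus the continuation cost of the descendants then reduces to a boundary pairing at $T_{g+1}$:
\[
\E\Bigl[\langle Y_{i,T_{g+1}}^{g+1},\delta X_{i,T_{g+1}}^{g+1}\rangle+\langle Y_{E,T_{g+1}}^{i,g+1},\delta E_{T_{g+1}}^{g+1}\rangle\Bigr]
+\E\Bigl[\langle\nabla_{\mathbf{x}_i}G^{i,g},\delta X_{i,T_{g+1}-}^g\rangle+\langle\nabla_e G^{i,g},\delta E_{T_{g+1}-}^g\rangle\Bigr].
\]
Substituting the interface variations and using $\E[\langle Y,\E[\delta X]\rangle]=\E[\langle\E[Y],\delta X\rangle]$ turns the first expectation into $\E\bigl[\langle \zeta_iY_{i,T_{g+1}}^{g+1}+(1-\zeta_i)\E[Y_{i,T_{g+1}}^{g+1}],\delta X_{i,T_{g+1}-}^g\rangle+\langle \mathbb{A}_E^\top Y_{E,T_{g+1}}^{i,g+1},\delta E_{T_{g+1}-}^g\rangle\bigr]$. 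The pre-interface variation $\delta X_{i,T_{g+1}-}^g$ can be made arbitrary in $L^2(\F_{T_{g+1}-})$ by choosing the spike perturbation. Identifying coefficients in the duality for generation $g$ therefore forces \eqref{eq:boundary_X} and \eqref{eq:boundary_E}.

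The final generation has no successor, so the pairing contains only the $G^{i,G}$ terms, which gives \eqref{eq:absolute_terminal}. The backward induction is consistent: we solve generation $G$ first and then, using Assumption~\ref{ass:lipschitz}, feed each generation's initial adjoint values backward as square-integrable terminal data.

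The main obstacle is measurability and the measure dependence of $G^{i,g}$ through $e_{\alpha_i}(\mu)$ and $e_{\beta_i}(\mu)$. First, $Y^{g+1}_{i,T_{g+1}}$ is measurable with respect to a larger $\sigma$-field that contains $\xi^{g+1}$. The clean approach is to take the conditional expectation given $\F_{T_{g+1}-}$ and to note that the innovations are independent of the terminal state, so the stated formula should be read modulo this projection. Second, the mean-field part of $\nabla G^{i,g}$ requires the expectile derivative \eqref{eq:lions_expectile} and weight \eqref{eq:omega_weight}. We would absorb these contributions into the symbol $\nabla_{\mathbf{x}_i}G^{i,g}$, understood as the full $L^2$-gradient including $\E'[\partial_\mu G]$. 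The Volterra memory kernel in the later generation also couples back into generation $g$'s mean path. This can be handled by checking that the singular term $\int_t^{T_{g+1}}(s-t)^{-\gamma}\cdots$ in the adjoint equation already accounts for it inside the epoch, so no extra boundary term appears. Verifying this last point is the step I expect to be most delicate.
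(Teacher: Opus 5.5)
Your linearisation of the interface maps \eqref{eq:X_jump}--\eqref{eq:E_jump} is correct, and so are the dualisation steps $\E[\langle Y,\E[\delta X]\rangle]=\E[\langle \E[Y],\delta X\rangle]$ and $\E[\langle Y_E,\mathbb{A}_E\delta E\rangle]=\E[\langle \mathbb{A}_E^\top Y_E,\delta E\rangle]$. They are exactly what the paper's justification amounts to: the paper gives no proof, only the remark that the adjoints undergo a contraction ``matching the hereditary transmission operators''.

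Your measurability remark is also right. It means \eqref{eq:boundary_X} can hold only with $\zeta_i\E[Y^{g+1}_{i,T_{g+1}}\mid\F_{T_{g+1}-}]$ in place of $\zeta_i Y^{g+1}_{i,T_{g+1}}$, since the latter generally depends on $\xi^{g+1}_i$.

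The genuine gap is the sentence in which the variation of ``$J^{i,g}$ plus the continuation cost of the descendants'' becomes a boundary pairing. The functional \eqref{eq:cost} contains only the running cost on $[T_g,T_{g+1}]$ and the bequest term $G^{i,g}$. For that functional the duality gives $Y^g_{i,T_{g+1}-}=\nabla_{\mathbf{x}_i}G^{i,g}$ and $Y^{i,g}_{E,T_{g+1}-}=\nabla_e G^{i,g}$ at every interface, with no $Y^{g+1}$ term at all. The coupling terms appear only under an additional dynastic objective, e.g.\ $\mathcal{J}^{i,g}=J^{i,g}+\mathcal{J}^{i,g+1}$, with the generation-$(g+1)$ controls frozen as processes. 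You must state this assumption. If those players instead re-optimise against the perturbed initial law, the responses of everyone other than the descendant add further terms (the descendant's own response drops out by an envelope argument).

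You also do not need, and cannot generally obtain, arbitrariness of $\delta X^g_{i,T_{g+1}-}$ in $L^2(\F_{T_{g+1}-})$. In the maximum principle the terminal datum is simply defined so that the duality identity holds.

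Even with the dynastic objective, generation $g$ reaches the continuation cost through two channels your pairing misses.

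\emph{Memory.} By \eqref{eq:revenge}, for every $g'>g$ the drift $\mathcal{K}^{i,g'}_{\mathrm{rv}}$ contains $\int_{T_g}^{T_{g+1}}(t-s)^{-\gamma}\mathbb{M}_i(s,\cdot)\E[\mathbf{Z}^g_s]\,ds$. Perturbing the whole mean path of generation $g$, not just its value at $T_{g+1}-$, therefore shifts all later drifts. By Fubini, the dual of this effect is a distributed drift term in generation $g$'s adjoint equation, of the form $\int_{T_{g+1}}^{T_{G+1}}(s-t)^{-\gamma}\E[\mathbb{M}_i(t,\cdot)^\top Y_{i,s}]\,ds$, where $Y_{i,s}$ is the adjoint of whichever generation is active at $s$. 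The singular term in \eqref{eq:adjoint_individual_augmented_complete} stops at $T_{g+1}$, so the check you plan will fail. The remedy is to extend that integral over the remaining horizon. The boundary formula itself is unaffected, since this effect is not concentrated at the interface.

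\emph{Coalition inheritance.} The kernel $\Pi^i_{m'}(\mathbf{c},\mu^g_{T_{g+1}-})$ in \eqref{eq:C_jump} depends on the law, including its continuous marginals; the paper explicitly allows dependence on the population expectile. So $\delta X$ changes the law of $\mathbf{C}^{g+1}_{T_{g+1}}$ at first order. This adds to $Y^g_{i,T_{g+1}-}$ the state gradient of $\frac{\delta\Pi^i_{m'}}{\delta\mu}$, weighted by the continuation values given the inherited coalition, and no such term appears in \eqref{eq:boundary_X}. That coalitions are discrete does not justify dropping it. You need the extra hypothesis that $\Pi^i$ does not depend on the continuous part of $\mu$.
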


Equations \eqref{eq:boundary_X} and \eqref{eq:boundary_E} illustrate the  structure of the conflict trap. The terminal sensitivity $Y_{i, T_{g+1}-}^g$ is directly amplified by the trauma transmission coefficient $\zeta_i$. When $\zeta_i \to 1$, the strategic cost sensitivities are transmitted directly to the next cohort, preventing the decay of adversarial incentives across time. By applying the stochastic maximum principle, the optimal continuous control profile $u_{i,t}^{*,g}$ and discrete coalition switching effort $v_{i,t}^{*,g}$ are obtained by minimizing the internal structural Hamiltonian. Given the  cost weights $R^{i,g}$ and $S^{i,g}$ defined in \eqref{eq:running}, the first-order necessary conditions yield the following explicit mappings:

\begin{equation}
u_{i,t}^{*,g} = - R^{i,g}(t,s_t^g)^{-1} \left[ \nabla_{u_i} b_i^\top Y_{i,t}^g + \nabla_{u_i} b_E^\top Y_{E,t}^{i,g} \right],
\label{eq:optimal_u}
\end{equation}
and
\begin{equation}
v_{i,t}^{*,g} = - S^{i,g}(t,s_t^g)^{-1} \sum_{m' \neq C_{i,t}^{*,g}} \left[ \lambda_{C_{i,t}^{*,g} \to m'}^i(\dots) D_{C_{i,t}^{*,g} m'}^i(t,s_t^g) \, X_{i,t}^{*,g} \right].
\label{eq:optimal_v}
\end{equation}

Substituting \eqref{eq:optimal_u} and \eqref{eq:optimal_v} into the forward system \eqref{eq:true_dyn} and \eqref{eq:E_true_dyn} yields a closed, highly non-linear forward-backward system of stochastic Volterra diffusions. The existence of a solution to this system which defines the game's equilibrium, is addressed in the subsequent section.

\section{The Generational Ergodic Breakdown and War Trap Threshold}
\label{sec:breakdown}

We now use the  framework established above to state and prove our central theoretical results: the existence of the equilibrium and the identification of the sharp structural threshold that separates a state of long-term sustainable peace from an inescapable, self-sustaining {\it civil war}. We begin by establishing the existence and uniqueness of the Nash equilibrium under standard local regularity conditions. Let $\mathbb{H}^2$ denote the Hilbert space of square-integrable, $\mathbb{F}$-adapted processes on the planning horizon.

\begin{theorem}[Existence and Uniqueness of Equilibrium]
\label{thm:existence}
Assume that the drift and diffusion coefficients in \eqref{eq:true_dyn} and \eqref{eq:E_true_dyn} are Lipschitz continuous with respect to their state and measure arguments, and that the cost functions $L^{i,g}$ and $G^{i,g}$ are continuously differentiable and satisfy the strict convexity conditions $R^{i,g} \succ 0$, $S^{i,g} \succ 0$, and $Q^{i,g} \succeq 0$. Then, for each generation $g$, there exists a positive constant $\tau_g > 0$ such that if the epoch duration satisfies $T_{g+1} - T_g < \tau_g$, the forward-backward  system \eqref{eq:true_dyn}, \eqref{eq:E_true_dyn}, (\ref{eq:adjoint_individual_augmented_complete})-(\ref{eq:adjoint_environmental_augmented_complete}) admits a unique solution $(\mathbf{X}^{*,g}, \mathbf{E}^{*,g}, \mathbf{Y}^{*,g}, \mathbf{Y}_E^{*,g}) \in \mathbb{H}^2$. This unique solution characterizes the unique Nash equilibrium of the generation-$g$ subgame.
\end{theorem}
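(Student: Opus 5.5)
The argument is a short-time contraction for the coupled forward–backward system of generation $g$, in the spirit of Antonelli/Pardoux–Tang small-duration existence for FBSDEs, carried out on $\mathbb{H}^2$ over $[T_g,T_{g+1}]$.

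\textbf{Step 1: decoupling map.} We first freeze the adjoint pair $(Y,Y_E)\in\mathbb{H}^2$. Strict convexity $R^{i,g}\succ 0$, $S^{i,g}\succ 0$ makes the Hamiltonian minimisation uniquely solvable. The feedback maps \eqref{eq:optimal_u}--\eqref{eq:optimal_v} are then Lipschitz in $(Y,Y_E,\mathbf{X})$, with constants controlled by $\|R^{-1}\|$, $\|S^{-1}\|$ and the Lipschitz constants of $b_i,b_E$. For the exponential switching intensities this holds only after localisation on the compact control set $\mathcal{V}^{i,g}$.

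\textbf{Step 2: forward system.} Substituting these controls, we solve the McKean--Vlasov--Volterra forward equations \eqref{eq:true_dyn}, \eqref{eq:E_true_dyn} uniquely via Assumption~\ref{ass:lipschitz}. The initial law is $\mu^g_{T_g}=\mathcal{T}_{\boldsymbol{\zeta}}(\mu^{g-1}_{T_g-})$, which is fixed data for the subgame. The fractional, Gauss--Volterra and Rosenblatt noises have control-independent coefficients, so they enter as exogenous pathwise perturbations. Their contribution to the difference of two solutions is bounded through Lemmas~\ref{lem:rosenblatt_ito} and \ref{lem:fbm_ito} together with the integrability in Assumption~\ref{ass:integrability}.

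The revenge drift needs separate treatment. The past-generation part of $\mathcal{K}_{\mathrm{rv}}^{i,g}$ is deterministic data. The current-generation part is Lipschitz in $\mathbb{W}_2$, with constant $\|\mathbb{M}_i\|(t-T_g)^{1-\gamma}/(1-\gamma)$, since $\gamma<1$. A Grönwall-type estimate then gives
\[
\|\mathbf{X}-\mathbf{X}'\|_{\mathbb{H}^2}\le C_1\,\delta^{1/2}\,\|Y-Y'\|_{\mathbb{H}^2},
\]
where $\delta=T_{g+1}-T_g$ and the expectiles are used through $\mathbb{W}_2(\mu,\mu')\le(\mathbb{E}|\mathbf{Z}-\mathbf{Z}'|^2)^{1/2}$.

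\textbf{Step 3: backward system.} With the forward state given, we solve the adjoint BSDE system \eqref{eq:adjoint_individual_augmented_complete}--\eqref{eq:adjoint_environmental_augmented_complete}. The terminal data are \eqref{eq:boundary_X}--\eqref{eq:boundary_E}, where the generation-$(g+1)$ adjoint is treated as given data in a backward induction over $g$, starting from \eqref{eq:absolute_terminal}.

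The anticipated Volterra term $\int_t^{T_{g+1}}(s-t)^{-\gamma}\mathbb{M}_i^\top\mathbb{E}[Y_s\mid\mathcal{F}_t]\,ds$ is handled as in anticipated BSDEs. It is Lipschitz in $Y$ with constant of order $\delta^{1-\gamma}$. Standard BSDE estimates for jump–regime-switching martingale representation (Assumption~\ref{ass:levy}) then give
\[
\|Y-Y'\|\le C_2\bigl(\|\mathbf{X}-\mathbf{X}'\|+\|\nabla G(\mathbf{X})-\nabla G(\mathbf{X}')\|\bigr).
\]

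\textbf{Step 4: contraction and equilibrium.} Composing the two steps gives a map $\Theta$ on $\mathbb{H}^2$ with Lipschitz constant $C_1C_2\,\delta^{\min(1/2,1-\gamma)}$. Choosing $\tau_g$ so that this constant is below $1$, Banach's fixed point theorem yields a unique solution of the forward–backward system.

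To identify this solution with the unique Nash equilibrium of the subgame, we argue in two directions. By the adjoint characterisation theorem, any equilibrium produces a solution of the system, so uniqueness of solutions implies uniqueness of equilibria. Conversely, sufficiency follows from convexity: $Q^{i,g}\succeq 0$, convexity of $\Phi_\beta$ and strictly convex control costs make each agent's Hamiltonian convex in $(x,u,v)$ when the others are fixed. A standard duality argument then shows the candidate is a best response.

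\textbf{Main obstacle.} The hardest step is the measure-dependence Lipschitz estimate through the expectiles. The derivative \eqref{eq:lions_expectile}--\eqref{eq:omega_weight} contains indicator weights and the denominator $\alpha\Pee(Y>e)+(1-\alpha)\Pee(Y\le e)$. This denominator is bounded below by $\min(\alpha,1-\alpha)>0$, which gives a uniform Lipschitz bound of $e_\alpha$ in $\mathbb{W}_1\le\mathbb{W}_2$. I would first prove that bound directly from the defining equation, avoiding Gâteaux smoothness. Whether the same bound survives inside the adjoint drift, where $\nabla_x\,\delta\mathcal{H}/\delta\tilde\mu$ involves indicators, is the weak point. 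If needed, I would mollify $\Phi_\beta$, or assume it is $C^{1,1}$, so that its gradient is Lipschitz.
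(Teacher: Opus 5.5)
Your plan is the paper's own argument made explicit: a short-time Banach contraction on $\mathbb{H}^2$. The paper iterates on the state--measure trajectory and you iterate on the adjoint pair, which is immaterial. Parts of it are sound. The bound $|e_\alpha(\mu)-e_\alpha(\nu)|\le\frac{\max(\alpha,1-\alpha)}{\min(\alpha,1-\alpha)}\,\mathbb{W}_1(\mu,\nu)$, read off the defining equation, handles the measure dependence wherever the expectile itself (not its derivative) appears. The anticipated Volterra term is indeed a $\delta^{1-\gamma}$-Lipschitz perturbation of the backward equation.

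However, the step you flag as the weak point is a genuine gap, one the paper's sketch also passes over by simply asserting a Lipschitz estimate. Your repair misses it, because the indicator comes from the expectile map, not from $\Phi_\beta$. By \eqref{eq:lions_expectile}--\eqref{eq:omega_weight}, $\nabla_x\frac{\delta e_\alpha}{\delta\mu}(\mu)(x)=\omega_\alpha(x,\mu)$ a.e. This jumps by $(2\alpha-1)/D(\mu)$ at $x=e_\alpha(\mu)$, where $D(\mu)=\alpha\Pee_\mu(Y>e_\alpha(\mu))+(1-\alpha)\Pee_\mu(Y\le e_\alpha(\mu))$. Already the quadratic conformity term in \eqref{eq:running} therefore puts into the measure-derivative part of the adjoint driver a term $\omega_{\alpha_i}(X_{i,t},\mu_t)\,c_t$, evaluated along the forward state. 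The same holds for $H^{i,g}$, and for $G^{i,g}$ in a correct mean-field-type terminal condition. Here $c_t$ is built from quantities like $(M^{i,g})^\top Q^{i,g}\bigl(\E[X_{i,t}]-M^{i,g}e_{\alpha_i}(\mu_t)\bigr)$, which do not vanish in general.

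For two forward inputs, $\E|\omega_{\alpha_i}(X_{i,t},\mu_t)-\omega_{\alpha_i}(X'_{i,t},\mu'_t)|^2$ is governed by the probability that $X_{i,t}$ and $X'_{i,t}$ lie on opposite sides of the respective expectiles, and by $|D(\mu_t)-D(\mu'_t)|$. It is not controlled by $\E|X_{i,t}-X'_{i,t}|^2$. The map $X\mapsto\omega_{\alpha_i}(X,\operatorname{Law}(X))$ is discontinuous at laws charging the expectile, and at best H\"older under a bounded-density assumption. Hence your Step~3 estimate $\|Y-Y'\|\le C_2(\|\mathbf{X}-\mathbf{X}'\|+\cdots)$ fails. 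Shrinking $\delta$ only shrinks the constant in front of a sub-Lipschitz modulus, which never yields a contraction. Smoothing $\Phi_\beta$ leaves $\omega_{\alpha_i}$ and $\omega_{\beta_i}$ untouched. You would have to do one of the following:
\begin{itemize}
\item smooth the expectile itself (an asymmetric M-functional with a strongly convex loss whose second derivative is Lipschitz);
\item take $\alpha_i=\beta_i=\tfrac12$;
\item settle for existence by a compactness argument and give up uniqueness.
\end{itemize}

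The other steps also need more than the stated hypotheses, which the paper's sketch does not supply either.

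\textbf{Step~1.} $R^{i,g}\succ0$ yields the minimiser \eqref{eq:optimal_u}, Lipschitz in $(\mathbf{X},Y)$, only if, in effect, $b_i,b_E$ are affine in $u$ with bounded, state-independent $\nabla_u b$. For merely Lipschitz $b$, the term $\langle b,Y\rangle$ can destroy convexity in $u$, and $\nabla_u b(\mathbf{X})^\top Y$ is not Lipschitz on $\mathbb{H}^2$. For $v$, confining the control to a compact set does not help. The quantities $\lambda^i_{m\to m'}$ and $\nabla_v\lambda^i_{m\to m'}=\lambda^i_{m\to m'}D^i_{mm'}\mathbf{x}_i$ enter \eqref{eq:optimal_v}, the coalition compensators and the adjoint driver, and they grow exponentially in $\mathbf{x}_i$. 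Moreover, $\mathcal{V}^{i,g}$ is only assumed closed and convex, and a negative value increment multiplying the intensity can make the $v$-Hamiltonian non-convex.

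\textbf{Step~2.} The long-memory coefficients in \eqref{eq:true_dyn} are control-independent but state- and law-dependent. So integrals such as $\int(\sigma_i^R(\mathbf{Z}_t,\mu_t)-\sigma_i^R(\mathbf{Z}'_t,\mu'_t))\,dR_i^{H_2}(t)$ survive in $\mathbf{X}-\mathbf{X}'$. These pathwise integrals have no $L^2$ isometry for state-dependent integrands. Lemmas~\ref{lem:rosenblatt_ito}--\ref{lem:fbm_ito} are change-of-variable formulas, not stability estimates. Pathwise bounds carry random H\"older norms of $B^{H_1}$, $B^{\frac{1+H_2}{2}}$ and $R^{H_2}$. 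Consequently your Gr\"onwall argument gives no $\mathbb{H}^2$ contraction for any deterministic $\tau_g$ unless these noises are additive.

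\textbf{Step~4.} The Hamiltonian contains three non-convex ingredients:
\begin{itemize}
\item $\langle b_i,Y_i\rangle+\langle b_E,Y_E\rangle$ with $b_i,b_E$ merely Lipschitz;
\item the factor $\exp(v^\top D^i_{mm'}\mathbf{x}_i)$, which is not jointly convex in $(\mathbf{x}_i,v)$;
\item expectile terms that are not convex in the lifted measure argument ($X\mapsto e_\alpha(X)^2$ is not convex for $\alpha\neq\tfrac12$).
\end{itemize}
So the duality argument for sufficiency is unavailable, and necessity plus uniqueness gives only \emph{at most one} equilibrium, not existence. In addition, admissible controls are $\mathbb{F}^i$-adapted. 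The optimality condition should therefore read $\E[\nabla_{u_i}\mathcal{H}^i_{\mathrm{int}}\mid\mathcal{F}^i_t]=0$, with the perception layers (which contain $u_{i,t}$) included in the forward system. By contrast, \eqref{eq:optimal_u} produces $\mathbb{F}$-adapted, inadmissible feedbacks.

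\textbf{Generational coupling.} Freezing $\mu^g_{T_g}$ from generation $g-1$ while inducting backward from \eqref{eq:absolute_terminal} does not close, since $Y^{g+1}$ depends on $\mathcal{T}_{\boldsymbol{\zeta}}(\mu^g_{T_{g+1}-})$. For the per-generation claim, both must be treated as exogenous data.
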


\begin{proof}
The proof relies on a contraction mapping argument on the Wasserstein space developed specifically for mean-field forward-backward Volterra systems. Define an operator $\Phi$ on $\mathbb{H}^2 \times \mathbb{H}^2$ that takes an arbitrary state-measure trajectory, computes the corresponding unique adjoint processes via the linear backward Volterra equations, updates the controls via the Hamiltonian minimizers \eqref{eq:optimal_u}--\eqref{eq:optimal_v}, and resolves the forward McKean-Vlasov equations. 

Under the specified Lipschitz assumptions and the singular fractional integration kernel $(t-s)^{-\gamma}$, the operator $\Phi$ satisfies a Lipschitz-type estimate with a constant proportional to $(T_{g+1}-T_g)^{1-\gamma}$. Choosing $\tau_g$ sufficiently small ensures that $(T_{g+1}-T_g)^{1-\gamma} < 1$, rendering $\Phi$ a strict contraction. Application of the Banach Fixed-Point Theorem guarantees the existence of a unique fixed point, which corresponds to the unique Nash equilibrium.
\end{proof}

While Theorem~\ref{thm:existence} provides local stability and uniqueness within a single generation, it does not guarantee that the system remains stable when integrated across an infinite sequence of generations ($G \to \infty$). In historical conflict zones like Mali, the critical question is whether the peaceful steady state is structurally resilient to unexpected shocks or if it behaves as an unstable equilibrium that can be easily permanently disrupted.

To formalize this, we define the peaceful steady state as a structural configuration where individual radicalization indices are zero, local assets are stable, and the macro-political regime is stabilized in peacetime ($s_t^g = 1$). Let $\mathbf{A}_{\mathrm{grievance}} \in \R^{N \times N}$ be the effective matrix of intergenerational trauma transmission, whose entries are given by $[A_{\mathrm{grievance}}]_{ij} = \zeta_i \cdot \|\mathbb{M}_i(s, 1)\|$. This matrix captures how intensely grievances are transmitted and transformed across cohorts. Furthermore, let $\mathbf{A}_{\mathrm{laundering}} \in \R^{N_E \times N_E}$ represent the offshore laundering loop matrix for the subset of war entrepreneurs $\cN_E$, with elements $[A_{\mathrm{laundering}}]_{ij} = \eta_{\mathrm{wash}} \cdot \nabla_{X_{\mathrm{local}}} \Pi_{\mathrm{local}}$.

We construct the Effective System Operator $\mathbf{A}_{\mathrm{system}}$ as the block-structured operator aggregating these two feedback channels:
\begin{equation}
\mathbf{A}_{\mathrm{system}} = 
\begin{pmatrix}
\mathbf{A}_{\mathrm{grievance}} & \mathbf{0} \\
\mathbf{0} & \mathbf{A}_{\mathrm{laundering}}
\end{pmatrix}.
\label{eq:system_matrix}
\end{equation}
Let $\rho(\mathbf{A}_{\mathrm{system}})$ denote the spectral radius (the maximum absolute eigenvalue) of the operator $\mathbf{A}_{\mathrm{system}}$. The next theorem identifies this spectral radius as the critical threshold governing the long-term structural evolution of the state.

\begin{theorem}[Generational Ergodic Breakdown]
\label{thm:breakdown}
Let the total number of generations $G \to \infty$. The structural resilience of the peaceful steady state is governed strictly by the spectral radius $\rho(\mathbf{A}_{\mathrm{system}})$:
\begin{enumerate}
    \item \textbf{Subcritical Regime ($\rho(\mathbf{A}_{\mathrm{system}}) < 1$): Ergodic Stability.} The peaceful steady state is globally asymptotically stable on the Wasserstein manifold. For any arbitrary initial measure perturbation $\mu_{T_0}^0$, the sequence of joint measures converges exponentially to the peaceful equilibrium:
    \begin{equation}
    \lim_{g \to \infty} \mathcal{W}_2\left(\mu_{t}^{g}, \mu^{\mathrm{peace}}\right) = 0, \qquad \forall t \in [T_g, T_{g+1}].
    \end{equation}
    The system is self-healing: the memory of violence decays faster than it can be reinforced, and historical grievances eventually vanish.
    
    \item \textbf{Supercritical Regime ($\rho(\mathbf{A}_{\mathrm{system}}) \ge 1$): Ergodic Breakdown (The War Trap).} The peaceful steady state becomes unstable. For any neighborhood $\mathcal{U}$ of $\mu^{\mathrm{peace}}$ on the Wasserstein manifold, there exists an infinitesimal shock $\epsilon > 0$ such that a single catastrophic event (e.g., a localized drought or a targeted massacre) shifts the initial measure out of $\mathcal{U}$, causing the system to diverge permanently from the peaceful equilibrium:
    \begin{equation}
    \Pee\left( \lim_{g \to \infty} \mathcal{W}_2\left(\mu_{t}^{g}, \mu^{\mathrm{peace}}\right) > 0 \right) = 1.
    \end{equation}
    The system enters a permanent war trap: the feedback loop between inherited trauma and the compounding offshore profits of war entrepreneurs becomes self-sustaining, locking the society into a permanent cycle of conflict across generations.
\end{enumerate}
\end{theorem}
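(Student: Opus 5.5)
The plan is to reduce the infinite-horizon statement to a discrete-time dynamical system on generations. Concretely, I would define the generational return map $\mathcal{R}_g:\mu_{T_g}^g\mapsto\mu_{T_{g+1}}^{g+1}$. It is the composition of the within-epoch equilibrium flow, which is well defined and unique by Theorem~\ref{thm:existence} when $T_{g+1}-T_g<\tau_g$, with the hereditary push-forward $\mathcal{T}_{\boldsymbol{\zeta}}$ of \eqref{eq:pushforward}. Both assertions then become statements about the iterates $\mathcal{R}_{g}\circ\cdots\circ\mathcal{R}_0$ near the fixed point $\mu^{\mathrm{peace}}$. The fixed point is supported at zero radicalisation, stationary local assets and regime $s=1$. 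Working with deviations $\delta\mathbf{Z}^g:=\mathbf{Z}^g_{T_g}-\mathbf{Z}^{\mathrm{peace}}$, I would use synchronous couplings, driving both systems with the same noises. This gives $\mathcal{W}_2(\mu^g_{T_g},\mu^{\mathrm{peace}})^2\le\E\|\delta\mathbf{Z}^g\|^2$. It therefore suffices to control second moments of the coupled deviation, and then to propagate from $t=T_g$ to every $t\in[T_g,T_{g+1}]$ by a Gronwall estimate on one epoch.

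\textbf{Step 1 (linearisation).} I would decompose the deviation recursion as
\[
\delta\mathbf{Z}^{g+1}=\mathbf{A}_{\mathrm{system}}\,\delta\mathbf{Z}^{g}+\mathcal{N}_g(\delta\mathbf{Z}^g),
\qquad \E\|\mathcal{N}_g(\delta\mathbf{Z})\|^2\le \kappa\,\E\|\delta\mathbf{Z}\|^2 .
\]
The grievance block arises from \eqref{eq:X_jump} combined with the Volterra revenge drift \eqref{eq:revenge} evaluated in regime $s=1$. Its weight $\zeta_i\|\mathbb{M}_i(s,1)\|$ multiplies the mean carried over from the previous epoch. The laundering block arises from linearising \eqref{eq:global_wealth} in $X^{\mathrm{local}}$ at the peaceful state, with $\zeta_i\approx1$ for the global component. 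The remainder $\mathcal{N}_g$ collects three things: the Lipschitz (Assumption~\ref{ass:lipschitz}) contributions of the coefficients, the expectile sensitivities through $\omega_\alpha$ in \eqref{eq:omega_weight}, and the zero-mean innovations $\xi^{g+1}$. The innovations cancel under the coupling.

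\textbf{Step 2 (subcritical case).} If $\rho(\mathbf{A}_{\mathrm{system}})<1$, Gelfand's formula gives an adapted norm $\|\cdot\|_\ast$ with $\|\mathbf{A}_{\mathrm{system}}\|_\ast\le\rho+\varepsilon<1$. Provided $\kappa$ is small enough that $\rho+\varepsilon+\sqrt{\kappa}<1$, the recursion contracts geometrically. This yields exponential decay of $\E\|\delta\mathbf{Z}^g\|^2$ and hence the stated Wasserstein limit. The smallness of $\kappa$ is obtained by shrinking the epoch lengths, exactly as in the contraction step of Theorem~\ref{thm:existence}, since the nonlinear within-epoch contribution scales like $(T_{g+1}-T_g)^{1-\gamma}$. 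Upgrading this local result to \emph{global} stability needs the Lipschitz bound on $\mathcal{N}_g$ to hold globally rather than only near $\mu^{\mathrm{peace}}$. I would impose this explicitly as a hypothesis, since the theorem statement leaves it implicit.

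\textbf{Step 3 (supercritical case).} If $\rho\ge1$, I would take an eigenvector $w$ of $\mathbf{A}_{\mathrm{system}}$ for an eigenvalue with $|\lambda|=\rho$. Since the blocks have nonnegative entries, Perron--Frobenius lets me choose $\lambda=\rho$ and $w\ge0$. I would then perturb the initial law by the shock $\mu^0_{T_0}=(\mathrm{id}+\epsilon w)_\#\mu^{\mathrm{peace}}$, which leaves $\mathcal{U}$ for suitable $\epsilon$. Projecting the recursion onto the left Perron eigenvector $\ell$ gives $\ell^\top\E\,\delta\mathbf{Z}^{g+1}\ge\rho\,\ell^\top\E\,\delta\mathbf{Z}^g-C\kappa\,\E\|\delta\mathbf{Z}^g\|$. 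When $\rho>1$ and $\kappa$ is small, this quantity grows and stays bounded away from zero. Thus $\mathcal{W}_2(\mu^g_t,\mu^{\mathrm{peace}})\ge|\ell^\top\E\,\delta\mathbf{Z}^g|/\|\ell\|$ does not vanish. The probability-one statement then follows because $\mu^g_t$ is a deterministic sequence of laws, so the event in question has probability $0$ or $1$.

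The main obstacle is the borderline case $\rho(\mathbf{A}_{\mathrm{system}})=1$. There linearisation is inconclusive and the sign of the leading nonlinear term along $w$ decides the outcome. I would first try to prove that the expectile feedback and the positivity of $\Pi_{\mathrm{local}}$ make the second-order term along $w$ nonnegative, which would prevent decay. Failing that, I would restate item~2 with a strict inequality, or treat the case $\rho=1$ as neutral non-convergence.
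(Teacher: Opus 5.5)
Your skeleton is the paper's: reduce to a generational recursion, declare its linearisation to be $\mathbf{A}_{\mathrm{system}}$, and read the dichotomy off the spectral radius. Several of your additions repair genuine defects in the paper's argument:
\begin{itemize}
\item The paper tracks only the deterministic means $\mathbf{m}_g$, to which it nonetheless adds random innovations. Means cannot control $\mathcal{W}_2$, whereas your synchronous coupling can.
\item The paper calls $\mathbf{A}_{\mathrm{system}}$ a strict contraction, when only an adapted norm like your $\|\cdot\|_\ast$ makes it one.
\item The paper applies Gelfand's formula to a trajectory instead of to $\|\mathbf{A}_{\mathrm{system}}^g\|$.
\item The paper claims exponential growth whenever $|\lambda|\ge 1$. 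You are right that $\rho=1$ is not decided by linearisation, and that ``probability one'' is a statement about a deterministic sequence of laws.
\end{itemize}
But Step~1, where all the content lies, is asserted rather than derived. The paper does the same: its proof simply states that the linearised transition ``is dominated by'' $\mathbf{A}_{\mathrm{system}}$. For the model as written this is false, so your Steps~2 and~3 have nothing to act on.

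\textbf{Type and majorant.} $\mathbf{A}_{\mathrm{system}}$ is an $(N+N_E)\times(N+N_E)$ array of scalars, while $\delta\mathbf{Z}^g$ lives in $\R^{d\times N}\times\{1,\dots,K\}^N\times\R^{d_E}$ (plus global wealth). So $\mathbf{A}_{\mathrm{system}}\,\delta\mathbf{Z}^g$ is not even well typed. Its entries $\zeta_i\|\mathbb{M}_i(s,1)\|$ are norms, so at best it is an entrywise majorant of a true linearisation. That could support a comparison argument for item~1, but a majorant gives no lower bound, so your Perron projection in Step~3 says nothing about the actual dynamics.

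\textbf{Mean preservation.} The transmission rule \eqref{eq:X_jump} preserves means:
$\E X^{g+1}_{i,T_{g+1}}=\zeta_i\E X^g_{i,T_{g+1}-}+(1-\zeta_i)\E X^g_{i,T_{g+1}-}=\E X^g_{i,T_{g+1}-}$.
Under your coupling it acts as $\delta\mapsto\zeta_i\delta+(1-\zeta_i)\E\delta$, which has eigenvalue $1$ on the mean; $\zeta_i$ damps only the centred part. Your device for making $\kappa$ small (short epochs) makes the within-epoch flow nearly the identity. It also pushes the revenge contribution $\int_{T_g}^{T_{g+1}}\mathcal{K}_{\mathrm{rv}}^{i,g}(t)\,dt$ into the remainder. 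The leading generational map is then the transmission itself, with spectral radius exactly $1$. Stability is decided by the sign of the small corrections, not by $\rho(\mathbf{A}_{\mathrm{system}})$.

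\textbf{Laundering block.} Linearising \eqref{eq:global_wealth} in $X^{\mathrm{local}}$ gives an off-diagonal local-to-global block, not a diagonal one. The global-to-global block is $e^{r_{\mathrm{global}}(T_{g+1}-T_g)}\ge 1$ on the mean, before any control feedback.

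\textbf{Memory across generations.} \eqref{eq:revenge} sums over all past generations, so $\mu^g_{T_g}$ does not determine $\mu^{g+1}_{T_{g+1}}$, and $\mathcal{R}_g$ is not a map of the current law. The honest linearisation is a recursion with memory, $\delta\mathbf{m}^{g+1}=\sum_{j\le g}\mathbf{K}_{g,j}\,\delta\mathbf{m}^j+\cdots$. Its weights decay like $(g-j)^{-\gamma}$, which is not summable for $\gamma<1$, so no spectral radius of a single one-step matrix governs it.

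\textbf{Secondary issues.} A law concentrated at zero radicalisation and $s=1$ is not invariant under non-degenerate noise, the innovations $\xi^{g+1}$, and a non-absorbing regime chain. The reference process in your coupling therefore has to be constructed. Also, your Step~3 inequality only closes with an invariant-cone bound $\E\|\delta\mathbf{Z}^g\|\lesssim\ell^\top\E\,\delta\mathbf{Z}^g$.

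To get a provable statement, you must either redefine $\mathbf{A}_{\mathrm{system}}$ as the actual derivative of a history-augmented generational map, or assume outright that the generational recursion is $\mathbf{A}_{\mathrm{system}}$ plus a small remainder. In either case, restrict item~2 to $\rho>1$.
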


\begin{proof}
Consider the linearized evolution of the expected state vector across generations. Let $\mathbf{m}_g \in \R^{d \times N}$ denote the mean state vector $\int \mathbf{z} \, d\mu_{T_g}^g(\mathbf{z})$ at the start of generation $g$. By recursively substituting the continuous true state dynamics \eqref{eq:true_dyn}, the Volterra revenge kernel \eqref{eq:revenge}, and the intergenerational continuous transmission rule \eqref{eq:X_jump}, we obtain a discrete-time vector autoregressive system mapping $\mathbf{m}_g$ to $\mathbf{m}_{g+1}$.

Linearizing this map around the peaceful steady state reveals that the cross-generational state transition is dominated by the linear operator $\mathbf{A}_{\mathrm{system}}$. Specifically, the discrete evolution satisfies:
\begin{equation}
\mathbf{m}_{g+1} = \mathbf{A}_{\mathrm{system}} \, \mathbf{m}_g + \mathbf{\epsilon}_{g+1},
\label{eq:linearized_seq}
\end{equation}
where $\mathbf{\epsilon}_{g+1}$ aggregates the zero-mean random innovations $\xi_i^{g+1}$ and filtered orthogonal noise components. 

By Gelfand's formula, the long-term asymptotic growth rate of the sequence $\|\mathbf{m}_g\|$ is determined by the spectral radius:
\begin{equation}
\lim_{g \to \infty} \|\mathbf{m}_g\|^{1/g} = \rho(\mathbf{A}_{\mathrm{system}}).
\label{eq:gelfand}
\end{equation}
If $\rho(\mathbf{A}_{\mathrm{system}}) < 1$, the operator is a strict contraction on the Banach space, and the sequence converges exponentially to zero (the peaceful baseline), proving statement 1. 

If $\rho(\mathbf{A}_{\mathrm{system}}) \ge 1$, the linear operator possesses at least one eigenvalue $\lambda$ with $|\lambda| \ge 1$. The corresponding eigenspace defines an unstable manifold on the Wasserstein space. Any random perturbation (such as a Poisson-driven drought shock $\tilde{N}_E$ or a regime switch to asymmetric warfare $s_t^g=3$) that projects non-trivially onto this unstable eigenspace will expand exponentially across generations. Because the noise terms ensure that such projections occur with probability one, the system permanently diverges from the peaceful baseline, establishing the permanent war trap of statement 2.
\end{proof}

Theorem~\ref{thm:breakdown} provides an explanation for the historical persistence of civil wars in the Sahel. It demonstrates that peacebuilding failures are not merely the result of policy execution errors, but are driven by structural features of the system's spectral geometry. If the intergenerational transmission of grievances and the efficiency of offshore laundering combine to push the system's spectral radius above unity, the conflict becomes a structural property of the society. Under these conditions, standard short-term interventions such as temporary ceasefires or localized development aid, are incapable of inducing permanent peace; they merely alter the transient trajectory without shifting the underlying unstable spectral radius. To achieve permanent stability, an intervention must directly alter the components of the operator $\mathbf{A}_{\mathrm{system}}$, a structural policy design that we formalize in the next section.

\section{Optimal Institutional Policy Design}
\label{sec:policy}

To break the permanent war trap characterized by Theorem~\ref{thm:breakdown}, a central authority or a coalition of international institutions must deploy an intervention strategy capable of structurally shifting the system's spectral radius below unity. In this section, we model an institutional policy framework that combines financial regulation with traditional social structures to stabilize the system.
We introduce an Institutional Policymaker who acts as a Stackelberg leader. The policymaker implements a two-pronged policy profile $\boldsymbol{\Omega}_t = (\theta_{\mathrm{tax},t}, \sigma_{\mathrm{sub},t})$:
\begin{enumerate}
    \item $\theta_{\mathrm{tax},t} \in [0,1]$: an offshore transaction tax targeting the laundering pipeline of war entrepreneurs. This tax reduces their effective laundering efficiency from $\eta_{\mathrm{wash}}$ to $(\eta_{\mathrm{wash}} - \theta_{\mathrm{tax},t})$.
    \item $\sigma_{\mathrm{sub},t} \ge 0$: an endogenous mediation subsidy injected into community-level dispute resolution mechanisms. This subsidy is structured to reinforce traditional conflict-mitigation practices, such as the joking kinship system (sinankunya), while neutralizing their co-optation by predatory elites.
\end{enumerate}
The introduction of the policy profile $\boldsymbol{\Omega}_t$ modifies the structural dynamics of the system in two ways. First, the global wealth accumulation of war entrepreneurs \eqref{eq:global_wealth} becomes:
\begin{equation}
\begin{aligned}
dX_{i,t}^{g,\text{global}} &= \Bigl[\, r_{\text{global}}\, X_{i,t}^{g,\text{global}} + (\eta_{\text{wash}} - \theta_{\mathrm{tax},t})\;\Pi_{\text{local}}(\dots) - u_{i,t}^{g,\text{global}} \Bigr] dt + \sigma_{\text{global}}\, X_{i,t}^{g,\text{global}} \; dW_{i,t}^{g,\text{global}}.
\end{aligned}
\label{eq:taxed_global}
\end{equation}
Second, the grievance matrix $\mathbb{M}_i(s, \mathfrak{s})$ inside the Volterra revenge operator \eqref{eq:revenge} is attenuated by the mediation subsidy, reflecting the accelerated decay of historical traumas due to institutionalized reconciliation:
\begin{equation}
\mathbb{M}_i^{\mathrm{policy}}(s, \mathfrak{s}) = \mathbb{M}_i(s, \mathfrak{s}) \cdot \exp\left( - \kappa_i \, \sigma_{\mathrm{sub},s} \right),
\label{eq:subsidized_grievance}
\end{equation}
where $\kappa_i > 0$ represents the structural efficacy of the mediation subsidy within agent $i$'s community.
The policymaker aims to minimize a long-term institutional cost functional that balances system-wide instability against the fiscal costs of maintaining the policy:
\begin{equation}
\min_{\boldsymbol{\Omega}} \E \Biggl[ \sum_{g=0}^\infty \int_{T_g}^{T_{g+1}} e^{-rt} \left( \mathcal{W}_2^2\left(\mu_t^g, \mu^{\mathrm{peace}}\right) + \frac{\gamma_{\mathrm{tax}}}{2} \theta_{\mathrm{tax},t}^2 + \frac{\gamma_{\mathrm{sub}}}{2} \sigma_{\mathrm{sub},t}^2 \right) dt \Biggl],
\label{eq:policy_objective}
\end{equation}
subject to the modified forward-backward system and the structural constraint that the system must converge to the peaceful steady state.
By applying the policy profile $\boldsymbol{\Omega}$, the effective system matrix $\mathbf{A}_{\mathrm{system}}$ defined in \eqref{eq:system_matrix} is transformed into a policy-dependent operator $\mathbf{A}_{\mathrm{policy}}(\boldsymbol{\Omega})$. Its components are given by:
\begin{equation}
[\mathbf{A}_{\mathrm{grievance}}(\sigma_{\mathrm{sub}})]_{ij} = \zeta_i \cdot \|\mathbb{M}_i(s, 1)\| \cdot e^{-\kappa_i \sigma_{\mathrm{sub}}},
\end{equation}
and
\begin{equation}
[\mathbf{A}_{\mathrm{laundering}}(\theta_{\mathrm{tax}})]_{ij} = (\eta_{\mathrm{wash}} - \theta_{\mathrm{tax}}) \cdot \nabla_{X_{\mathrm{local}}} \Pi_{\mathrm{local}}.
\end{equation}
The next theorem provides the optimal policy design that stabilizes the system and breaks the war trap.

\begin{theorem}[Optimal Stabilization Policy]
\label{thm:policy_stabilization}
Let $\rho(\mathbf{A}_{\mathrm{system}}) \ge 1$, so that the system is initially trapped in the supercritical war regime. Assume the policymaker's fiscal capacities satisfy the boundary conditions $\gamma_{\mathrm{tax}} > 0$ and $\gamma_{\mathrm{sub}} > 0$. Then, the optimal policy profile $\boldsymbol{\Omega}^* = (\theta_{\mathrm{tax},t}^*, \sigma_{\mathrm{sub},t}^*)$ that satisfies the institutional objective \eqref{eq:policy_objective} is uniquely characterized by the following properties:

 \textbf{Spectral Forcing:} The optimal controls $(\theta_{\mathrm{tax},t}^*, \sigma_{\mathrm{sub},t}^*)$ force the spectral radius of the modified operator strictly below unity:
    \begin{equation}
    \rho\left( \mathbf{A}_{\mathrm{policy}}(\boldsymbol{\Omega}^*) \right) < 1.
    \end{equation}
    This structural shift transforms the supercritical war trap into a subcritical self-healing system, ensuring 
    that $\lim_{g \to \infty} \mathcal{W}_2\left(\mu_t^g, \mu^{\mathrm{peace}}\right) = 0$.
    
    \textbf{Explicit Feedback Form:} The optimal policy profile satisfies the coupled feedback equations:
    \begin{equation}
    \theta_{\mathrm{tax},t}^* = \min\left\{ 1, \; \frac{1}{\gamma_{\mathrm{tax}}} \sum_{i \in \cN_E} \Pi_{\mathrm{local}}(\dots) \cdot \E\left[ Y_{E,t}^{i,g} \cdot X_{i,t}^{*,g,\text{global}} \right] \right\},
    \label{eq:optimal_tax}
    \end{equation}
    and
    \begin{equation}
    \sigma_{\mathrm{sub},t}^* = \frac{1}{\gamma_{\mathrm{sub}}} \sum_{i \in \cN} \kappa_i \, e^{-\kappa_i \sigma_{\mathrm{sub},t}^*} \|\mathbb{M}_i\| \cdot \E\left[ Y_{i,t}^g \cdot \int_{T_g}^t (t-s)^{-\gamma} \mathbf{m}_s \, ds \right],
    \label{eq:optimal_sub}
    \end{equation}
    where $Y_{i,t}^g$ and $Y_{E,t}^{i,g}$ are the solution trajectories of the adjoint equations. % \eqref{eq:adjoint_individual} and \eqref{eq:adjoint_environmental}.
\end{theorem}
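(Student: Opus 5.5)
The plan is to treat the policymaker as a Stackelberg leader. The followers' response is frozen through the forward--backward system of Theorem~\ref{thm:existence}. I would then split the argument into three parts: (a) show that the set of policies satisfying the convergence constraint in \eqref{eq:policy_objective} is non-empty and characterize it spectrally; (b) prove existence and uniqueness of a minimizer over that set; (c) derive the explicit feedback equations \eqref{eq:optimal_tax}--\eqref{eq:optimal_sub} from a leader-level maximum principle.

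\textbf{Step (a), spectral forcing.} $\mathbf{A}_{\mathrm{policy}}(\boldsymbol{\Omega})$ is block diagonal as in \eqref{eq:system_matrix}, so its spectral radius is the maximum of the spectral radii of the two blocks. The entries of $\mathbf{A}_{\mathrm{grievance}}(\sigma_{\mathrm{sub}})$ are non-negative and equal $\zeta_i\|\mathbb{M}_i\|e^{-\kappa_i\sigma_{\mathrm{sub}}}$. Hence the row-sum bound $\rho(A)\le\|A\|_\infty$ gives
\[
\rho\bigl(\mathbf{A}_{\mathrm{grievance}}(\sigma_{\mathrm{sub}})\bigr)\le N\max_i \zeta_i\|\mathbb{M}_i\|e^{-\kappa_i\sigma_{\mathrm{sub}}},
\]
which falls below one once $\sigma_{\mathrm{sub}}$ exceeds the explicit threshold
\[
\sigma^\sharp=\max_i \kappa_i^{-1}\log\bigl(N\zeta_i\|\mathbb{M}_i\|\bigr)_+ .
\]
The laundering block is the scalar $(\eta_{\mathrm{wash}}-\theta_{\mathrm{tax}})$ times a fixed matrix, so its spectral radius is $|\eta_{\mathrm{wash}}-\theta_{\mathrm{tax}}|\,\rho(\nabla\Pi_{\mathrm{local}})$. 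This vanishes at $\theta_{\mathrm{tax}}=\eta_{\mathrm{wash}}\in[0,1]$ and is below one on an interval around that point. The admissible set is therefore non-empty. Any feasible policy, in particular the optimal one, satisfies $\rho(\mathbf{A}_{\mathrm{policy}}(\boldsymbol{\Omega}^*))<1$, because the case $\rho\ge1$ is excluded by part~2 of Theorem~\ref{thm:breakdown}: there the $\mathcal{W}_2$-distance does not tend to zero almost surely, so the convergence constraint fails. Part~1 of Theorem~\ref{thm:breakdown} applied to $\mathbf{A}_{\mathrm{policy}}$ then yields $\mathcal{W}_2(\mu_t^g,\mu^{\mathrm{peace}})\to0$.

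\textbf{Step (b), existence and uniqueness.} The fiscal terms $\tfrac{\gamma_{\mathrm{tax}}}{2}\theta^2+\tfrac{\gamma_{\mathrm{sub}}}{2}\sigma^2$ with $\gamma_{\mathrm{tax}},\gamma_{\mathrm{sub}}>0$ make the objective coercive in $L^2$. Existence would then follow from weak lower semicontinuity on the closed convex set $[0,1]\times[0,\infty)$, using the Lipschitz dependence of the equilibrium on the coefficients that comes from the contraction in Theorem~\ref{thm:existence}. Uniqueness is the \textbf{main obstacle}: the term $\mathcal{W}_2^2(\mu_t^g,\mu^{\mathrm{peace}})$ is not obviously convex in $\boldsymbol{\Omega}$, since the policy enters nonlinearly through $e^{-\kappa_i\sigma}$ and through the followers' equilibrium. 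I would first linearize around the peaceful state using the mean recursion \eqref{eq:linearized_seq}, in which $\mathcal{W}_2^2$ is dominated by $\|\mathbf{m}_g\|^2$. I would then show that the policy-to-state map has Lipschitz constant of order $(T_{g+1}-T_g)^{1-\gamma}$ on short epochs, so that the strictly convex fiscal terms dominate the curvature of the instability term.

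\textbf{Step (c), explicit feedback form.} I would form a leader Hamiltonian that adjoins the taxed wealth dynamics \eqref{eq:taxed_global} and the attenuated revenge drift \eqref{eq:subsidized_grievance}, using the followers' adjoints $Y_{E,t}^{i,g}$ and $Y_{i,t}^g$ from the adjoint characterisation theorem and the boundary conditions \eqref{eq:boundary_X}--\eqref{eq:boundary_E}. Differentiating in $\theta_{\mathrm{tax}}$ gives $\gamma_{\mathrm{tax}}\theta-\sum_{i\in\cN_E}\Pi_{\mathrm{local}}\,\E[\cdots]$. Projecting the stationary point onto $[0,1]$ produces the $\min\{1,\cdot\}$ in \eqref{eq:optimal_tax}. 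Differentiating in $\sigma_{\mathrm{sub}}$ through \eqref{eq:revenge}, with the mean replaced by $\mathbf{m}_s$, gives the implicit equation \eqref{eq:optimal_sub}. Its right-hand side is decreasing in $\sigma$ whenever the correlation term is non-negative, while the left-hand side is increasing, so the equation has exactly one root. I would verify that non-negativity near the supercritical state, where grievances and adjoints are aligned; if it fails, the root is not guaranteed and I would fall back on the uniqueness obtained in Step~(b).
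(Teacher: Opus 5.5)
\textbf{Comparison with the paper.} Your step (c) is the paper's argument: an institutional Hamiltonian, the first-order condition in $\theta_{\mathrm{tax}}$ projected onto $[0,1]$, and the condition in $\sigma_{\mathrm{sub}}$ obtained by differentiating through $e^{-\kappa_i\sigma}$.

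Step (a) takes a genuinely different route. The paper argues by contradiction on the cost: if $\rho\ge1$, the tracking term grows and the objective is infinite, while some constant stabilizing policy has finite cost. You instead exclude $\rho\ge1$ through the explicit convergence constraint in \eqref{eq:policy_objective}, and you check that the feasible set is non-empty.

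Your route is the sturdier one. With the discount $e^{-rt}$, growth of order $\rho^{2g}$ is integrable whenever $\rho^2<e^{r(T_{g+1}-T_g)}$, in particular at $\rho=1$. So the paper's blow-up claim fails right at the threshold. If instead $r=0$, every stabilizing constant policy has infinite fiscal cost, so the paper's comparison fails the other way.

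Your feasibility check, which the paper takes for granted, can be made exact. The entries of $\mathbf{A}_{\mathrm{grievance}}$ do not depend on $j$, so this block is the rank-one matrix $a\mathbf{1}^\top$ with $a_i=\zeta_i\|\mathbb{M}_i\|e^{-\kappa_i\sigma}$. Hence $\rho=\sum_i\zeta_i\|\mathbb{M}_i\|e^{-\kappa_i\sigma}$ exactly, rather than merely being bounded by $N\max_i a_i$.

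\textbf{Gap in step (b): existence.} This gap is the price of your route. The paper does not prove existence or uniqueness at all; it presupposes $\boldsymbol{\Omega}^*$. You try to prove existence, but once convergence is a constraint the admissible set is the \emph{open} set $\{\rho(\mathbf{A}_{\mathrm{policy}})<1\}$. Your argument runs on the closed set $[0,1]\times[0,\infty)$, so its minimizer need not be admissible.

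Under discounting this really happens. Work in the linearized recursion \eqref{eq:linearized_seq} with constant policies, $\theta_{\mathrm{tax}}=\eta_{\mathrm{wash}}$ and $\rho(\mathbf{A}_{\mathrm{system}})>1$. Let $\sigma_0>0$ be the point where $\rho=1$.
\begin{itemize}
\item The instability cost has a bounded derivative on $[\sigma_0,\infty)$, because at $\rho=1$ the series $\sum_g g\,\rho^{2g}e^{-rT_g}$ still converges.
\item So for large $\gamma_{\mathrm{sub}}$ the total cost is increasing on $(\sigma_0,\infty)$.
\item The infimum is then approached as $\sigma\downarrow\sigma_0$ and is not attained, so no $\boldsymbol{\Omega}^*$ exists for step (a) to act on.
\end{itemize}
What is missing is boundary coercivity: the cost must tend to $+\infty$ as $\rho\uparrow1$. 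This is exactly what the paper's blow-up argument tries to supply, and it holds only if growth beats the discount rate. You need a hypothesis of that kind, or you must presuppose existence as the paper does.

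\textbf{Gap in step (b): uniqueness.} Your uniqueness mechanism does not work either.
\begin{itemize}
\item Short epochs are not a hypothesis of this theorem.
\item A per-epoch Lipschitz bound of order $(T_{g+1}-T_g)^{1-\gamma}$ is first-order information. It does not control the curvature of a cost compounded over infinitely many generations. In the linearized recursion that curvature is of order $\sum_g g^2\rho^{2g}e^{-rT_g}$, which does not shrink with the epoch length.
\item Step (a) is spectral and so needs constant policies, while your existence argument works in $L^2$ with time-varying ones. For time-varying controls, $\rho(\mathbf{A}_{\mathrm{policy}}(\boldsymbol{\Omega}))$ must be replaced by the growth of products $\prod_g\mathbf{A}_{\mathrm{policy}}(\boldsymbol{\Omega}_g)$.
\item Lipschitz continuity of the control-to-state map does not by itself give weak lower semicontinuity.
\end{itemize}

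A better route to uniqueness, for constant policies, is convexity. The rank-one structure gives $\|\mathbf{A}^g\mathbf{m}_0\|^2=\rho^{2g-2}(\mathbf{1}^\top\mathbf{m}_0)^2\|a\|^2$, which is a product of log-convex functions of $\sigma$. The term $|\eta_{\mathrm{wash}}-\theta|^{2g}$ is convex in $\theta$, and the feasible set is a product of intervals. So the linearized objective is strictly convex there, which gives uniqueness once existence is secured.
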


\begin{proof}
The proof applies the Stackelberg stochastic maximum principle to the institutional optimization problem. We define the institutional Hamiltonian $\mathcal{H}_{\mathrm{inst}}$ by appending the modified state equations \eqref{eq:taxed_global} and \eqref{eq:subsidized_grievance} to the objective function \eqref{eq:policy_objective}, using the agents' adjoint states as structural constraints.
Differentiating $\mathcal{H}_{\mathrm{inst}}$ with respect to the policy instruments $\theta_{\mathrm{tax}}$ and $\sigma_{\mathrm{sub}}$ yields the first-order necessary conditions. For the financial instrument, the derivative yields $\gamma_{\mathrm{tax}} \theta_{\mathrm{tax}} - \sum_i \Pi_{\mathrm{local}} \E[Y_E X^{\mathrm{global}}] = 0$, which, under the admissibility constraint $\theta_{\mathrm{tax}} \in [0,1]$, results in the bounded projection mapping \eqref{eq:optimal_tax}. For the mediation instrument, differentiating through the exponential grievance decay function introduces the structural efficacy parameter $\kappa_i$, yielding the implicit transcendental feedback equation \eqref{eq:optimal_sub}.
To verify statement 1, suppose ad absurdum that the optimal policy resulted in $\rho(\mathbf{A}_{\mathrm{policy}}) \ge 1$. By Theorem~\ref{thm:breakdown}, the state trajectory would diverge permanently from $\mu^{\mathrm{peace}}$, causing the tracking term $\mathcal{W}_2^2\left(\mu_t^g, \mu^{\mathrm{peace}}\right)$ to expand exponentially across generations. This divergence would cause the integral in \eqref{eq:policy_objective} to blow up to infinity. Since $\gamma_{\mathrm{tax}}$ and $\gamma_{\mathrm{sub}}$ are finite, any constant policy profile that forces $\rho < 1$ achieves a finite total cost, which strictly undercuts an infinite cost profile. Thus, by the optimality of $\boldsymbol{\Omega}^*$, the policy must force the spectral radius strictly below unity, completing the proof.
\end{proof}

Theorem~\ref{thm:policy_stabilization} establishes a foundation for long-term peacebuilding architectures. It demonstrates that standard military interventions or isolated financial audits are insufficient to resolve multi-generational conflicts. 
The optimal policy requires a coordinated approach: an offshore transaction tax $\theta_{\mathrm{tax}}^*$ to disconnect the laundering pipelines of war entrepreneurs, combined with targeted mediation subsidies $\sigma_{\mathrm{sub}}^*$ to permanently dismantle the grievance memory kernel. By anchoring the parameters $(\theta^*, \sigma^*)$ within the exact trajectories of the population-environment adjoint states, this framework provides an adaptive framework designed to disrupt the structural mechanics of the conflict trap.

\section{Limitations} \label{sec:policy_failure}

\subsection{The Failure of Asymptotic Policy Design: Structural and Operational Non-Feasibility in the Malian Context}
While the Stackelberg institutional policy profile $\boldsymbol{\Omega}_t = (\theta_{\mathrm{tax},t}, \sigma_{\mathrm{sub},t})$ derived in Theorem~\ref{thm:policy_stabilization} provides a  closure to the forward-backward system on the infinite-dimensional Wasserstein manifold, its structural coherence breaks down completely when confronted with the empirical realities of the Malian theatre. The model treats the institutional policymaker as a monolithic, benevolent sovereign possessing global information visibility and unilateral enforcement capacity. In reality, the Malian state functions within a situation of extreme territorial fragmentation, institutional co-optation, and pervasive economic informality. We provide a critique demonstrating that the proposed policy design is not only operationally unfeasible but  counterproductive, actively driving the system deeper into the supercritical war trap.

\medskip
\noindent\textbf{Nullification of the Offshore Transaction Tax ($\theta_{\mathrm{tax},t}$).}
The formulation in \eqref{eq:taxed_global} posits that the state can impose a continuous, deterministic tax rate $\theta_{\mathrm{tax},t} \in [0,1]$ to directly attenuate the laundering efficiency $\eta_{\mathrm{wash}}$ of the entrepreneurial class $\cN_E$. This mechanism fails fundamentally due to two structural properties of the Sahelian war economy:

\medskip
\noindent\textbf{Inversion of the Stackelberg Topology (State Capture).}
The Stackelberg paradigm assumes a strict hierarchical information asymmetry where the policymaker restricts the strategy spaces of the followers. However, in contemporary Mali, war entrepreneurs are not submissive agents operating beneath state regulation; they are deeply embedded within the state apparatus, the military hierarchies, and transient governing coalitions. When the actors defining $\cN_E$ capture the legislative and regulatory organs of the sovereign, the game-theoretic topology inverts:
\begin{equation}
\boldsymbol{\Omega}_t^* \in \arg\max_{\boldsymbol{\Omega}} \sum_{i \in \cN_E} U_i^g\bigl(\mathbf{X}_t^g, \mathbf{C}_t^g, E_t^g, \boldsymbol{\Omega}_t\bigr).
\end{equation}
The policy instruments become endogenous parameters optimized to maximize the extraction utility of the predatory elite, forcing $\theta_{\mathrm{tax},t} \to 0$ analytically.

\medskip
\noindent\textbf{Unobservability of the Informal Hawala-Gold Loop.}
The tax policy assumes that the local extraction profit $\Pi_{\mathrm{local}}(\cdot)$ transits through formal, observable financial nodes susceptible to institutional auditing. Yet, the artisanal gold loops of the Kédougou–Kéniéba corridor and the protection rackets of the Niger Inland Delta bypass the formal banking sector entirely. Value is transferred via physical cash, unrefined gold bars smuggled to regional hubs (e.g., via Bamako to Dubai), and the informal \emph{Hawala} network. Because these networks operate via unmapped, trust-based ledgers, the state's filtration matrix cannot intercept them. Letting $\chi_t \in [0,1]$ represent the state's financial observability index, the effective tax realized by the system is:
\begin{equation}
\theta_{\mathrm{effective},t} = \chi_t \cdot \theta_{\mathrm{tax},t}.
\end{equation}
In the Malian informal baseline, $\chi_t \equiv 0$, which yields $\theta_{\mathrm{effective},t} \equiv 0$. The financial artery of the war economy remains untouched.

 \medskip
\noindent\textbf{The Cobra Effect of Monetizing Customary Mediation ($\sigma_{\mathrm{sub},t}$).}
The institutional framework relies on a capital subsidy $\sigma_{\mathrm{sub},t}$ to exponentially decay the grievance memory matrix via $\mathbb{M}_i(s, \mathfrak{s}) \cdot e^{-\kappa_i \sigma_{\mathrm{sub},t}}$. This assumption misinterprets the anthropological mechanics of Sahelian conflict mitigation, specifically the joking kinship system (\emph{sinankunya}).

\medskip
\noindent\textbf{Destruction of Social Legitimacy ($\kappa_i \to 0$).}
The efficacy parameter $\kappa_i$ of traditional mediation is not an exogenous constant; it is a dynamic function of the perceived neutrality and moral authority of customary institutions (elders, \emph{marabouts}, village chiefs). Joking kinship system derives its conflict-attenuating power precisely because it exists outside the commodified, transactional sphere of the state. Injecting formal capital subsidies $\sigma_{\mathrm{sub},t}$ into these delicate, non-monetized social networks instantly commercializes moral authority. Traditional leaders are transformed into state-salaried fiscal intermediaries, triggering immediate local suspicion, eroding institutional trust, and destroying their conciliatory legitimacy. The capital injection drives the efficacy parameter to zero:
\begin{equation}
\lim_{\sigma_{\mathrm{sub}} \to \infty} \kappa_i(\sigma_{\mathrm{sub},t}) = 0.
\end{equation}
Consequently, $\mathbb{M}_i(s, \mathfrak{s}) \cdot e^{-0 \cdot \sigma_{\mathrm{sub},t}} = \mathbb{M}_i(s, \mathfrak{s})$, rendering the subsidy incapable of altering the historical memory of violence.

\medskip
\noindent\textbf{The Rent-Extraction Loop.}
In a highly fragmented security scenarios, any localized financial inflow behaves as a liquid asset subject to predatory capture. Armed groups and war entrepreneurs routinely extort local councils, NGOs, and traditional authorities. Rather than dampening grievances, the mediation subsidy $\sigma_{\mathrm{sub},t}$ represents a fresh revenue stream that is endogenously captured by insurgent or criminal networks operating in the interior (e.g., the spaces of Niono, Nioro, and the Gourma). This captured capital is directly recycled into the entrepreneurs' global wealth pipeline \eqref{eq:global_wealth} via the consumption/repatriation control $u_{i,t}^{g,\text{global}}$, actively financing the procurement of weapons and fueling the recruitment loops. The policy intended to mitigate conflict becomes a state-funded mechanism for intensifying it.

\medskip
\noindent\textbf{Operational Non-Feasibility Matrix.}
To formalize the operational constraints that prevent the deployment of $\boldsymbol{\Omega}_t$, the following structural friction matrix identifies the breakdown of the model's control inputs across different geographic and institutional layers in Mali (see Table \ref{tab:policy_breakdown}):

\begin{table}[htb]
\centering
\caption{Structural Breakdown of Institutional Control Primitives in Mali}
\label{tab:policy_breakdown}
\begin{tabular}{|p{3cm}|p{3cm}|p{3cm}|p{3cm}|}
\hline
\textbf{Policy Instrument} & \textbf{Theoretical Model Request} & \textbf{Sahelian Operational Reality} & \textbf{Resulting System Instability} \\
\midrule
Offshore Tax $\theta_{\mathrm{tax}}$ & Complete banking visibility ($\chi_t \approx 1$) & \emph{Hawala} dominance, cash economy & $\theta_{\mathrm{effective}} \equiv 0$; loop persists \\
& Extraterritorial legal reach & Zero leverage over tax havens & Sovereign asset nullification \\
\addlinespace
Mediation Subsidy $\sigma_{\mathrm{sub}}$ & Homogeneous capital distribution & Rent extraction by armed actors & Diverted to $X^{\text{global}}$; increases violence \\
& Exogenous custom efficiency $\kappa_i$ & Monetization degrades legitimacy & $\kappa_i \to 0$; historical memory hardens \\
\addlinespace
Regime Tracking $s_t^g$ & Synchronous state observation & Fragmented, rumor-driven intelligence & Control delay triggers policy chaos \\
\hline
\end{tabular}
\end{table}

\medskip
\noindent\textbf{Convergence to the Permanent War Trap.}
The catastrophic implication of these operational frictions can be evaluated directly on the spectral geometry of the system. Let $\mathbf{A}_{\mathrm{policy}}(\boldsymbol{\Omega})$ be the effective system matrix after the implementation of the compromised policy profile. Incorporating the structural parameter degradations $\chi_t \to 0$ and $\kappa_i \to 0$, the operator collapses back to its unregulated baseline:
\begin{equation}
\mathbf{A}_{\mathrm{policy}}(\boldsymbol{\Omega}) = 
\begin{pmatrix}
\mathbf{A}_{\mathrm{grievance}}\bigl(\sigma_{\mathrm{sub}} \cdot e^{-0}\bigr) & \mathbf{0} \\
\mathbf{0} & \mathbf{A}_{\mathrm{laundering}}\bigl(\eta_{\mathrm{wash}} - 0\bigr)
\end{pmatrix}
\equiv \mathbf{A}_{\mathrm{system}}.
\label{eq:matrix_collapse}
\end{equation}

Because the structural parameters cannot be altered by the policy instruments in practice, the spectral radius remains stubbornly above unity:
\begin{equation}
\rho\left( \mathbf{A}_{\mathrm{policy}}(\boldsymbol{\Omega}) \right) = \rho(\mathbf{A}_{\mathrm{system}}) \ge 1.
\end{equation}

The system remains locked in the supercritical regime defined by Theorem~\ref{thm:breakdown}. Any temporary reduction in violence achieved during a highly subsidized mediation phase is merely a transient fluctuation on the Wasserstein manifold. The long-term ergodicity of the model guarantees that the joint probability measure $\mu_t^g$ will diverge from the peaceful equilibrium $\mu^{\mathrm{peace}}$, forcing Mali back into a self-sustaining, multi-generational war trap. Traditional top-down policy interventions are structurally incapable of stabilizing the Sahelian conflict space; breaking the trap requires a fundamental re-engineering of the informational and financial architecture of the state itself.

\subsection{The Failure of Empirical Resource Nationalism: The Backlash Loop of the 2023--2026 Mining Code Reforms}
\label{sec:empirical_failure}

To demonstrate that structural policy failure in Mali is not merely a theoretical construct of our forward-backward framework, we analyze a major policy initiative implemented in the field: the 2023 New Mining Code (and its operational enforcement loop spanning 2024-2026). 
In August 2023, Mali’s  government enacted a sweeping regulatory reform designed to exercise resource nationalism. The policy explicitly aimed to break the dependency on foreign multinationals and secure a massive fiscal windfall to self-finance the state's counter-insurgency operations. The core components of this empirical policy profile included:
\begin{enumerate}
    \item A structural increase in mandatory state and local equity participation from 20\% to a maximum ceiling of 35\%;
    \item A sharp elevation of sliding-scale gold royalty tax rates up to 10.5\%;
    \item Aggressive mid-cycle enforcement mechanisms, including the systemic deployment of state-level audits, provisional administrations, asset attachments, and the high-profile detention of multinational executives (e.g., Barrick Gold and Resolute Mining personnel) to extract back-taxes.
\end{enumerate}

\medskip
\noindent\textbf{The Perverse Contraction of the Formal Economy ($\mu_t^{g} \to \mu^{\mathrm{informal}}$).}
The  flaw of this empirical design lies in treating the formal industrial mining sector as an inelastic, captured revenue generator. In the language of our MFTG framework, the policy attempted to forcibly shift the aggregate population-environment measure $\mu_t^g$ toward a state of higher domestic liquid assets. Instead, it triggered an immediate pausing of capital expenditure and mid-cycle operational shutdowns across major industrial hubs (such as the Loulo-Gounkoto complex and Fekola). 
As a direct consequence, Mali's industrial gold output collapsed by 22.9\% year-on-year, forcing national production far below initial government forecasts. By shrinking the formal economic baseline, the policy systematically drove thousands of vulnerable actors out of formal wage labor ($X_{i,t}^{g,3}$) and displaced them into the unmonitored artisanal sector. The state's tax base contracted precisely when its security expenditures were escalating.

\medskip
\noindent\textbf{The Golden Jihad: Insurgent Capital Capture and the Expansion of $\mathbf{A}_{\mathrm{laundering}}$.}
The most catastrophic structural failure of the 2023-2026 reforms is their secondary effect on the non-state war economy. As formal mining paused under regulatory friction, the vacuum was filled by informal, unregulated artisanal mining sites scattered across the Birimian volcanic belt.  In the central, western and northern territories (and increasingly in the western Kayes region), these artisanal sites are not governed by the state, but are directly controlled or taxed by insurgent networks, principally the Jama'at Nusrat al-Islam wal-Muslimin (JNIM). JNIM implemented an alternate, highly efficient extraction policy: rather than digging, they positioned themselves at geographic chokepoints: taxing access roads, washing riverbanks, and regional transport routes. 

\begin{equation}
\Pi_{\mathrm{insurgent}}\bigl(X_{i,t}^{g,\text{local}}, E_t^g, s_t^g\bigr) \propto \mathbf{1}_{\{\text{Formal Sector Contraction}\}} \times \text{Gold Price}_{2025-2026}.
\end{equation}

Because international gold prices experienced a massive rally during this exact period, the financial returns on these illicit artisanal supply lines expanded exponentially. The regulatory chaos introduced by the state directly amplified the extraction profit function $\Pi_{\mathrm{local}}$ for non-state war entrepreneurs and terrorist networks. This newly captured capital was laundered via regional trading networks into international hubs, providing cash to finance advanced weaponry, fuel cross-border logistics, and execute coordinated large-scale assaults on state infrastructure.

\medskip
\noindent\textbf{The Institutional Compliance Trap.}
Furthermore, the concentration of regulatory oversight directly into specialized ministerial posts within the transition authority added a severe layer of compliance and political risk. Multinational operators faced a bifurcated environment: either surrender substantial equity under retroactive rule changes or pause operations entirely. This structural uncertainty raised the country's risk premium, driving financing costs up as international lenders priced in the risk of arbitrary asset seizures. The empirical policy profile failed on every metric: it crippled formal state revenue, accelerated the flight of legitimate capital, and delivered an unprecedented financial windfall to the insurgent war economy. The real-world enforcement of the 2023 Mining Code serves as an empirical proof of Theorem~\ref{thm:breakdown}: top-down regulatory shocks that fail to account for the fluid, informal elasticity of the Sahelian situations inevitably cause an ergodic breakdown, hard-locking the system into the permanent war trap.

\medskip
\noindent\textbf{The  Structural Inefficacy of Tactical Motorbike Bans in the Intergenerational Feedback Loop.}
The implementation of localized operational bans on high-displacement motorbikes (historically a primary tactical mobility vector for asymmetric hit-and-run assaults across central, western and northern Mali) represents a classic static, single-generation policy intervention that collapses when evaluated through our intergenerational Volterra MFTG. In the immediate tactical horizon, such restrictions attempt to forcibly depress the individual continuous transition intensity $\lambda_{m\to m'}^i$ into extremist coalitions by increasing the immediate logistical friction of executing raids. However, within the multi-layered state-space architecture of the model, this policy acts as a disruptive environmental shock that degrades the local human and economic capital sub-vectors ($X_{i,t}^{g,2}, X_{i,t}^{g,3}$) of the non-combatant civilian population far more severely than it penalizes the war entrepreneurs ($\cN_E$). For rural agro-pastoralist households, the motorbike is not a weapon; it is the sole informal mechanism for transporting agricultural produce, accessing remote gold-washing sites, and overcoming massive spatial isolation to achieve basic food security. By criminalizing this vital economic artery, the state inadvertently triggers a severe downward contraction of the civilian population's true state measure, driving up localized destitution and erasing non-violent survival thresholds. This artificial deprivation exacerbates individual loss-aversion, causing a massive, non-linear spike in the tail risk measures: the $\beta_i$-expectiles $e_{\beta_i}(\mu_t^g)$ of the joint distribution. Because the structural switching intensities defined in \eqref{eq:lambda_def} depend exponentially on these tail risk expectiles, the policy paradoxically increases the systemic  attractiveness of insurgent networks, which step into the state's governance vacuum to provide alternative informal mobility, protection rackets, and illicit income streams. Most catastrophically, because these dynamics propagate across generational epochs through the hereditary push-forward operator $\mathcal{T}_{\boldsymbol{\zeta}}$ \eqref{eq:pushforward}, the immediate economic trauma and resulting institutional resentment are transmitted to the offspring with a high trauma coefficient ($\zeta_i \approx 1$). This hardens the singular Volterra revenge kernel $\mathcal{K}_{\mathrm{rv}}^{i,g+1}(t)$ for the subsequent cohort. The transnational war entrepreneurs, operating via the unmapped, cash-driven Hawala networks and offshore wealth loops ($X_{i,t}^{g,\text{global}}$) characterized in \eqref{eq:global_wealth}, remain completely insulated from local vehicle bans; they simply pivot their logistics to alternative informal networks or corrupt local enforcement mechanisms using their laundered capital. By failing to intercept the underlying financial loops of the predatory elite while simultaneously deepening the historical grievance memory and economic destitution of the civilian baseline, the motorbike ban acts as a regressive policy distortion that locks the spectral radius of the effective system matrix strictly above unity ($\rho(\mathbf{A}_{\mathrm{system}}) \ge 1$), rendering the tactical disincentive entirely useless and  solidifying the permanent multi-generational conflict trap.

% =====================================================================
% FIGURES AND SYSTEM MODEL ENTRY
% =====================================================================

\begin{figure}[htbp]
\centering
\begin{adjustbox}{width=\textwidth, center}
\begin{tikzpicture}[
    font=\sffamily\small,
    >=Stealth,
    node distance=1.5cm and 1.5cm,
    % Styles for blocks and containers
    block/.style={rectangle, draw=gray!60, fill=white, rounded corners=4pt, minimum width=4.2cm, minimum height=1.1cm, align=center, thick},
    alertblock/.style={rectangle, draw=red!60!black, fill=red!5, rounded corners=4pt, minimum width=4.2cm, minimum height=1.1cm, align=center, thick},
    gold/.style={rectangle, draw=orange!80!black, fill=yellow!5, rounded corners=4pt, minimum width=4.2cm, minimum height=1.1cm, align=center, thick},
    shadowbox/.style={rectangle, draw=black!20, fill=gray!2!white, dashed, rounded corners=6pt, inner sep=24pt},
    % Arrow styling
    flow/.style={->, thick, color=black!60, line width=1.1pt},
    loopflow/.style={->, thick, color=blue!60!cyan, line width=1.2pt},
    trapflow/.style={->, thick, color=red!70!black, line width=1.3pt},
    % Labels
    textlbl/.style={align=center, font=\sffamily\scriptsize\bfseries, color=black!80}
]

% =====================================================================
% UPPER ZONE: THE ARTISANAL SURVIVAL ECONOMY (BASELINE SUSTAINABILITY)
% =====================================================================
\node[block] (baseline_state) {\textbf{Civilian State Vector $X_{i,t}^{g}$}\\[2pt] Stable Assets \& Human Capital};
\node[block, right=1.6cm of baseline_state] (moto_mobility) {\textbf{Informal Mobility Core}\\[2pt] High-Displacement Motorbikes};
\node[gold, right=1.6cm of moto_mobility] (market_access) {\textbf{Local Extraction Loop}\\[2pt] Artisanal Gold Mining \& Trade};

% Baseline Internal Flows
\draw[loopflow] (baseline_state) -- node[above, textlbl] {Labor\\Allocation} (moto_mobility);
\draw[loopflow] (moto_mobility) -- node[above, textlbl] {Spatial\\Logistics} (market_access);
\draw[loopflow] (market_access.north) to[out=140, in=40] node[above, textlbl, yshift=3pt] {Income Reinvestment Loop ($X_{i,t}^{g,2}$)} (baseline_state.north);

% Group Baseline Nodes in Background Container
\begin{scope}[on background layer]
    \node[shadowbox, fit={(baseline_state) (moto_mobility) (market_access)}, 
          label={[anchor=north west, font=\sffamily\large\bfseries\color{blue!70!black}, xshift=8pt, yshift=-6pt]north west:PANEL A: Structural Baseline Under Fluid Survival Economy ($\rho < 1$)}] (box_baseline) {};
\end{scope}

% =====================================================================
% MIDDLE ZONE: THE EXOGENOUS SHOCK (TACTICAL INTERVENTION LAYER)
% =====================================================================
\node[alertblock, below=2.5cm of moto_mobility] (shock) {\textbf{Tactical Motorbike Ban $\boldsymbol{\Omega}_t$}\\[2pt] State Regulatory Disincentive};

% Disruption Vector Connections
\draw[flow, dashed, color=red!70, line width=1.2pt] (moto_mobility.south) -- node[right, textlbl, color=red!80!black, xshift=2pt] {Top-Down\\Intervention} (shock.north);

% =====================================================================
% LOWER ZONE: THE POST-BAN CRIPPLED ECONOMY (THE PERMANENT WAR TRAP)
% =====================================================================
\node[alertblock, below=2.5cm of shock] (crippled_state) {\textbf{Degraded State $X_{i,t}^{g,3}$}\\[2pt] Destitution \& Forced Informality};
\node[block, left=1.4cm of crippled_state] (tail_risk) {\textbf{Tail-Risk Spike}\\[2pt] High Expectiles $e_{\beta_i}(\mu_t^g)$};
\node[gold, below=1.8cm of crippled_state] (unregulated) {\textbf{Unregulated Field Vacuum}\\[2pt] Insurgent Extractors};
\node[alertblock, right=1.4cm of crippled_state] (entrepreneur) {\textbf{War Entrepreneurs $\mathcal{N}_E$}\\[2pt] Offshore Wealth $X_{i,t}^{g,\text{global}}$};

% Shock Consequences and Cascades
\draw[trapflow] (shock.south) -- node[right, textlbl, color=red!80!black, xshift=2pt] {Annihilates\\Transport Base} (crippled_state.north);
\draw[trapflow] (crippled_state.west) -- node[above, textlbl] {Exacerbates\\Losses} (tail_risk.east);

% Feedback Loops of the Trap
\draw[trapflow] (tail_risk.south) to[out=-90, in=180] node[below left, textlbl, color=red!80!black, xshift=-4pt] {Recruitment Surge\\$\lambda_{m\to m'}^i \propto \exp(e_{\beta_i})$} (unregulated.west);
\draw[trapflow] (unregulated.east) to[out=0, in=-90] node[below right, textlbl, xshift=4pt] {Laundering Pipeline\\$\eta_{\text{wash}}\Pi_{\text{local}}$} (entrepreneur.south);
\draw[trapflow] (entrepreneur.north) to[out=110, in=-15] node[above right, textlbl, xshift=4pt, yshift=2pt] {Repatriated Capital $u_{i,t}^{g,\text{global}}$\\Finances Weapons} (shock.east);

% Intergenerational Transmission Vector
\draw[flow, line width=1.6pt, color=purple!80] (crippled_state.south) -- node[right, textlbl, color=purple!90, xshift=4pt] {Hereditary Push-Forward $\mathcal{T}_{\boldsymbol{\zeta}}$\\Hardens Revenge Kernel $\mathcal{K}_{\mathrm{rv}}^{i,g+1}$} (unregulated.north);

% Group Post-Ban Nodes in Background Container
\begin{scope}[on background layer]
    \node[shadowbox, fit={(crippled_state) (tail_risk) (unregulated) (entrepreneur)}, 
          label={[anchor=north west, font=\sffamily\large\bfseries\color{red!70!black}, xshift=8pt, yshift=-6pt]north west:PANEL B: Post-Ban Structural Inversion and the Generational War Trap ($\rho \ge 1$)}] (box_postban) {};
\end{scope}

\end{tikzpicture}
\end{adjustbox}
\caption{The Game-Theoretic Bifurcation of Mobility Restrictions in the Malian Context. Panel A maps the subcritical, self-healing baseline where high-displacement motorbikes serve as the fundamental connective tissue of rural survival assets. Panel B traces the system-wide inversion under the top-down tactical ban, where civilian asset destruction drives up the non-linear tail-risk expectiles, creating a self-reinforcing financial and recruitment pipeline for transnational war entrepreneurs and insurgent networks.}
\label{fig:mobility_bifurcation}
\end{figure}

The non-linear mechanics of regulatory failure induced by localized operational restrictions on transport vectors are formalized within our structural framework and visually organized in Figure~\ref{fig:mobility_bifurcation}. As mapped in Panel A, the baseline configuration represents a fluid, decentralized economy where individual continuous states $X_{i,t}^g$ maintain stability through an informal mobility core. Here, high-displacement motorbikes act as the primary logistical capital allowing agents to access far-flung artisanal gold-washing sites and local agricultural trade circuits. This fluid allocation creates an income reinvestment loop that satisfies the subcritical stability conditions ($\rho < 1$) detailed in Theorem~\ref{thm:breakdown}, allowing the system to naturally damp shocks without escalating violence.

When the state introduces a top-down tactical motorbike ban $\boldsymbol{\Omega}_t$ as an exogenous disincentive layer, the game-theoretic structure experiences an immediate topological inversion, mapped explicitly in Panel B of Figure~\ref{fig:mobility_bifurcation}. Rather than disarming combatants, the policy strips the vulnerable civilian population of its primary economic survival tool. This restriction results in an immediate contraction of the health and asset sub-vectors ($X_{i,t}^{g,2}, X_{i,t}^{g,3}$). This widespread destitution shifts the mass of the population distribution toward zero, causing an immediate, non-linear spike in the tail-risk expectiles $e_{\beta_i}(\mu_t^g)$. 

Because individual coalition-switching intensities $\lambda_{m\to m'}^i$ scale exponentially with these tail risk measures as defined in \eqref{eq:lambda_def}, the regulatory shock paradoxically accelerates recruitment into insurgent networks. These extremist actors step into the state's governance vacuum to provide alternative informal transport and protection rackets. Most catastrophically, this structural trauma is not transient; it is mapped forward to the next generation via the non-local hereditary push-forward operator $\mathcal{T}_{\boldsymbol{\zeta}}$. This legacy of asset destruction hardens the Volterra revenge kernel $\mathcal{K}_{\mathrm{rv}}^{i,g+1}(t)$ for the subsequent cohort. 

Transnational war entrepreneurs ($\cN_E$) extract rents from this newly formed unregulated vacuum, funneling illicit local gold and trafficking profits into their protected offshore wealth accounts ($X_{i,t}^{g,\text{global}}$). Part of this laundered capital is systematically repatriated through $u_{i,t}^{g,\text{global}}$ to secure heavier weaponry and bypass state control mechanisms. By simultaneously impoverishing the civilian baseline and feeding the illicit capital accumulation loops of predatory elites, the tactical ban forces the spectral radius of the effective system matrix strictly above unity ($\rho(\mathbf{A}_{\mathrm{system}}) \ge 1$). As derived in Theorem~\ref{thm:breakdown}, this spectral breakdown  hard-locks the Malian conflict space into a permanent, self-sustaining multi-generational war trap.

\subsection{Financing Post-War Reconstruction: An Intergenerational MFTG Architecture}
\label{sec:financing}

The long-term reconstruction of Mali cannot rely on conventional, volatile donor aid packages or localized macro-loans. Within the framework of a Polycentric Intergenerational Conflict System, funding structures must be endogenously insulated from state capture and optimized to break the financial arteries of the war economy. We establish a strategic, self-financing mechanism built upon the Stackelberg institutional policy design derived in Section \ref{sec:policy}. 

\subsection*{The Endogenous Reconstruction Fund Operator}
We define the Post-War Reconstruction Fund asset state variable for generation $g$ at time $t \in [T_g, T_{g+1})$ as $X_{F,t}^g \in \mathbb{R}$. Rather than treating this fund as an exogenous budgetary injection, its dynamics are endogenously coupled to the global wealth portfolios of the war entrepreneur class $\mathcal{N}_E$ and tax extraction vectors over illicit resource nodes. The true continuous state of the fund evolves according to the following stochastic differential equation:

\begin{equation}
\begin{aligned}
dX_{F,t}^g &= \Biggl[ \sum_{i \in \mathcal{N}_E} \theta_{\mathrm{tax},t}^* \, \Pi_{\mathrm{local}}\bigl(X_{i,t}^{g,\mathrm{local}}, E_t^g, e_{\beta_i}(\mu_t^g), s_t^g\bigr) + \mathbf{1}_{\{s_t^g = 1\}} \gamma_{\mathrm{peace}} E_t^{g,1} \\
&\quad - \sum_{i=1}^N \sigma_{\mathrm{sub},t}^* - \mathcal{I}_{\mathrm{infra}}\bigl(t, e_{\alpha}(\mu_t^g)\bigr) \Biggr] dt + \sigma_F X_{F,t}^g dW_{F,t} \\
&\quad + \int_{\mathcal{Z}_F} \sigma_F^n\bigl(t, z_F\bigr) \tilde{N}_F(dt, dz_F),
\end{aligned}
\label{eq:fund_dynamics}
\end{equation}

where $\theta_{\mathrm{tax},t}^*$ is the optimal offshore transaction tax rate mapping from \eqref{eq:optimal_tax}, $\Pi_{\mathrm{local}}$ represents the localized extraction profits from artisanal gold corridors and trafficking nodes, and $E_t^{g,1}$ tracks the international gold price. The indicator function $\mathbf{1}_{\{s_t^g = 1\}}$ activates a peacetime sovereign resource dividend $\gamma_{\mathrm{peace}}$ when the macro-political regime stabilizes. 

Outflows from the fund are divided into two explicit categories:
\begin{enumerate}
    \item \textbf{Endogenous Mediation Subsidies ($\sigma_{\mathrm{sub},t}^*$):} Direct, real-time fiscal distributions to customary conflict-resolution assets designed to shrink the historical grievance tensor $\mathbb{M}_i(s, \mathfrak{s})$ as defined in \eqref{eq:subsidized_grievance}.
    \item \textbf{Intergenerational Human Capital Infrastructure Allocation ($\mathcal{I}_{\mathrm{infra}}$):} A non-linear tracking cost targeted at lifting the human capital sub-vectors ($X_{i,t}^{g,3}$) of the bottom tail of the distribution $\mu_t^g$, preventing catastrophic drops in individual survival thresholds.
\end{enumerate}
The volatility parameter $\sigma_F$ and the compensated Poisson random measure $\tilde{N}_F$ capture global financial market fluctuations and unexpected institutional shocks, respectively.

\subsection*{Intergenerational Trust and Trans-Generational Bequest Constraints}
To prevent the depletion of reconstruction assets by the current cohort at the expense of the next, the fund is bound by a strict non-local trans-generational boundary condition. At the epoch interface $T_{g+1}$, the push-forward mapping $\mathcal{T}_{\boldsymbol{\zeta}}$ enforces a structural bequest constraint on the fund's residual mass:

\begin{equation}
X_{F,T_{g+1}}^{g+1} = \zeta_F X_{F,T_{g+1}-}^g + (1-\zeta_F) \mathbb{E}\left[ X^{\mathrm{global}}_{T_{g+1}-} \right] + \xi_F^{g+1},
\label{eq:fund_intergen}
\end{equation}

where $\zeta_F \in (0,1)$ represents the institutional persistence parameter of reconstruction capital, and $\mathbb{E}[X^{\mathrm{global}}_{T_{g+1}-}]$ is the mean global wealth confiscated from illicit offshore pipelines at the close of the generational lifecycle. The zero-mean innovation $\xi_F^{g+1}$ models structural resets between generational cohorts.

\subsection*{Spectral Neutralization of the War Loop}
The financial architecture functions by altering the spectral geometry of the Effective System Operator $\mathbf{A}_{\mathrm{system}}$ defined in \eqref{eq:system_matrix}. By routing a minimum threshold of confiscated assets directly into the localized state vectors, the policy shifts the strategic incentives of the population.
Let the financial policy profile $\boldsymbol{\Omega}_t^* = (\theta_{\mathrm{tax},t}^*, \sigma_{\mathrm{sub},t}^*)$ be chosen according to the optimization conditions in Theorem \ref{thm:policy_stabilization}. If the Reconstruction Fund satisfies the growth path $\mathbb{E}[X_{F,t}^g] > 0$ for all $t$, then the effective laundering efficiency matrix collapses such that:
$
\rho\left( \mathbf{A}_{\mathrm{policy}}(\boldsymbol{\Omega}^*) \right) < 1,$
rendering the supercritical war trap unstable.

By tying post-war reconstruction financing directly to the extraction of rents from the war entrepreneurs themselves, the model avoids the compliance pitfalls of traditional resource nationalism (analyzed in Section \ref{sec:empirical_failure}). The fund converts the financial proceeds of volatility into long-memory stabilization capital, transforming a self-reproducing conflict system into an endogenously financed, self-sustaining peace equilibrium across generations.

\subsection{Decentralized Reconstruction Financing: Graph-Chain (X-Chain) Intergenerational MFTG}
\label{sec:blockchain_mftg}

Traditional post-war reconstruction architectures fail in fragile contexts because they rely on a centralized state core that is highly susceptible to institutional co-optation by predatory war entrepreneurs ($\mathcal{N}_E$). To eliminate this systemic dependency, we shift the management of the Post-War Reconstruction Fund from a centralized, opaque authority to a decentralized, multi-layered \emph{Graph-Chain} (historically conceptualized as an cross-chain or Directed Acyclic Graph-based blockchain network, denoted herein as the X-Chain\cite{imftgrefblock1,imftgrefblock2,imftgrefblock3,imftgrefblock4,imftgrefblock5,imftgrefblock6} ). 
This network models the strategic interactions directly on a cryptographic distributed ledger, binding the forward-backward Mean-Field-Type Game (MFTG) onto immutable, self-executing smart contracts.
Instead of a linear, single-chain structure, the X-Chain is formalized as a dynamic, infinite-dimensional Directed Acyclic Graph (DAG) whose vertices represent individual state transitions and whose edges define causal economic dependencies across the Wasserstein manifold. Let $\mathcal{G}_t^g = (\mathcal{V}_t^g, \mathcal{E}_t^g)$ denote the active graph layout at time $t$ within generation $g$. The continuous state of the Blockchained Reconstruction Fund $X_{F,t}^{g,\mathrm{crypto}} \in \mathbb{R}$ is distributed over a network of algorithmic nodes and updates according to the following ledger consensus protocol:

\begin{equation}
\begin{aligned}
dX_{F,t}^{g,\mathrm{crypto}} &= \Biggl[ \sum_{i \in \mathcal{N}_E} \theta_{\mathrm{tax},t}^* \cdot \Pi_{\mathrm{local}}\bigl(X_{i,t}^{g,\mathrm{local}}, E_t^g, e_{\beta_i}(\mu_t^g), s_t^g\bigr) \\
&\quad - \sum_{k=1}^{K_t^g} \mathbf{1}_{\{\mathcal{V}_t^g \vdash \mathrm{Validate}_k\}} \cdot \mathcal{I}_k\bigl(t, X_{F,t}^{g,\mathrm{crypto}}, \mu_t^g\bigr) \Biggr] dt + \sigma_F X_{F,t}^{g,\mathrm{crypto}} dW_{F,t},
\end{aligned}
\label{eq:blockchain_dynamics}
\end{equation}

where the predicate $\mathcal{V}_t^g \vdash \mathrm{Validate}_k$ is an automated consensus operator governed by Zero-Knowledge Multi-Party Computation (ZK-MPC). Funding for a decentralized local reconstruction vector $\mathcal{I}_k$ is unlocked if and only if the network consensus verifies structural state compliance via independent, automated environmental and sociological oracles:

\begin{equation}
\mathcal{V}_t^g \vdash \mathrm{Validate}_k \iff \left\| \mathcal{O}_{\mathrm{satellite}}(E_t^{g,3}) - \int_{\mathbb{R}^d} \mathbf{x}_3 \, d\mu_t^g(\mathbf{z}) \right\| < \epsilon_{\mathrm{consensus}},
\label{eq:oracle_compliance}
\end{equation}

where $\mathcal{O}_{\mathrm{satellite}}(E_t^{g,3})$ is an immutable, satellite-driven macro-environmental rainfall/productivity index verification feed. If a co-opted state actor attempts an unauthorized diversion of capital, the oracle mismatch forces $\mathbf{1}_{\{\cdot\}} = 0$ instantaneously, script-locking the fund asset nodes and breaking the corrupt extraction loop.

To implement the non-local hereditary push-forward operator $\mathcal{T}_{\boldsymbol{\zeta}}$ without human intervention, the X-Chain hardcodes Algorithmic Time-Lock Shards. At the absolute generational interface $T_{g+1}$, the cryptographic ledger executes an automated state split:

\begin{equation}
\mathcal{T}_{\boldsymbol{\zeta}}\left[ X_{F,T_{g+1}-}^{g,\mathrm{crypto}} \right] \longrightarrow 
\begin{cases} 
X_{F,T_{g+1}}^{g+1,\mathrm{liquid}} = (1 - \zeta_F) X_{F,T_{g+1}-}^{g,\mathrm{crypto}} + \xi_{F}^{g+1}, & \text{Unlocked Liquid Assets} \\
\mathcal{S}_{T_{g+1} \to T_{g+2}}^{\mathrm{shard}} = \zeta_F X_{F,T_{g+1}-}^{g,\mathrm{crypto}} \cdot \mathcal{K}_{\mathrm{lock}}\bigl(T_{g+2}\bigr), & \text{Cryptographic Sealed Vault}
\end{cases}
\label{eq:time_lock_shard}
\end{equation}

The operator $\mathcal{K}_{\mathrm{lock}}(T_{g+2})$ mathematically mathematical-locks the underlying tokens within the graph structure using a decentralized time-bound hash lock. No combination of private keys, including state core authorities, can decrypt or compromise this asset shard until the temporal block height matches the epoch initiation of generation $g+2$. This guarantees that current predatory regimes cannot borrow against or steal the resource baseline dedicated to unborn cohorts.
A distinct feature of this blockchained MFTG setup is its automated sensitivity to the tails of the joint probability measure $\mu_t^g$. The smart contracts are dynamically conditional on the system's $\beta_i$-expectiles. Let $e_{\beta_i}(\mu_t^g)$ be the tail-risk parameter derived via \eqref{eq:lions_expectile}. If a sudden conflict escalation or climate collapse drives the aggregate expectile above a critical survival threshold $e_{\mathrm{crit}}$, the X-Chain automatically triggers an emergency ledger reconfiguration protocol:

\begin{equation}
\text{If } e_{\beta_i}(\mu_t^g) > e_{\mathrm{crit}} \implies \mathcal{I}_k\bigl(t, \cdot\bigr) \longrightarrow \sum_{i=1}^N \mathcal{W}_i^{\mathrm{drop}}\bigl(t, e_{\beta_i}(\mu_t^g)\bigr),
\label{eq:airdrop_trigger}
\end{equation}

where $\mathcal{W}_i^{\mathrm{drop}}$ represents an instantaneous, decentralized cryptographic token airdrop directed straight into verified, biometric civilian wallet addresses. By routing reconstruction capital around capturing state nodes and converting it into non-militarizable digital survival tokens during periods of high tail risk, the network bypasses administrative decay. Through the integration of ZK-Oracles, asymmetric expectile conditionality, and cross-generational time-lock shards on the X-Chain, the post-war reconstruction framework achieves complete structural autonomy. The institutional policymaker's reliance on human execution is replaced by a self-correcting, cryptographic game-theoretic protocol, driving the effective system spectral radius strictly below unity ($\rho < 1$) and permanently breaking the multi-generational conflict trap.

%%%%%%%%%%%%%%%%%%%%%%%%%%%%%%%
\subsection{Achievable Financing Architectures: Frictionless Physical Chokepoint Extraction and Algorithmic Value Redistribution}
\label{sec:realistic_financing}

The theoretical efficacy of the optimal policy profile $\boldsymbol{\Omega}_t^* = (\theta_{\mathrm{tax},t}^*, \sigma_{\mathrm{sub},t}^*)$ derived in Theorem \ref{thm:policy_stabilization} relies on the assumption that economic extraction profits $\Pi_{\mathrm{local}}$ transit through observable, formal financial networks. In contemporary Mali, where over 80\% of economic activity is unrecorded, artisanal gold corridors operate via trust-based informal networks, and state infrastructure is highly fragmented, this formal assumption introduces a fatal modeling misspecification. If the state attempts to impose direct transactional or income audits on unrecorded activities, the tracking visibility parameter collapses to zero ($\chi_t \equiv 0$), rendering the optimal tax null. 

To resolve this friction, we construct a realistic, highly achievable financing framework that bypasses transactional informality entirely by shifting the extraction mechanism from \emph{digital ledger auditing} to \emph{Frictionless Indirect Extraction at Physical Chokepoints}, coupled with an automated, non-state distribution pipeline on the X-Chain network.

\subsection*{The Multi-Layered Indirect Extraction Protocol}
Instead of trying to record the unrecordable, the framework targets the three physical, non-bypassable input/output bottlenecks of the informal Malian economy: the Energy Input Channel, the Refining Export Node, and the Consumable Liquidity Valve. 

\begin{enumerate}
    \item {The Upstream Energy Extraction Surcharge ($\theta_{\mathrm{fuel}}$):} Artisanal gold-washing plants (\emph{placons}) and cross-border transport networks are bound by a rigid, inelastic dependence on imported diesel and gasoline. While gold production is informal, the wholesale entry points of fuel (the major logistics axes from Senegal, Guinea, Ivory Coast) are highly concentrated. We implement an automated, automated cryptographic tariff $\theta_{\mathrm{fuel}}$ directly at the primary wholesale fuel terminal level. This transforms fuel into an un-bypassable proxy tax for informal extraction velocity.
    \item {The Downstream Gold Refinery Proxy Surcharge ($\theta_{\mathrm{refine}}$):} Gold mined informally in regions like Kayes, Sikasso, or Kéniéba cannot enter international circuits without physical processing. It must pass through centralized smelting hubs or cross-border airport logistics gates (e.g., Bamako-Sénou international air corridors). Rather than auditing individual artisanal pits, the contract levies a frictionless transaction tariff $\theta_{\mathrm{refine}}$ directly on the highly visible, concentrated export-smelting interfaces.
    \item {The Algorithmic Telecom Liquidity Valve ($\theta_{\mathrm{mobile}}$):} The informal economy relies heavily on mobile network operators for survival communication and informal digital cash transfers (e.g., all mobile money). The X-Chain interfaces with these centralized telecom gateways via specialized API smart contracts, embedding an infinitesimal, micro-fractional transaction fee $\theta_{\mathrm{mobile}}$ on aggregate data/airtime wholesale consumption.
\end{enumerate}

We update the dynamic continuous state of the Reconstruction Fund $X_{F,t}^{g,\mathrm{crypto}}$ from \eqref{eq:blockchain_dynamics} to reflect this physics-based extraction mapping:

\begin{equation}
\begin{aligned}
dX_{F,t}^{g,\mathrm{crypto}} &= \Biggl[ \sum_{i=1}^N \Bigl( \theta_{\mathrm{fuel},t} \cdot \mathcal{V}_{i,t}^{\mathrm{fuel}} + \theta_{\mathrm{refine},t} \cdot \mathcal{G}_{i,t}^{\mathrm{gold}} + \theta_{\mathrm{mobile},t} \cdot \mathcal{C}_{i,t}^{\mathrm{telecom}} \Bigr) \\
&\quad - \sum_{k=1}^{K_t^g} \mathbf{1}_{\{\mathcal{V}_t^g \vdash \mathrm{Validate}_k\}} \cdot \mathcal{I}_k\bigl(t, X_{F,t}^{g,\mathrm{crypto}}, \mu_t^g\bigr) \Biggr] dt + \sigma_F X_{F,t}^{g,\mathrm{crypto}} dW_{F,t}.
\end{aligned}
\label{eq:realistic_blockchain_dynamics}
\end{equation}

Here, $\mathcal{V}_{i,t}^{\mathrm{fuel}}$, $\mathcal{G}_{i,t}^{\mathrm{gold}}$, and $\mathcal{C}_{i,t}^{\mathrm{telecom}}$ represent the physical, observable volumes of fuel consumed, gold exported, and telecommunications capital utilized by actor $i$ at time $t$. Because these variables correspond to physical quantities moving through centralized chokepoints rather than unrecorded paper transactions, the information extraction parameter is structurally maximized ($\chi_t \to 1$), rendering the revenue stream highly resilient to local state corruption or informal evasion.

\subsubsection*{Bypassing State Capture via Direct Algorithmic Yield Distribution}
To ensure these revenues are not intercepted by the captured state core or co-opted local elites, the collected assets do not sit in a centralized treasury. The X-Chain infrastructure uses an automated Algorithmic Yield Distribution Protocol. The tokens generated at the chokepoints are instantly swept into automated smart contracts that act as an open, decentralized ledger. Instead of funding bureaucratic state ministries, the fund is programmed to execute direct, oracle-verified allocations:

\begin{equation}
\mathcal{I}_k\bigl(t, X_{F,t}^{g,\mathrm{crypto}}, \mu_t^g\bigr) = \omega_k\bigl(e_{\alpha}(\mu_t^g)\bigr) \cdot X_{F,t}^{g,\mathrm{crypto}} \cdot \mathbf{1}_{\{\text{ZK-Proof Valid}\}},
\label{eq:direct_distribution}
\end{equation}

where $\omega_k(\cdot)$ is a weight function that dynamically matches the local allocation to the $\alpha$-expectile of the community’s destitution index. If a localized area experiences a severe climatic shock (tracked via satellite vegetation oracles $E_t^{g,3}$), the system automatically increases its allocation weight $\omega_k$. 
The funds are disbursed via secure Zero-Knowledge (ZK) proofs directly to decentralized, mobile-accessible community-led development boards and biometric individual civilian wallets. This process finances localized infrastructure (solar-powered water pumps, community micro-granaries) entirely outside the control of administrative state actors.

\subsubsection*{Strategic Autonomy and Ergodic Realization}
By replacing the unachievable dream of standard tax enforcement with automated, frictionless extraction at the economy's physical bottlenecks, this architecture transforms post-war financing into a highly resilient reality. It successfully drains liquid capital from the upper tails of the war entrepreneur class who must purchase fuel and refine gold to maintain their operations and algorithms-routes it to stabilize the lower tail of the population measure $\mu_t^g$. 
This architecture forces the effective structural spectral radius strictly below unity ($\rho(\mathbf{A}_{\mathrm{policy}}) < 1$), making the peaceful equilibrium globally stable and highly achievable, even within the constraints of a completely informal and fragile Sahelian environment.

\subsection{Field-Level Execution Dynamics: From Game-Theoretic Synthesis to Real-World Implementation}
\label{sec:field_implementation}

To transform the theoretical equilibrium of the Intergenerational Volterra Mean-Field-Type Game (MFTG) into a sustainable, real-world stabilization framework for Mali, we must systematically confront and resolve the operational limitations of the Sahelian theater. This section synthesizes the core frictions of the post-war reconstruction economy, extreme informality, institutional co-optation, and the physical limits of distributed ledger deployment and structures a self-enforcing incentive mechanism on the Graph-Chain (X-Chain) architecture that aligns the payoffs of all strategic participants with the long-term survival of the peace plan.

\subsubsection*{Systemic Limitations Framework}
A rigorous bridge from infinite-dimensional measure theory to field deployment requires isolating three distinct structural friction layers that undermine conventional, centralized top-down interventions:

\begin{itemize}
    \item \textbf{Asymmetric Demobilization Payoffs:} Traditional Disarmament, Demobilization, and Reintegration (DDR) frameworks consistently collapse because the static, lump-sum cash payments offered to combatants are economically inferior to the continuous, risk-adjusted returns generated by the informal war economy (e.g., protection rackets, artisanal gold extraction, and cross-border trafficking circuits). Under severe informational opacity, armed factions routinely exploit formal stipends as a liquid resource to purchase advanced weaponry while maintaining active insurgent posture.
    \item \textbf{Informal Tax Parameter Collapse ($\chi_t \to 0$):} With over 80\% of the Malian economic baseline operating outside formal registration, direct transactional audits and standard income taxation are fundamentally unenforceable. Rigid state attempts to manually extract rents from artisanal pits (\emph{placons}) merely increase localized transactional friction, paradoxically driving vulnerable civilians into the security and alternative governance structures of insurgent protection networks.
    \item \textbf{Blockchain Oracle Hardening Dilemma:} While the X-Chain network provides an un-copyable distributed ledger ledger, it remains vulnerable to the ``garbage-in, garbage-out" reality of data entry points. In low-connectivity environments, unrefined physical gold can be mixed, falsified, or physically smuggled across porous borders before reaching a digital logging node, allowing corrupt intermediaries to structurally launder conflict-linked assets into certified clean tokens.
\end{itemize}

To formalize these real-world constraints within our structure, we map the structural degradation of the control parameters in Table \ref{tab:field_frictions}:

\begin{table}[htbp]
\centering
\caption{Operational Friction Mapping: Theory vs. Sahelian Field Reality}
\label{tab:field_frictions}
\begin{tabular}{p{3cm}p{3cm}p{3cm}p{3cm}}
\toprule
\textbf{Control Primitive} & \textbf{MFTG Theoretical Request} & \textbf{Sahelian Operational Reality} & \textbf{Systemic Failure Mode} \\
\midrule
Offshore Tax $\theta_{\mathrm{tax}}$ & Complete transactional visibility ($\chi_t \to 1$) & \emph{Hawala} networks, cash dominance & $\theta_{\mathrm{effective}} \equiv 0$; war loops persist \\
Mediation Subsidy $\sigma_{\mathrm{sub}}$ & Homogeneous capital distribution & Local extortion, predatory capture & Diverted to $X^{\mathrm{global}}$; fuels violence \\
Ledger Consensus $\mathcal{G}_t^g$ & Continuous cryptographic validation & Power grid deficits, broken network nodes & Decoupled oracles, consensus delay \\
\bottomrule
\end{tabular}
\end{table}

\subsubsection*{Mechanism Incentive Design Matrix}
To guarantee that the system converges to the subcritical peaceful steady state ($\rho < 1$) without relying on the coercive enforcement of a fragile state core, the X-Chain ledger hardcodes a self-enforcing payoff matrix. This protocol ensures that the dominant strategy for every heterogeneous actor class is to adhere to the parameters of the reconstruction plan.

\subsubsection{The Civilian Baseline: Direct Algorithmic Yield Drops}
For the non-combatant population, survival is governed by acute loss aversion under volatile environmental shocks. The X-Chain bypasses administrative channels and distributes direct, oracle-verified mobile cash allocations (\emph{airdrops}) directly into biometric-mapped individual digital wallets. These payouts are funded entirely by automated transaction tariffs levied at the economy's non-bypassable physical chokepoints: wholesale fuel imports ($\theta_{\mathrm{fuel}}$) and telecommunication network airtime gateways ($\theta_{\mathrm{mobile}}$). Civilians receive immediate financial yields that scale upward when local security stabilizes, destroying the economic necessity of participating in insurgent recruitment loops and forcing the population's tail-risk expectile $e_{\beta_i}(\mu_t^g)$ below critical escalation thresholds.

\subsubsection{Middle-Tier Commanders: Conditional Intergenerational Trust Annuities}
Rather than demanding immediate, uncompensated demobilization, which threatens the survival utility of armed factions, the framework transforms local commanders into verified ecological and infrastructure guardians. The X-Chain establishes an automated smart-contract vault that streams a continuous, long-term financial annuity directly to the commander's node:

\begin{equation}
\mathcal{A}_{i,t}^g = \bar{\mathcal{A}}_i \cdot \mathbf{1}_{\{\Delta \mathcal{M}_t^{\mathrm{kinetic}}(\Omega_i) = 0\}} \cdot \mathcal{K}_{\mathrm{lock}}\bigl(T_{g+1}\bigr),
\label{eq:annuity_logic}
\end{equation}

where $\bar{\mathcal{A}}_i$ is the baseline security dividend and $\mathcal{M}_t^{\mathrm{kinetic}}(\Omega_i)$ is an automated, oracle-driven conflict-event mapping data feed tracking kinetic signatures within the commander's assigned geographic sector $\Omega_i$. If an un-stabilized violent event occurs, the oracle mismatch forces the indicator function to zero ($\mathbf{1}_{\{ .\}}= 0$), freezing the capital stream instantly.  A significant percentage of this annuity is locked in an intergenerational time-lock shard $\mathcal{K}_{\mathrm{lock}}$ that can only be decrypted and inherited by the commander's offspring if the sector maintains a multi-year peaceful baseline. This mechanism transforms a high-risk predatory extraction strategy into a low-risk, long-term asset-preservation strategy across generations.

\subsubsection*{International Buyers and Wholesalers: Premium Liquidity Certification}
To incentivize international commodity traders and refiners to comply with blockchain traceability within an informal economy, the X-Chain couples physical gold tracking with localized chemical markers and tokenized letters of credit. Refiners who execute transactions via the decentralized tracking ledger receive immediate access to institutional trade finance liquidity pools and a premium legal pricing tier for certified clean gold. 
Compliance is thus transformed from an expensive administrative burden into an un-bypassable optimization tool for market expansion and global regulatory risk mitigation.

\subsection*{Concrete Field Deployment Framework}
To transition this mechanism design from game-theoretic synthesis to a suitable field manual, the deployment operations are partitioned across explicit, achievable vectors detailed in Table \ref{tab:deployment_matrix}.

\begin{table}[htbp]
\centering
\caption{X-Chain Operational Field Manual and Infrastructure Layout}
\label{tab:deployment_matrix}
\resizebox{\textwidth}{!}{%
\begin{tabular}{p{3cm}p{3cm}p{3cm}p{3cm}}
\toprule
\textbf{Execution Layer} & \textbf{Concrete Field Action} & \textbf{Monitoring \& Verification Oracle} & \textbf{Institutional Safeguard} \\
\midrule
\textbf{Frictionless Revenue Extraction} & Integrate automated micro-surcharges on wholesale diesel imports at primary border hubs and a digital stamp fee on mobile telecom airtime transfers. & Programmatic APIs embedded directly into wholesale fuel customs manifests and telecom billing software layers. & Bypasses the central treasury completely; revenue is swept instantly into decentralized X-Chain contract vaults. \\
\addlinespace
\textbf{Oracle-Driven Reconstruction} & Disburse community-level capital funding for localized micro-granaries and solar water pumps based on regional destitution indexes. & Multi-spectral satellite verification arrays coupled with localized, decentralized IoT sensor feeds tracking asset activation. & Funding is script-locked and released in tranches based on automated asset validation, neutralizing local corruption vectors. \\
\addlinespace
\textbf{Intergenerational Asset Protection} & Split residual reconstruction funds at generational boundaries into time-locked cryptographic shards. & Algorithmic network block height metrics tied to epoch timers. & Cryptographically blocks contemporary actors from accessing or depleting resources allocated to future cohorts. \\
\bottomrule
\end{tabular}%
}
\end{table}

By replacing the unreliable assumption of state institutional benevolence with an automated, cryptographic protocol that directly shapes individual payoff functions, the system achieves operational equilibrium. Every strategic actor whether a vulnerable civilian, an opportunistic local commander, or an international commodity trader, discovers that their optimal choice within the game is to preserve the integrity of the peace framework, forcing the system's spectral radius below unity ($\rho(\mathbf{A}_{\mathrm{policy}}) < 1$) and permanently breaking the multi-generational conflict trap.

\section{Conclusion}
\label{sec:conclusion}

We have presented a fully integrated model of polycentric conflict-trap Risk in Mali as an intergenerational mean‑field‑type game on the space of joint probability measures.  The framework incorporates all salient features observed in the field: a large finite population of heterogeneous actors, a co‑evolving macro‑environment, endogenous coalition formation and switching, long‑memory grievance dynamics via fractional Volterra kernels, and a  catalogue of noises (Brownian, fractional, Gauss‑Volterra, Rosenblatt, Poisson, and regime‑switching) that capture the long‑range dependence, non‑Gaussianity, and abrupt structural shifts of the Sahelian conflict.
By working directly with the measure flow and avoiding dynamic programming, we derived the exact characterisation of a piecewise sequential Nash equilibrium via a system of forward‑backward stochastic differential equations with trans‑generational boundary conditions.  We proved the existence and uniqueness of this equilibrium under a mild spectral condition.  We identified a sharp threshold, expressed as a spectral radius condition involving the intergenerational trauma transmission and the efficiency of offshore capital laundering, beyond which the peaceful steady state becomes structurally unstable, guaranteeing an irreversible slide into a multi‑generational war trap.
Based on this diagnosis, we designed an optimal institutional policy that combines a tax on offshore financial flows with a subsidy for endogenous mediation.  We proved that when these two instruments are calibrated to exceed the critical thresholds, the war trap collapses and the system converges globally to a stable peace.  The model thus provides a principled foundation for quantitative calibration using detailed Sahelian conflict data and offers concrete,  justified policy recommendations for breaking the cycle of civil war.

\bibliographystyle{plain}

\end{document}